\algnewcommand{\LineComment}[1]{\Statex\hspace{\algorithmicindent}\(\triangleright\) #1}
\algnewcommand\algorithmicforeach{\textbf{for each}}
\algorithmic\endcsname{\itemsep\z@}{\itemsep=0.25ex}{}{}
\newcounter{usesmallsep}
\the\value{usesmallsep}=1
    \newlength{\myitemsep}
    \newlength{\mytopsep}
    \setlist[itemize]{leftmargin=\parindent,parsep=\parskip,
      listparindent=\parindent,itemsep=\myitemsep,topsep=\myitemsep}
    \setlist[enumerate]{leftmargin=\parindent,parsep=\parskip,
      listparindent=\parindent,itemsep=\myitemsep,,topsep=\myitemsep}
    \setlist[description]{font=\bfseries,leftmargin=\parindent,parsep=\parskip,
      listparindent=\parindent,itemsep=\myitemsep,topsep=\myitemsep}
    \newlength{\mypartitlesep}
    \titlespacing{\paragraph}{0pt}{\mypartitlesep}{\mypartitlesep}
    \newlength{\mythmsep}
    \newtheoremstyle{mythmstyle}
      {\mythmsep} %
      {\mythmsep} %
      {\itshape} %
      {} %
      {\bfseries} %
      {.} %
      {.5em} %
      {} %
    \newtheoremstyle{mydefstyle}
      {\mythmsep} %
      {\mythmsep} %
      {} %
      {} %
      {\bfseries} %
      {.} %
      {.5em} %
      {} %
    \theoremstyle{mythmstyle}
        \newtheorem{theorem}{Theorem}
        \newtheorem{proposition}[theorem]{Proposition}
        \newtheorem{lemma}[theorem]{Lemma}
        \newtheorem{corollary}[theorem]{Corollary}
        \newtheorem{fact}[theorem]{Fact}
        \newtheorem*{fact*}{Fact}
    \theoremstyle{mydefstyle}
        \newtheorem{definition}{Definition}
        \newtheorem{problem}{Problem}
        \newtheorem{assumption}{Assumption}
        \newtheorem{remark}{Remark}
        \newtheorem*{remark*}{Remark}
        \newtheorem{algr}[algorithm]{Algorithm}
    \newenvironment{proof}
        {\vspace{-0.9em}\begin{proof}}
        {\end{proof}\vspace{-0.4em}}
    \theoremstyle{plain}
        \newtheorem{theorem}{Theorem}
        \newtheorem{proposition}[theorem]{Proposition}
        \newtheorem{observation}[theorem]{Observation}
        \newtheorem*{algr*}{Algorithm}
    \theoremstyle{definition}
        \newtheorem{definition}[theorem]{Definition}
        \newtheorem{remark}[theorem]{Remark}
        \newtheorem*{remark*}{Remark}
        \newtheorem*{example*}{Example}
    \setlist[itemize]{leftmargin=\parindent}
    \setlist[enumerate]{leftmargin=\parindent}
    \setlist[description]{font=\bfseries,leftmargin=\parindent}
\newcommand{\Hm}{\mathsf{H}}
\newcommand{\Real}{\mathbb{R}}
\newcommand{\Pers}{\mathsf{Pers}}
\renewcommand{\bar}[1]{\overline{#1}}
\newcommand{\lbarrowspace}{\;}
\newcommand{\lrarrowsp}[1]{\xleftrightarrow{\lbarrowspace#1\lbarrowspace}}
\let\leftrightarrowsp\lrarrowsp
\newcommand{\incto}{\hookrightarrow}
\newcommand{\inctosp}[1]{\xhookrightarrow{\lbarrowspace#1\lbarrowspace}}
\newcommand{\bakincto}{\hookleftarrow}
\newcommand{\bakinctosp}[1]{\xhookleftarrow{\lbarrowspace#1\lbarrowspace}}
\newcommand{\given}{\,|\,}
\newcommand{\Set}[1]{\{#1\}}
\newcommand{\add}[1]{\mskip-2.5mu\rotatebox[origin=c]{-45}{${\rightarrow}$}{#1}}
\newcommand{\del}[1]{\mskip-2.5mu\rotatebox[origin=c]{-45}{${\leftarrow}$}{#1}}
\newcommand{\addel}[1]{\mskip-2.5mu\rotatebox[origin=c]{-45}{${\leftrightarrow}$}{#1}}
\let\emptyset\varnothing
\let\union\cup
\newcommand{\Bcal}{\mathcal{B}}
\newcommand{\Dcal}{\mathcal{D}}
\newcommand{\Fcal}{\mathcal{F}}
\newcommand{\Ical}{\mathcal{I}}
\newcommand{\Rcal}{\mathcal{R}}
\newcommand{\Ucal}{\mathcal{U}}
\newcommand{\Zbb}{\mathbb{Z}}
\newcommand{\dG}{\delta}
\newcommand{\DG}{\Delta}
\newcommand{\LG}{\Lambda}
\newcommand{\sG}{\sigma}
\newcommand{\varsG}{{\xi}}
\newcommand{\SG}{\Sigma}
\newcommand{\tG}{\tau}
\newcommand{\thG}{\theta}
\newcommand{\red}{\color{red}}
\newcommand{\gray}{\color{gray}}
\newcommand{\Dim}{p}
\newcommand{\birth}{b}
\newcommand{\death}{d}
\newcommand{\filtcnt}{m}
\newcommand{\simpcnt}{n}
\newcommand{\ccleq}{\sqsubset}
\newcommand{\fsimp}[2]{\sigma_{#2}}
\newcommand{\dpc}{\Dcal}
\newcommand{\pset}{P}
\newcommand{\dpccnt}{s}
\newcommand{\posf}{D}
\newcommand{\Ud}{{\mathcal U}}
\newcommand{\ucplx}{L}
\newcommand{\usimp}{{\xi}}
\newcommand{\extsimp}{\chi}
\newcommand*{\da@rightarrow}{\mathchar"0\hexnumber@\symAMSa 4B }
\newcommand*{\da@leftarrow}{\mathchar"0\hexnumber@\symAMSa 4C }
\newcommand*{\xdashrightarrow}[2][]{%
  \mathrel{%
    \mathpalette{\da@xarrow{#1}{#2}{}\da@rightarrow{\;}{}}{}%
  }%
}
\newcommand{\xdashleftarrow}[2][]{%
  \mathrel{%
    \mathpalette{\da@xarrow{#1}{#2}\da@leftarrow{}{}{\;}}{}%
  }%
}
\newcommand{\xdashleftrightarrow}[2][]{%
  \mathrel{%
    \mathpalette{\da@xarrow{#1}{#2}\da@leftarrow\da@rightarrow{}{}}{}%
  }%
}
\newcommand*{\da@xarrow}[7]{%
  \sbox0{$\ifx#7\scriptstyle\scriptscriptstyle\else\scriptstyle\fi#5#1#6\m@th$}%
  \sbox2{$\ifx#7\scriptstyle\scriptscriptstyle\else\scriptstyle\fi#5#2#6\m@th$}%
  \sbox4{$#7\dabar@\m@th$}%
  \dimen@=\wd0 %
  \ifdim\wd2 >\dimen@
    \dimen@=\wd2 %
  \fi
  \count@=2 %
  \def\da@bars{\dabar@\dabar@}%
  \@whiledim\count@\wd4<\dimen@\do{%
    \advance\count@\@ne
    \expandafter\def\expandafter\da@bars\expandafter{%
      \da@bars
      \dabar@ 
    }%
  }%
  \mathrel{#3}%
  \mathrel{%
    \mathop{\da@bars}\limits
    \ifx\\#1\\%
    \else
      _{\copy0}%
    \fi
    \ifx\\#2\\%
    \else
      ^{\copy2}%
    \fi
  }%
  \mathrel{#4}%
  \!\!
}
\newcounter{desccounter}
\newcommand{\cancel}[1]
\let\defemph\emph
\begin{document}

\title{Computing Zigzag Vineyard Efficiently Including Expansions and Contractions\thanks{This research is partially supported by NSF grants CCF 2049010 and CCF 2301360.}}

\author{Tamal K. Dey\thanks{Department of Computer Science, Purdue University. \texttt{tamaldey@purdue.edu}}
\and Tao Hou\thanks{School of Computing, DePaul University. \texttt{thou1@depaul.edu}}
}

\date{}

\maketitle
\thispagestyle{empty}

\begin{abstract}
Vines and vineyard connecting a stack of persistence diagrams
have been introduced in the non-zigzag setting by 
Cohen-Steiner et al.~\cite{cohen2006vines}.
We consider
computing these vines over changing filtrations for zigzag persistence
while incorporating two more operations: expansions and contractions
in addition to the transpositions considered in the non-zigzag setting. 
Although expansions and contractions can be implemented in quadratic time in the non-zigzag case 
by utilizing the linear-time transpositions, 
it is not obvious how they can be carried out under the zigzag framework
with the same complexity. 
While transpositions 
alone can be easily conducted in linear time using the recent {\sc FastZigzag} algorithm~\cite{dey2022fast},
expansions and contractions pose difficulty in 
breaking the barrier of  cubic complexity~\cite{dey2021updating}. 
Our main result is that, 
the half-way constructed
up-down filtration in the {\sc FastZigzag} algorithm indeed can be used to achieve linear 
time complexity 
for transpositions and
quadratic time complexity for expansions and contractions, matching the time complexity of all
corresponding operations in the non-zigzag case.

\cancel{
We observe that eight atomic operations are sufficient for changing
one zigzag filtration to another and provide update
algorithms for each of them. Six of these operations that have
some 
analogues to one or multiple \emph{transpositions} in the non-zigzag case can be executed as efficiently
as their non-zigzag counterparts. This approach
takes advantage of a recently discovered algorithm for
computing zigzag barcodes~\cite{dey2022fast} by converting a zigzag filtration to a
non-zigzag one and then connecting barcodes of the two with a bijection. 
The remaining two atomic operations do not have a strict analogue in the non-zigzag case.
For them, we propose algorithms based on explicit maintenance of representatives (homology cycles) which can be useful in their own rights for applications requiring explicit updates of representatives.
}
\end{abstract}

\newpage
\setcounter{page}{1}

\section{Introduction}
\label{sec:intro}

Computation of the persistence diagram (PD) also called the barcode from a given filtration has turned out to
be a central task in topological data analysis (TDA). Such a filtration usually represents
a nested sequence of sublevel sets of a function. In scenarios where the function changes, the filtration and hence the PD may also change. The authors in~\cite{cohen2006vines} provided
an efficient algorithm for updating the PD over an atomic operation which
\emph{transposes} two consecutive simplex additions in the filtration. 
Using this atomic operation repeatedly, one can connect a series of filtrations  derived from a time-varying function 
and obtain a stack of PD's called the
\emph{vineyard}. 
The authors~\cite{cohen2006vines} show that updating the PD due to the atomic transposition can be computed in $O(m)$ time if $m$ simplices constitute the filtration. 
Recently, zigzag persistence~\cite{carlsson2010zigzag,carlsson2009zigzag-realvalue,dey2022fast,dey2023revisiting,kerber2019barcodes,maria2014zigzag,maria2016computing} 
has drawn considerable attention
in the community of topological data analysis.
In this paper, 
we explore efficient updates of barcodes over such atomic operations 
for \emph{zigzag filtrations}
and include two additional types of operations~\cite{dey2021updating} which are necessary for converting
a zigzag filtration to any other: 
\emph{expansions} that enlarge the filtration 
and \emph{contractions} that shrink the filtration.
Motivating examples are given in
Appendix A of~\cite{dey2021updating}
and Section 3.1 of~\cite{tymochko2020using}
for supporting these operations efficiently.

Although the work~\cite{cohen2006vines} introducing vineyard 
does not consider expanding and contracting a filtration, 
the two operations can be naturally defined in the non-zigzag setting.
An expansion would involve inserting a simplex addition at a certain location
in the non-zigzag filtration and a contraction would be its reverse.
For implementing an expansion, one can
attach a new simplex addition to the end of the current non-zigzag filtration 
and then bring it to the intended position by a series of transpositions~\cite{cohen2006vines}.
For a contraction, one does the reverse. 
The cost of the update 
is then $O(m^2)$
because:
(i)~attaching a simplex addition to the filtration
involves a reduction which takes $O(m^2)$ time;
(ii) bringing it to correct position needs $O(m)$ transpositions each consuming
$O(m)$ time. 
The main finding of this paper is that, although zigzag filtrations 
introduce more complications  to the barcode computation, 
updating  zigzag barcodes can be done
with the \emph{same} complexity  as for the non-zigzag 
case over all the operations.

As recognized in~\cite{dey2021updating}, eight
atomic operations are
necessary for any zigzag filtration
to transform to any other,
including four implementing transpositions~\cite{carlsson2010zigzag,carlsson2009zigzag-realvalue,dey2023revisiting,maria2014zigzag,maria2016computing,oudot2015zigzag}
(henceforth called \emph{switches}) and 
another four implementing expansions~\cite{maria2014zigzag,maria2016computing} and contractions.
An immediate temptation would be to handle the updates
by maintaining a non-zigzag filtration 
obtained from straightening the  zigzag filtration 
using a recently proposed algorithm called {\sc FastZigzag}~\cite{dey2022fast}.
Although we can easily implement the switches
in linear time 
using this conversion~\cite{dey2022fast}
and the algorithm in~\cite{cohen2006vines}, 
implementing the expansions and contractions in quadratic time becomes difficult.
In fact, the authors of the arXiv paper~\cite{dey2021updating} adopt this approach 
but cannot  implement all operations due
to an adjacency change observed for certain operations.

To see the challenge
posed by the adjacency change,
recall that
the conversion to non-zigzag in~\cite{dey2022fast} first goes through an (implicit) conversion to 
an \emph{up-down} filtration~\cite{carlsson2009zigzag-realvalue},
which is then converted to a non-zigzag one. 
Since the first part of the up-down filtration
is the same as the first part of the final non-zigzag filtration, 
the adjacency change can be observed 
from the up-down filtration.
In the constructed up-down filtration,
the first part 
consists of all added simplices 
(with the same order) in the original zigzag filtration $\Fcal$.
Consequently, the first part may contain
multiple additions of the same simplex $\sigma$ without any deletion in between~\cite{dey2022fast}.
These multiple additions of $\sG$ are distinguished by \emph{$\Delta$-cells}~\cite{79945ceb-9a4c-3c90-a4ac-148c36996187,hatcher2002algebraic} 
which are copies of $\sG$ sharing
boundaries \emph{non-trivially} as in Figure~\ref{fig:bubble}.
Also, one needs to consider two different types of expansions and contractions,
where an expansion inserts two inclusion arrows involving the same simplex $\sG$
in a zigzag filtration, and a contraction as a reverse operation removes them
(see Section~\ref{sec:update-oper} for formal definitions).
Since the two inclusion arrows of $\sG$ being inserted or removed
could either point toward or away from each other,
we have \emph{inward} and \emph{outward} expansions
along with the corresponding \emph{inward} and \emph{outward} contractions.

The outward expansions and contractions turn out to be harder
than their inward counterparts.
To see this, consider
an outward expansion 
shown in  Figure~\ref{fig:out-expn-adj-chng},
where simplices added in the original zigzag filtrations
and corresponding $\DG$-cells added in the up-down filtrations
are connected by dotted blue lines.
Moreover, in the figure,
we connect certain $\DG$-cells  (resp.\ simplices)
and their faces
by blue arrows.
Observe that
before the expansion, the $\DG$-cells $\tG_0$, $\tG'_0$
have the $\DG$-cell $\sG_1$ 
in their boundaries.
However,
after the expansion, the boundary cell of $\tG_0$, $\tG'_0$
changes to $\sG_2$,
which is the $\DG$-cell corresponding to the newly inserted
addition of $\sG$ (in red).
Assuming working on non-zigzag filtrations,
such a change would mean that
one $\DG$-cell in
the boundary column of $\tG_0$ or $\tG'_0$
in the `persistence boundary matrix'~\cite{edelsbrunner2000topological}
changes into a \emph{much later} $\DG$-cell. Notice that there could be $O(m)$ such changes.
In~\cite{cohen2006vines}, a transposition swaps only two 
\emph{consecutive} columns and rows in the persistence boundary matrix~\cite{edelsbrunner2000topological} without
any change on the simplices' boundaries. 
The simplicity and locality of the transposition~\cite{cohen2006vines} 
make the linear algorithm possible. 
However, for the outward expansions and contractions,
it is not {at all} clear that the change in adjacency
arising in our case can be handled by a {series} of these transpositions~\cite{cohen2006vines}.
A straightforward approach 
would be to perform the $O(m^2)$ reductions~\cite{edelsbrunner2000topological} 
on each changed column
{in the persistence boundary matrix}. 
But then one may incur a cubic complexity to take care of the possible $O(m)$ such changes. %
Notice that
similar difficulty arises for outward contraction but not for inward expansion or contraction (see  Figure~\ref{fig:in-expn-adj-chng}).

\begin{figure}[!tb]
  \centering
  \includegraphics[width=0.95\linewidth]{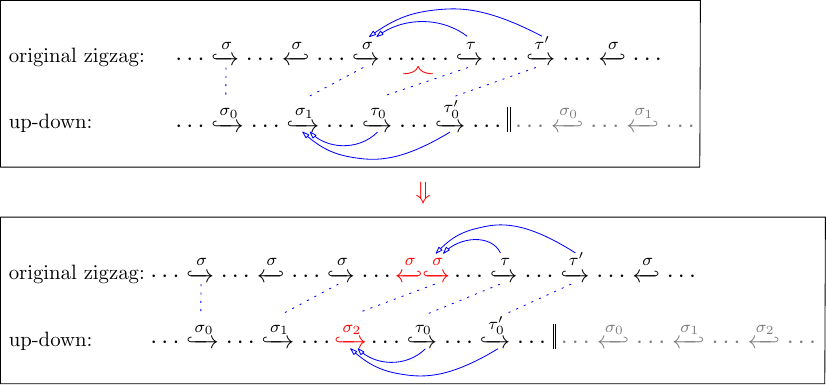}
  \caption{(upper box) $\Delta$-cells $\tau_0$ and $\tau_0'$ corresponding to simplices $\tau$ and $\tau'$  in the original filtration have $\Delta$-cell $\sigma_1$ (corresponding to an addition of $\sG$)
  in their boundaries in the up-down filtration; (lower box) $\Delta$-cells $\tau_0$ and $\tau_0'$ now have $\Delta$-cell $\sigma_2$  (corresponding to the newly inserted $\sigma$) in their boundaries
  once an outward expansion with $\sigma$ takes place.}
  \label{fig:out-expn-adj-chng}
\end{figure}

\begin{figure}[!tb]
  \centering
  \includegraphics[width=0.85\linewidth]{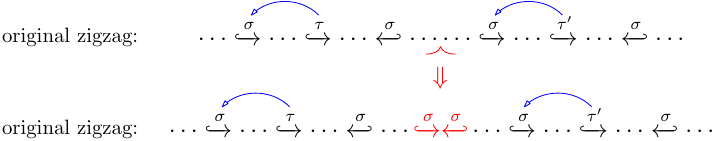}
  \caption{Each copy of simplex added in the original zigzag filtrations 
  has the same copies of simplices
  in its boundary before and after an inward expansion.
  So the boundary $\DG$-cells in the up-down filtrations also
  stay the same.}
  \label{fig:in-expn-adj-chng}
\end{figure}

In this paper, we find a way to tame down the curse of adjacency change
by adopting primarily two ideas. First,
to reduce the effect of the adjacency change  in the outward expansions and contractions,
we choose to work on up-down filtrations~\cite{carlsson2009zigzag-realvalue} instead.
Up-down filtrations eliminate the further complication on the adjacency introduced by the `coning'~\cite{cohen2009extending} of $\Delta$-cells
which is the final step of the conversion in~\cite{dey2022fast}.
Second, we introduce a collapsing strategy for identifying boundaries
which allows
us to nullify the effect of adjacency changes
on the time complexity.
To explain intuitively the strategy, consider an
outward contraction, where the adjacency change 
can be briefly stated as two $\Dim$-cells $\sG_1,\sG_2$ being
identified as the same cell.
Such a change can be
enacted by introducing an extra
$(\Dim+1)$-cell $\extsimp$ whose boundary equals $\sG_1+\sG_2$
and then collapsing $\extsimp$ (see Figure~\ref{fig:collapse}).
Section~\ref{sec:fzz-out-contra}  details the adjacency change and the algorithm.
Another key advantage of working on up-down filtrations 
is that
updates on up-down filtrations 
only need to maintain 
at most three 
chains per bar 
(see Remark~\ref{rmk:constant-bars}),
which is much less than 
the $O(\filtcnt)$ chains needed if working directly on 
the original zigzag filtrations~\cite{dey2021updating}.

\cancel{It turns out that
on the converted up-down filtrations, some updates can still be directly performed with existing algorithms.
For example,
using the algorithm by Maria and Oudot~\cite{maria2014zigzag,maria2016computing},
one can conduct the updates for the four switches
in linear time  on the up-down filtrations.
Moreover, one expansion
operation can also
be done in quadratic time using their algorithms~\cite{maria2014zigzag,maria2016computing}. 
The main challenge then boils down to
carrying out the
remaining  expansions and contractions
in quadratic time.}

\cancel{
{\red To illustrate the challenges,
we first distinguish two different types of expansions and contractions.
An expansion inserts two inclusion arrows involving the same simplex $\sG$
in a zigzag filtration, and a contraction as a reverse operation removes them.
Since the two inclusion arrows of $\sG$ being inserted 
could either point toward or away from each other,
we have \emph{inward} and \emph{outward} expansions
along with the corresponding \emph{inward} and \emph{outward} contractions
(see Figure~\ref{fig:expan-assoc-incs}).
The outward expansions and contractions turn out to be harder
than their inward counterparts.
The difficulty can be explained as follows.
Observe that the same simplex $\sG$ could be repeatedly added and deleted multiple times
in a zigzag filtration. 
Whenever $\sG$ is added in a zigzag filtration,
we always have an associated deletion of $\sG$ in the filtration
which is the nearest one after the current addition of $\sG$.
One property of inward expansion is that the two inserted
arrows are always associated addition and deletion of $\sG$,
while the other arrow associations are preserved
(see Figure~\ref{fig:in-assoc-incs} for an example).
In contrast, in an outward expansion,
the two inserted arrows of $\sG$
breaks an existing association and forces
a re-matching
(see Figure~\ref{fig:out-assoc-incs} for an example).
To illustrate the effect of such a re-matching,
we first recall that 
the up-down filtrations in~\cite{dey2022fast} 
are constructed with {$\DG$-complexes}~\cite{hatcher2002algebraic} 
made up of \emph{$\DG$-cells},
which are copies of simplices added
at different places in the original zigzag filtration
(see~\cite{dey2022fast} for details).
In outward expansion,
the re-matching of associated additions and deletions 
forces the adjacency of certain $\DG$-cells in the corresponding
up-down filtrations to change.
For example, $\DG$-cells corresponding to
$\tau$ and $\tau'$ in Figure~\ref{fig:out-assoc-incs} 
could share a common face
before the expansion
but could share no common faces afterward
(see Section~\ref{sec:fzz-out-contra} and~\ref{sec:fzz-out-expan}  for a detailed
explanation).
However, in inward expansion,
the adjacency of the $\DG$-cells in the 
up-down filtrations are preserved.
Similar effects also make  outward contractions harder than 
inward contractions.}}

Working on up-down filtrations, however, makes inward contraction non-trivial.
One may wonder the following: 
since inward contraction is the reverse
of inward expansion which already has an update algorithm~\cite{maria2014zigzag,maria2016computing},
can we not simply `reverse' this algorithm
to get an algorithm
for inward contraction?
We notice that this is usually not the case.
In fact, we discover an interval mapping 
for inward contraction different from that for inward expansion
presented in~\cite{maria2014zigzag,maria2016computing}, namely,
the persistence pairs 
are alternatively re-linked.
See Section~\ref{sec:fzz-in-contra}.

\section{Preliminaries}

\begin{figure}[t]
  \centering
  \captionsetup[subfigure]{justification=centering}

  \begin{subfigure}[t]{0.08\textwidth}
  \centering
  \includegraphics[width=\linewidth]{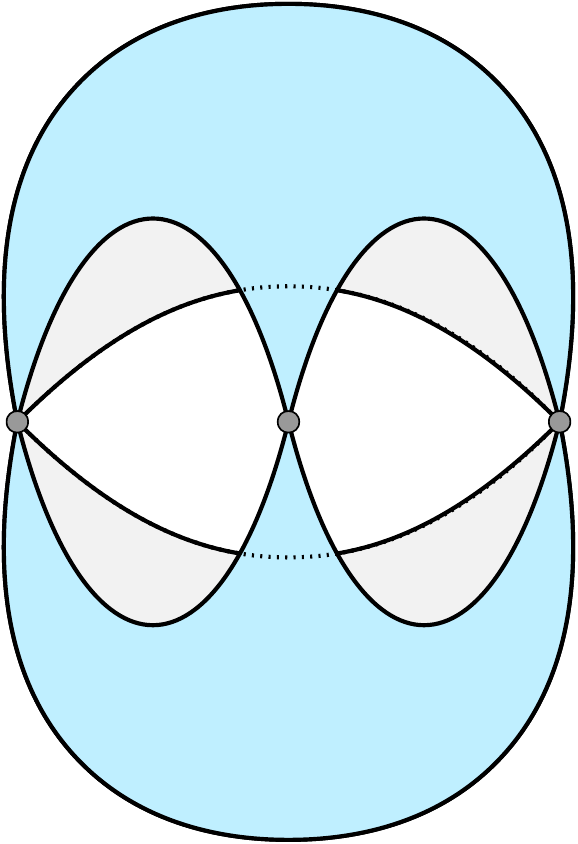}
  \label{fig:two-tri-0}
  \end{subfigure}
  \hspace{4em}
  \begin{subfigure}[t]{0.08\textwidth}
  \centering
  \includegraphics[width=\linewidth]{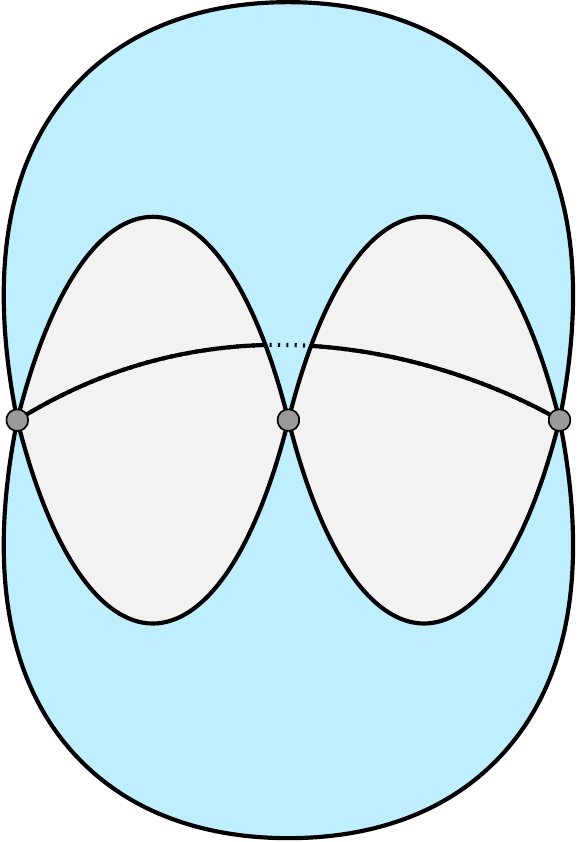}
  \label{fig:two-tri-1}
  \end{subfigure}
  \hspace{4em}
  \begin{subfigure}[t]{0.08\textwidth}
  \centering
  \includegraphics[width=\linewidth]{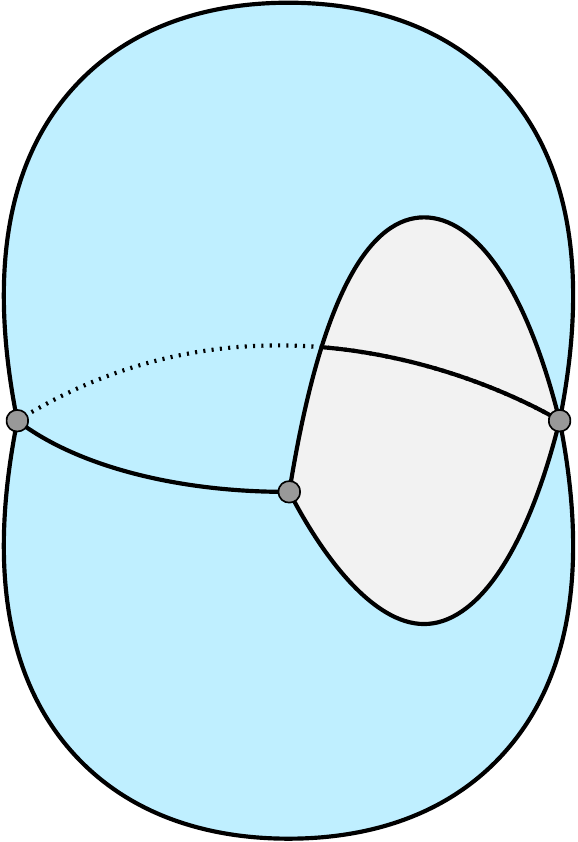}
  \label{fig:two-tri-2}
  \end{subfigure}
  \hspace{4em}
  \begin{subfigure}[t]{0.08\textwidth}
  \centering
  \includegraphics[width=\linewidth]{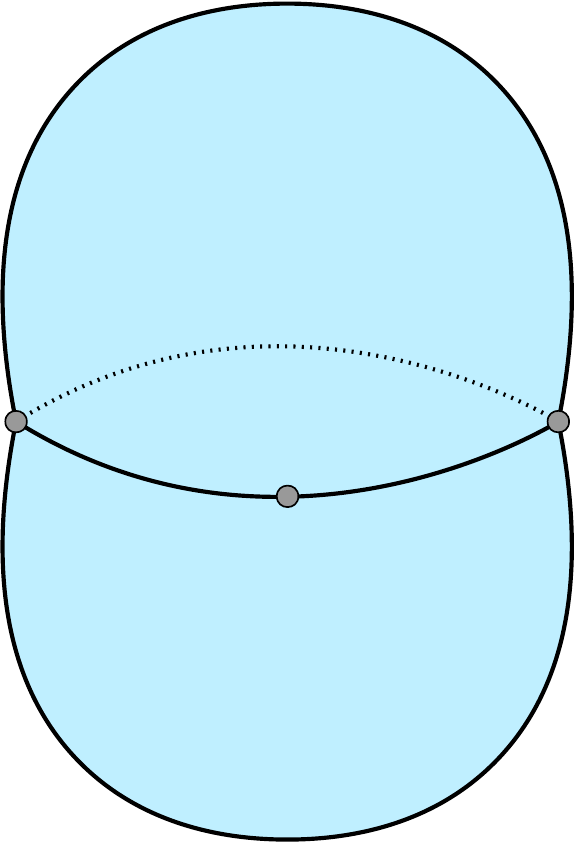}
  \label{fig:two-tri-3}
  \end{subfigure}

  \caption{Examples of $\DG$-complexes with two triangles \emph{having the same set of vertices} sharing 0, 1, 2, or 3 edges on their boundaries~\cite{dey2022fast}.}
  \label{fig:bubble}
\end{figure}

Since we utilize (partially) the conversion in~\cite{dey2022fast},
we work on \emph{$\DG$-complexes}~\cite{79945ceb-9a4c-3c90-a4ac-148c36996187,hatcher2002algebraic} 
instead of simplicial complexes
in this paper.
Building blocks of $\DG$-complexes are
called \emph{cells} or \emph{$\DG$-cells},
which are similar to simplices
but could have common faces in more
relaxed ways (see Figure~\ref{fig:bubble}
and also Definition 1 of \cite{dey2022fast} for a precise definition).
A {\it zigzag filtration} (or simply {\it filtration})
is  a sequence of $\DG$-complexes
\begin{equation}
\label{eqn:prelim-filt}
\Fcal: K_0 \leftrightarrow K_1 \leftrightarrow 
\cdots \leftrightarrow K_\filtcnt,
\end{equation}
in which each
$K_i\leftrightarrow K_{i+1}$ is either a forward inclusion $K_i\incto K_{i+1}$
(an addition of several cells)
or a backward inclusion $K_i\bakincto K_{i+1}$
(a deletion of several cells).
Taking the $\Dim$-th homology $\Hm_\Dim$,
we derive a {\it zigzag module}
\[
\Hm_\Dim(\Fcal): 
\Hm_\Dim(K_0) 
\leftrightarrow
\Hm_\Dim(K_1) 
\leftrightarrow
\cdots 
\leftrightarrow
\Hm_\Dim(K_\filtcnt). \]
In $\Hm_\Dim(\Fcal)$,
each $\Hm_\Dim(K_i)\leftrightarrow \Hm_\Dim(K_{i+1})$
is a linear map induced by inclusion.
In this paper, we take the coefficient $\Zbb_2$ for $\Hm_\Dim$
and hence  chains or cycles can be treated
as sets of cells.
The zigzag module $\Hm_\Dim(\Fcal)$
has a decomposition~\cite{carlsson2010zigzag,Gabriel72} of the form
$\Hm_\Dim(\Fcal)\simeq\bigoplus_{k\in\LG}\Ical^{[\birth_k,\death_k]}$,
in which each $\Ical^{[\birth_k,\death_k]}$
is an
{\it interval module} over the interval $[\birth_k,\death_k]\subseteq\Set{0,1,\ldots,\filtcnt}$.
The (multi-)set of intervals
$\Pers_\Dim(\Fcal):=\Set{[\birth_k,\death_k]\given k\in\LG}$
is an invariant of $\Hm_\Dim(\Fcal)$
and is called the $\Dim$-th {\it zigzag barcode} (or simply {\it barcode}) of $\Fcal$.
Each interval in $\Pers_\Dim(\Fcal)$
is called a $\Dim$-th \emph{persistence interval}.
We usually consider the homology $\Hm_*$ in all dimensions
and take the zigzag module $\Hm_*(\Fcal)$,
for which we have $\Pers_*(\Fcal)=\bigsqcup_{\Dim\geq 0}\Pers_\Dim(\Fcal)$.
In this paper,
sometimes a filtration may have nonconsecutive indices
on the complexes (i.e., some indices are skipped);
notice that the barcode is still well-defined.

An inclusion in a filtration is called \emph{cell-wise}
if it is an addition or deletion of a single cell $\sG$,
which we sometimes denote as, e.g., $K_i\leftrightarrowsp{\sG}K_{i+1}$.
A filtration is called \emph{cell-wise} if it contains only cell-wise inclusions.
For computational purposes, 
filtrations in this paper are by default
cell-wise filtrations
starting and ending with \emph{empty} complexes
(as adopted in~\cite{dey2022fast,maria2014zigzag});
notice that any filtration can be converted into this  standard form
by expanding the inclusions and attaching complexes to both ends.

\subsection{Update operations}
\label{sec:update-oper}

We now present all the update operations.
In this subsection, $\Fcal$ and $\Fcal'$  are both 
simplex-wise
zigzag filtrations consisting of simplicial complexes
which start and end with empty complexes.

\medskip
\noindent
\textit{{Forward switch}}~\cite{maria2014zigzag}
swaps two forward inclusions and
requires $\sG\nsubseteq\tG$:

\vspace{-0.5em}
\begin{equation}
\label{eqn:fwd-sw}
\begin{tikzpicture}[baseline=(current  bounding  box.center)]
\tikzstyle{every node}=[minimum width=24em]
\node (a) at (0,0) {$\Fcal:K_0 \leftrightarrow\cdots\leftrightarrow K_{i-1}\inctosp{\sG}K_i\inctosp{\tG}K_{i+1}\leftrightarrow\cdots\leftrightarrow K_\filtcnt$}; 
\node (b) at (0,-0.6){$\Fcal':K_0\leftrightarrow\cdots\leftrightarrow K_{i-1}\inctosp{\tG} K'_i\inctosp{\sG} K_{i+1}\leftrightarrow\cdots\leftrightarrow K_\filtcnt$};
\path[->] (a.0) edge [bend left=90,looseness=1.5,arrows={-latex},dashed] (b.0);
\end{tikzpicture}
\end{equation}

\noindent
Notice that if $\sG\subseteq\tG$, then adding $\tG$ to $K_{i-1}$ in $\Fcal'$
does not produce a simplicial complex.

\medskip
\noindent
\textit{{Backward switch}}
swaps two backward inclusions and
requires $\tG\nsubseteq\sG$:

\vspace{-0.5em}
\begin{equation}
\label{eqn:bak-sw}
\begin{tikzpicture}[baseline=(current  bounding  box.center)]
\tikzstyle{every node}=[minimum width=24em]
\node (a) at (0,0) {$\Fcal: K_0 \leftrightarrow
\cdots
\leftrightarrow 
K_{i-1}\bakinctosp{\sG} 
K_i 
\bakinctosp{\tG} K_{i+1}
\leftrightarrow
\cdots \leftrightarrow K_\filtcnt$}; 
\node (b) at (0,-0.6){$\Fcal': K_0 \leftrightarrow
\cdots
\leftrightarrow 
K_{i-1}\bakinctosp{\tG} 
K'_i 
\bakinctosp{\sG} K_{i+1}
\leftrightarrow
\cdots \leftrightarrow K_\filtcnt$};
\path[->] (a.0) edge [bend left=90,looseness=1.5,arrows={-latex},dashed] (b.0);
\end{tikzpicture}
\end{equation}

\noindent
\textit{{Outward/inward switch}}~\cite{carlsson2010zigzag}
swaps two inclusions of opposite directions
and requires $\sG\neq\tG$:

\vspace{-0.5em}
\begin{equation}
\label{eqn:out-in-sw}
\begin{tikzpicture}[baseline=(current  bounding  box.center)]
\tikzstyle{every node}=[minimum width=24em]
\node (a) at (0,0) {$\Fcal: K_0 \leftrightarrow
\cdots
\leftrightarrow 
K_{i-1}\inctosp{\sG} 
K_i 
\bakinctosp{\tG} K_{i+1}
\leftrightarrow
\cdots \leftrightarrow K_\filtcnt$}; 
\node (b) at (0,-0.6){$\Fcal': K_0 \leftrightarrow
\cdots
\leftrightarrow 
K_{i-1}\bakinctosp{\tG} 
K'_i 
\inctosp{\sG} K_{i+1}
\leftrightarrow
\cdots \leftrightarrow K_\filtcnt$};
\path[->] (a.0) edge [bend left=90,looseness=1.5,arrows={latex-latex},dashed] (b.0);
\end{tikzpicture}
\end{equation}

\noindent
The switch from $\Fcal$ to $\Fcal'$ is 
an {{outward}} 
switch
and the switch from $\Fcal'$ to $\Fcal$ is  an {{inward}} switch.
Notice that if $\sG=\tG$, then e.g., for outward switch, we cannot delete $\tG$ from $K_{i-1}$ in $\Fcal'$
because $\tG\not\in K_{i-1}$.

\medskip
\noindent
\textit{{Inward contraction/expansion}}~\cite{maria2014zigzag}
is as follows:

\vspace{-0.5em}
\begin{equation}
\label{eqn:in-contrac-expan}
\begin{tikzpicture}[baseline=(current  bounding  box.center)]
\node (a) at (0,0) {$\Fcal: K_0 \leftrightarrow
\cdots\leftrightarrow K_{i-2}
\leftrightarrow 
K_{i-1}\inctosp{\sG} 
K_i 
\bakinctosp{\sG} K_{i+1}
\leftrightarrow K_{i+2}\leftrightarrow
\cdots \leftrightarrow K_\filtcnt$}; 
\node (b) at (0,-0.9){$\Fcal': K_0 \leftrightarrow
\cdots\leftrightarrow K_{i-2}
\leftrightarrow 
K'_{i}
\leftrightarrow K_{i+2}\leftrightarrow
\cdots \leftrightarrow K_\filtcnt$};
\draw[->,arrows={latex-latex},dashed] (a.0) .. controls (+7,-0.1) and (+6.5,-0.8) .. (b.0);
\end{tikzpicture}
\end{equation}
\noindent
From $\Fcal$ to $\Fcal'$ we have an inward contraction 
and from $\Fcal'$ to $\Fcal$ we have an inward expansion.
To clearly show 
the correspondence of complexes in $\Fcal$ and $\Fcal'$, 
indices for $\Fcal'$ are made non-consecutive
in which $i-1$ and $i+1$ are skipped.
We also have $K_{i-1}=K'_{i}=K_{i+1}$.

\medskip
\noindent
\textit{{Outward contraction/expansion}}
is similar to the inward version with the difference
that the two center arrows now pointing away from each other:

\vspace{-0.5em}
\begin{equation}
\label{eqn:out-contrac-expan}
\begin{tikzpicture}[baseline=(current  bounding  box.center)]
\node (a) at (0,0) {$\Fcal: K_0 \leftrightarrow
\cdots\leftrightarrow K_{i-2}
\leftrightarrow 
K_{i-1}\bakinctosp{\sG} 
K_i 
\inctosp{\sG} K_{i+1}
\leftrightarrow K_{i+2}\leftrightarrow
\cdots \leftrightarrow K_\filtcnt$}; 
\node (b) at (0,-0.9){$\Fcal': K_0 \leftrightarrow
\cdots\leftrightarrow K_{i-2}
\leftrightarrow 
K'_{i}
\leftrightarrow K_{i+2}\leftrightarrow
\cdots \leftrightarrow K_\filtcnt$};
\draw[->,arrows={latex-latex},dashed] (a.0) .. controls (+7,-0.1) and (+6.5,-0.8) .. (b.0);
\end{tikzpicture}
\end{equation}

\noindent Notice that we also have $K_{i-1}=K'_{i}=K_{i+1}$.

\cancel{
\paragraph{Universality of the operations.}
We present the following fact:

\begin{proposition}\label{prop:universality}
Let $\Fcal_1,\Fcal_2$ be any two simplex-wise zigzag filtrations
starting and ending with empty complexes.
Then $\Fcal_1$ can be transformed into $\Fcal_2$ by
a sequence of the update operations listed above.
\end{proposition}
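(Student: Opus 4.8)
The plan is to transform each of $\Fcal_1$ and $\Fcal_2$ into one and the same canonical filtration. Since every operation above has its inverse again among the listed operations (a forward or backward switch is involutive under its stated side condition, outward and inward switches in \eqref{eqn:out-in-sw} are mutually inverse, and each expansion in \eqref{eqn:in-contrac-expan}, \eqref{eqn:out-contrac-expan} is inverse to the corresponding contraction), ``$\Fcal$ is transformable into $\Fcal'$'' is an equivalence relation, so producing such a common target suffices. For a filtration $\Fcal$ as in \eqref{eqn:prelim-filt}, write $K(\Fcal)=\bigcup_i K_i$ for its \emph{total complex}, a simplicial complex; throughout I regard all filtrations as built from simplices of one fixed ground set.

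\textbf{Step 1: reduction to up-down form.} I first transform $\Fcal$, using only inward switches and outward contractions, into an \emph{up-down filtration} $\emptyset=K_0\subsetneq\cdots\subsetneq K_a\supsetneq\cdots\supsetneq K_\filtcnt=\emptyset$; here necessarily $K_a=K(\Fcal)$. Encode $\Fcal$ as a word in $\{+,-\}$ recording whether each arrow is an addition or a deletion; an up-down filtration is exactly one whose word has the form $+^a-^b$. If the word is not of this shape, it contains an adjacent substring $-+$ at some \emph{internal} position $i$ (this cannot occur at either end, where the arrows are forced to be $+$ and $-$), i.e.\ a local valley $K_{i-1}\bakinctosp{\tG}K_i\inctosp{\sG}K_{i+1}$. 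If $\sG\neq\tG$, an inward switch replaces it by the peak $K_{i-1}\inctosp{\sG}K_i'\bakinctosp{\tG}K_{i+1}$; this is legal because in a valley $\sG$ and $\tG$ are necessarily incomparable (e.g.\ $\sG\subsetneq\tG$ would force $\sG\in K_i$, contradicting the arrow $K_i\inctosp{\sG}K_{i+1}$). If $\sG=\tG$, then $K_{i+1}=K_{i-1}$ and an outward contraction removes the valley. Either move strictly decreases the number of pairs consisting of a $-$ preceding a $+$, so after finitely many steps the word is $+^a-^b$; no new simplex is ever introduced, so $K(\Fcal)$ stays the same.

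\textbf{Steps 2 and 3: a common up-down filtration.} Set $K:=K(\Fcal_1)\cup K(\Fcal_2)$, a simplicial complex. Starting from the up-down filtration obtained for $\Fcal_1$, enumerate the simplices of $K\setminus K(\Fcal_1)$ in nondecreasing order of dimension; for each such $\sG$ in turn, apply an inward expansion at the current maximal complex $T$, replacing $T$ by the peak $T\inctosp{\sG}(T\cup\{\sG\})\bakinctosp{\sG}T$. This is legal because every proper face of $\sG$ is lower-dimensional, hence already lies in $T$ (being in $K(\Fcal_1)$ or inserted in an earlier round), and it keeps the filtration up-down while enlarging its top by $\sG$; after all rounds one has an up-down filtration on $K$. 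Do likewise for $\Fcal_2$. Now $\Fcal_1$ and $\Fcal_2$ have been turned into up-down filtrations on the \emph{same} complex $K$, each specified by an ordering of its $|K|$ additions (a linear extension of the face poset of $K$) together with an ordering of its $|K|$ deletions (a linear extension of the opposite poset). Any two linear extensions of a finite poset are connected by successive transpositions of adjacent incomparable elements; such a transposition among the additions is a forward switch and among the deletions a backward switch (incomparability giving the required non-containment), and neither disturbs the up-down shape. Hence both up-down filtrations can be driven to one fixed canonical up-down filtration on $K$ — say, additions in increasing $(\text{dimension},\text{label})$ order and deletions in the reverse order — which exhibits a transformation from $\Fcal_1$ to $\Fcal_2$.

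I expect the one point needing care to be \emph{legality}: at each step one must confirm that the intermediate complexes are genuinely simplicial and that the non-containment hypotheses of the switches hold. The two facts that make this routine are that two simplices appearing as consecutive additions (or as consecutive deletions) of a simplex-wise filtration must be incomparable, and that a simplex deleted by a backward inclusion has no coface in the resulting complex; granting these, every switch invoked above is admissible and every intermediate complex legal. The monovariant in Step 1 guarantees termination, and the paper's allowance of nonconsecutive indices lets the contractions and expansions lengthen or shorten the filtration without disturbing its empty endpoints.
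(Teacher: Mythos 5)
Your proof is correct, but it takes a genuinely different route from the paper. The paper's own argument is a two-line reduction: it cites Maria–Oudot for the fact that the empty filtration can be turned into any simplex-wise zigzag filtration by backward switches and inward expansions, reverses this to reach the empty filtration from $\Fcal_1$, and concatenates — so the common normal form is the \emph{empty} filtration. You instead normalize each $\Fcal_j$ to an up-down filtration (inward switches kill distinct-simplex valleys, outward contractions kill equal-simplex valleys, with a monovariant for termination), then use inward expansions at the peak to enlarge both to up-down filtrations on the shared total complex $K = K(\Fcal_1)\cup K(\Fcal_2)$, and finally identify an up-down filtration on $K$ with a pair of linear extensions of the face poset (additions) and its opposite (deletions), invoking connectivity of linear extensions under adjacent-incomparable transpositions, which are exactly forward and backward switches. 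Your version is self-contained — it does not lean on the external reference — and along the way it makes explicit the small legality facts (valley simplices are incomparable; forward/backward switches are involutive because the participating simplices are incomparable) that the paper leaves to the citation. The price is length: the paper buys a much shorter proof at the cost of an external appeal, while you reprove the needed reachability from scratch. Both are valid; yours also has the pleasant byproduct of exhibiting the canonical up-down filtration on $K$ as a concrete normal form, which matches the paper's broader use of up-down filtrations as the computational workhorse.
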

\begin{proof}
Based on~\cite{maria2014zigzag},
an empty filtration can always be transformed into $\Fcal_2$ 
by  backward switches and inward expansions.
Reversely, $\Fcal_1$ can be transformed into 
an empty filtration by backward switches and inward contractions.
We then have
the transformation from $\Fcal_1$ to $\Fcal_2$.
\end{proof}
}

\section{Up-down conversion and easier updates}
In this section,
we  first briefly overview the conversion to up-down filtrations in~\cite{dey2022fast}
 and provide some definitions for up-down filtrations.
We then describe how the conversion facilitates the updates
for certain operations based on existing algorithms~\cite{dey2023revisiting,maria2014zigzag}.
For operations whose updates are more involved,
we provide the algorithms in Section~\ref{sec:fzz-out-contra}--\ref{sec:fzz-in-contra}.

\subsection{Conversion to up-down filtrations}
\label{sec:fzz-up}

Given a \emph{simplex}-wise zigzag filtration 
\begin{equation}
\label{eqn:fzz-up-sec-filt}
\Fcal:
\emptyset=
K_0\leftrightarrowsp{\fsimp{}{0}} K_1\leftrightarrowsp{\fsimp{}{1}}
\cdots 
\leftrightarrowsp{\fsimp{}{\filtcnt-1}} K_\filtcnt
=\emptyset
\end{equation}
consisting of simplicial complexes,
we convert $\Fcal$ into the following \textit{cell}-wise 
\emph{up-down} filtration
consisting of $\DG$-complexes~\cite{dey2022fast}:
\begin{equation*}
\Ud:\emptyset=\ucplx_0\inctosp{\usimp_0}
\ucplx_1\inctosp{\usimp_{1}}
\cdots\inctosp{\usimp_{\simpcnt-1}} 
\ucplx_{\simpcnt}
\bakinctosp{\usimp_{\simpcnt}}
\ucplx_{\simpcnt+1}
\bakinctosp{\usimp_{\simpcnt+1}}
\cdots
\bakinctosp{\usimp_{\filtcnt-1}} \ucplx_{\filtcnt}=\emptyset.
\end{equation*}
Notice that an up-down filtration is a special type of 
zigzag filtration where the first half consists of only additions
and the second half consists of only deletions~\cite{carlsson2009zigzag-realvalue}.
In $\Ucal$,
cells $\usimp_0,\usimp_1,\ldots,\usimp_{\simpcnt-1}$ 
are 
distinct copies of simplices added in $\Fcal$ 
with the addition order preserved
(notice that the same simplex could be 
added  more than once in $\Fcal$).
Symmetrically,
cells 
$\usimp_{\simpcnt},\usimp_{\simpcnt+1},\ldots,\usimp_{\filtcnt-1}$ 
are 
distinct
copies of simplices deleted in $\Fcal$,
with the order also preserved.
To build $\Ucal$ 
from ${\Fcal}$, 
one only needs to 
list all the additions first and then the deletions in ${\Fcal}$,
following their orders in ${\Fcal}$,
with each addition of a simplex in $\Fcal$ corresponding to a distinct $\DG$-cell in $\Ucal$~\cite{dey2022fast}.
We  have that the conversion from $\Fcal$ to $\Ucal$
can be done in $O(\filtcnt)$ time~\cite{dey2022fast}.
We also have $\filtcnt=2\simpcnt$
because each added simplex
must be eventually deleted in ${\Fcal}$.
Figure~\ref{fig:fzz-out-contrac}
provides an example of the conversion, where an edge 
is added twice in $\Fcal$ with $\sG_1$ corresponding to its first addition
and $\sG_2$ corresponding to its second addition.
In the up-down filtration $\Ucal$,
$\sG_1$ and $\sG_2$ appear in the same complex forming parallel edges 
(also called multi-edges).

We then briefly describe how the intervals in $\Pers_*(\Fcal)$
and $\Pers_*(\Ucal)$ correspond.
First notice that each interval in $\Pers_*(\Fcal)$
or $\Pers_*(\Ucal)$ can be considered as `generated' by a pair
of simplex-wise or cell-wise inclusions
(i.e., simplex/cell additions or deletions).
For example, an interval $[b,d]\in\Pers_*(\Fcal)$ is generated
by 
the  inclusion $K_{b-1}\leftrightarrow K_{b}$ on the left and 
the  inclusion $K_{d}\leftrightarrow K_{d+1}$ on the right.
As indicated,
there is a natural bijection $\phi$
from the cell-wise inclusions in $\Ud$ to the simplex-wise inclusions in $\Fcal$.
For ease of presentation,
we assign each simplex-wise inclusion
$K_i\leftrightarrowsp{\fsimp{}{i}}K_{i+1}$ in $\Fcal$
the unique  \emph{index} $i$.
We also similarly index cell-wise inclusions in $\Ucal$.
We then let
the domain and codomain of $\phi$
be the sets of {indices} for the simplex-wise or cell-wise
inclusions.
We  have the following fact from~\cite{dey2022fast}:
\begin{theorem}\label{thm:fzz-intv-map}
There is a bijection from  $\Pers_*(\Ucal)$ to $\Pers_*(\Fcal)$
s.t.\ corresponding intervals 
are generated by corresponding pairs of inclusions.
Specifically, suppose that an interval in $\Pers_*(\Ucal)$
is generated by two cell-wise inclusions indexed at $l$
and $r$  ($l< r$).
Then its corresponding interval in $\Pers_*(\Fcal)$
is generated by  two
simplex-wise inclusions indexed at $\phi(l)$
and $\phi(r)$
(notice that  $\phi(r)$ may be less than $\phi(l)$).
Therefore,
in order to map an interval in $\Pers_*(\Ucal)$
back to the interval in $\Pers_*(\Fcal)$,
one only needs to look up the map $\phi$ which takes constant time.
\end{theorem}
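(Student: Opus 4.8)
The plan is to reduce the statement to the correspondence that is already implicit in the construction of $\Ucal$ from $\Fcal$, and then to trace how a persistence interval's two generating inclusions behave under that correspondence. First I would recall that, as described in Section~\ref{sec:fzz-up}, the conversion from $\Fcal$ to $\Ucal$ proceeds in two stages: an (implicit) construction of the up-down filtration $\Ucal$ followed by its straightening to a non-zigzag filtration, and that the interval bijection between $\Pers_*(\Fcal)$ and the barcode of the straightened filtration is already established in~\cite{dey2022fast}. So the real content here is the first stage: the bijection between $\Pers_*(\Ucal)$ and $\Pers_*(\Fcal)$. I would set up the map $\phi$ on indices explicitly --- it sends the index of each cell-wise addition in the up half of $\Ucal$ to the index of the corresponding simplex addition in $\Fcal$, and the index of each cell-wise deletion in the down half of $\Ucal$ to the index of the corresponding simplex deletion in $\Fcal$ --- and observe that by construction $\phi$ is a bijection on the index sets, computable (and invertible) in constant time via a stored table.

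Next I would invoke the diamond/zigzag-module machinery of~\cite{dey2022fast}: the passage from $\Ucal$ to $\Fcal$ can be realized by a sequence of elementary moves (the ``outward'' moves that merge a matched addition--deletion pair of the same simplex), each of which induces an isomorphism of zigzag modules in the appropriate sense, and each of which carries intervals to intervals while only relabelling their endpoints according to how the move reindexes the filtration. Composing these moves yields a single bijection $\Psi : \Pers_*(\Ucal) \to \Pers_*(\Fcal)$. The key invariant to maintain through the composition is that if an interval in the source is generated by the pair of inclusions at indices $(l,r)$, then its image is generated by the pair of inclusions that $\phi$ associates to $l$ and $r$ --- which, after the merging moves have been applied, are precisely the simplex-wise inclusions at indices $\phi(l)$ and $\phi(r)$ in $\Fcal$. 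I would emphasize that the merging moves can send a ``left'' endpoint to an index larger than the image of the ``right'' endpoint (this is exactly why $\phi(r)$ may be less than $\phi(l)$): an outward contraction re-pairs a birth with a death that sits earlier in $\Fcal$, so the roles of left and right are not preserved, only the unordered pair of generating inclusions is. This is the point I expect to require the most care --- keeping the bookkeeping of which inclusion generates which side straight across the re-pairings, and checking that no interval is created or destroyed, only relabelled.

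Finally, the constant-time claim follows immediately once $\phi$ is stored as an array indexed by inclusion index (built in $O(\filtcnt)$ time during the conversion, as already noted in the excerpt): given an interval of $\Pers_*(\Ucal)$ with generating indices $l < r$, its image in $\Pers_*(\Fcal)$ is generated by the inclusions at $\phi(l)$ and $\phi(r)$, each a single array lookup. The main obstacle, as indicated, is not the complexity bookkeeping but verifying the ``corresponding pairs of inclusions'' part of the statement through the re-pairing moves; I would handle this by an induction on the number of merging moves, with the inductive hypothesis being exactly the endpoint-tracking invariant stated above, and appealing to~\cite{dey2022fast} for the base isomorphism and for the fact that each individual move behaves as claimed on intervals.
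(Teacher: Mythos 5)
The paper does not actually prove Theorem~\ref{thm:fzz-intv-map} --- it simply imports it as ``the following fact from~\cite{dey2022fast}''. So there is no argument in this paper for you to match; what you have done is attempt to reconstruct the proof that lives in~\cite{dey2022fast}. Your overall plan (reduce to the FastZigzag construction, set up $\phi$ on index sets, then trace intervals through elementary moves while tracking generating inclusions) is the right shape, and you correctly isolate the core content as the bijection $\Pers_*(\Ucal)\to\Pers_*(\Fcal)$ rather than anything involving the second (non-zigzag) stage of the conversion --- your opening paragraph about ``straightening to a non-zigzag filtration'' is a red herring for this theorem and could be dropped entirely.

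The genuine problem is in how you describe the elementary moves. You say the passage from $\Ucal$ to $\Fcal$ is realized by ``outward moves that merge a matched addition--deletion pair of the same simplex,'' which is the description of an \emph{outward contraction} (Equation~(\ref{eqn:out-contrac-expan})). That is not what the conversion does, and using it would change the length of the filtration --- but $\Ucal$ and $\Fcal$ have the \emph{same} number of cell-wise/simplex-wise inclusions ($\filtcnt$ in both), so no merging ever happens. The conversion actually factors through the intermediate filtration $\hat{\Fcal}$ of $\DG$-complexes: $\hat{\Fcal}$ is cell-wise \emph{isomorphic} to $\Fcal$ (each repeated simplex just renamed as a fresh $\DG$-cell), and $\Ucal$ is obtained from $\hat{\Fcal}$ purely by a sequence of \emph{outward/inward switches} (Equation~(\ref{eqn:out-in-sw})), each of which reorders two adjacent opposite-direction inclusions without adding or removing any cell. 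Each such switch is precisely a Mayer--Vietoris Diamond~\cite{carlsson2010zigzag,carlsson2009zigzag-realvalue}, and it is the Diamond Principle --- not contraction/expansion --- that gives the interval bijection at each step, including the possibility that the left and right generating inclusions trade order (which is why $\phi(r)<\phi(l)$ can occur, and also why the homological dimension of the corresponding interval can drop by one, something your write-up does not mention but which the cancelled expanded statement in the paper's source tracks explicitly). Replacing ``merging moves'' with ``outward/inward switches, each an instance of the Mayer--Vietoris Diamond'' and then running your induction on the number of switches --- with the inductive invariant being the endpoint-tracking you already describe --- gives a correct reconstruction of the argument in~\cite{dey2022fast}.
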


\cancel{\gray
We then have the following fact from~\cite{dey2022fast}:
\begin{theorem}\label{thm:fzz-intv-map}
Based on the conversion from $\Fcal$ to $\Ucal$,
there is  a natural bijection $\phi$
from the additions and deletions in $\Ud$ to the additions and deletions in $\Fcal$.
Furthermore, 
there is  a bijection from
$\Pers_*(\Ucal)$ to $\Pers_*(\Fcal)$ s.t.\ given the bijection $\phi$, 
each interval in $\Pers_*(\Ucal)$ 
can be converted to its corresponding interval in $\Pers_*(\Fcal)$ 
in $O(1)$ time.
\end{theorem}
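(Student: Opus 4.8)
The plan is to realize $\Ucal$ as the output of a finite sequence of local, barcode-controlled moves applied to $\Fcal$, and to read the bijection off that sequence; this is the content established in~\cite{dey2022fast}, so what follows is a reconstruction of that argument.

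First I would pin down the index bijection $\phi$ concretely. Since $\Ucal$ is built by listing, in $\Fcal$'s order, all of $\Fcal$'s additions and then all of $\Fcal$'s deletions, one distinct $\DG$-cell per event, the $i$-th inclusion of $\Fcal$ is matched to the unique inclusion of $\Ucal$ performing the corresponding event; the $O(\filtcnt)$-time conversion records this matching in an array, so $\phi$ and $\phi^{-1}$ are $O(1)$ look-ups. As a preliminary reduction I would replace $\Fcal$ by the isomorphic zigzag diagram $\widetilde{\Fcal}$ in which each addition event creates a \emph{fresh} $\DG$-cell copy of the simplex involved: whenever a simplex is re-added in $\Fcal$ it is added to a complex not currently containing it, so using a fresh parallel copy in place of the old name gives isomorphic $\DG$-complexes joined by the same structure maps. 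Hence $\Hm_*(\widetilde{\Fcal})\simeq\Hm_*(\Fcal)$ with generating inclusions matched index-for-index, and in $\widetilde{\Fcal}$ no $\DG$-cell is ever added twice.

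Next I would transform $\widetilde{\Fcal}$ into $\Ucal$ by repeatedly resolving every pattern $K_{i-1}\bakinctosp{\sG}K_i\inctosp{\tG}K_{i+1}$ (a deletion immediately followed by an addition). Because $\widetilde{\Fcal}$ never re-adds a $\DG$-cell we have $\sG\neq\tG$, so $K_i=K_{i-1}\cap K_{i+1}$ and $K_{i-1},K_{i+1}$ glue into a Mayer--Vietoris diamond with apex $K_i'=K_{i-1}\cup K_{i+1}$; I replace the pattern by $K_{i-1}\inctosp{\tG}K_i'\bakinctosp{\sG}K_{i+1}$. Each move pushes one addition earlier and one deletion later, so finitely many of them (a stable bubble sort on inclusion directions) reach the up-down shape, and a direct check shows the result is exactly $\Ucal$ (additions in $\Fcal$'s order, then deletions in $\Fcal$'s order, carrying the fresh $\DG$-cell labels). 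For each elementary move I would invoke the Mayer--Vietoris Diamond Principle~\cite{carlsson2009zigzag-realvalue,carlsson2010zigzag}: the barcodes before and after are in bijection, every interval away from the move site is carried over verbatim, and the at most two intervals meeting the site are re-paired by the Principle's explicit rule; in every case the pair of inclusions generating an interval is sent to the pair generating its image, since a move only swaps which of two adjacent events comes first and relabels two indices, never changing which $\DG$-cell an event adds or deletes. Composing these bijections along the whole sequence, together with the initial isomorphism $\Fcal\simeq\widetilde{\Fcal}$, yields a bijection $\Pers_*(\Ucal)\to\Pers_*(\Fcal)$ taking the interval generated by inclusions indexed $l<r$ in $\Ucal$ to the one generated by inclusions indexed $\phi(l),\phi(r)$ in $\Fcal$; since $\Fcal$'s additions have migrated to the front during the sort, $\phi(r)<\phi(l)$ is possible, and reading the generators of an interval and applying $\phi$ costs $O(1)$.

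The hard part is the step hidden in the second paragraph: in $\Fcal$ a simplex may be deleted and later re-added, and precisely when that deletion is swapped past that re-addition the two events involve the \emph{same} simplex, forcing $K_{i-1}\cup K_{i+1}=K_{i-1}=K_{i+1}$ -- so no simplicial diamond exists and the scheme collapses in the simplicial category. Passing to $\widetilde{\Fcal}$ with fresh copies is exactly what repairs this, at the cost of leaving the simplicial world; so the real work is to verify that the Mayer--Vietoris sequence and the Diamond Principle survive verbatim for $\DG$-complexes (they do, since both use only additivity and excision of the cellular chain complex), and that iterating the fresh-copy substitution reproduces precisely the copy-labelled $\Ucal$ with the generator book-keeping intact. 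Termination of the bubble sort, and the claim that the accumulated index relabelings compose to the stored array $\phi$, are then routine.
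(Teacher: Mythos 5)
The paper does not prove Theorem~\ref{thm:fzz-intv-map} itself; it is quoted as a fact from~\cite{dey2022fast}, and the paper's own surrounding discussion (Section~\ref{sec:fzz-up} and the commented-out earlier draft) describes exactly the route you reconstruct: relabel repeated simplex additions as fresh $\DG$-cells to obtain $\hat{\Fcal}$, then sort into up-down shape via Mayer--Vietoris diamond (outward/inward-switch) moves and compose the resulting interval bijections. Your proposal is a correct reconstruction of that argument, correctly identifying that passing to $\DG$-cells is what makes every swap an exact Mayer--Vietoris square even when the same simplex is deleted and later re-added.
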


\begin{definition}\label{dfn:creator-destroyer}
In $\Fcal$ or $\Ud$, 
let each addition or deletion 
be uniquely identified by its index in the filtration,
e.g., the index of $K_i\leftrightarrowsp{\fsimp{}{i}}K_{i+1}$ in $\Fcal$ is $i$.
Then, 
the \defemph{creator} of an interval $[b,d]\in\Pers_*(\Fcal)\text{ or }\Pers_*(\Ud)$
is an addition/deletion indexed at $b-1$,
and the \defemph{destroyer} of $[b,d]$
is an addition/deletion indexed at $d$.
\end{definition}

As stated previously, 
each $\usimp_i$ in $\Ud$ for $0\leq i<\simpcnt$ corresponds
to an addition in $\Fcal$, and
each $\usimp_i$ for $\simpcnt\leq i<\filtcnt$ corresponds
to a deletion in $\Fcal$.
This naturally defines a bijection $\phi$
from the additions and deletions in $\Ud$ to the additions and deletions in $\Fcal$.
Moreover, for simplicity, we let the domain and codomain of $\phi$
be the sets of indices for the additions and deletions.
The interval mapping in~\cite{dey2022fast} (which uses the \emph{Mayer-Vietoris Diamond}~\cite{carlsson2010zigzag,carlsson2009zigzag-realvalue}) can be summarized
as follows:

\begin{theorem}
\label{thm:fzz-intv-map}
Given $\Pers_*(\Ud)$, one can retrieve $\Pers_*(\Fcal)$ 
using the following bijection
from $\Pers_*(\Ud)$
to $\Pers_*(\Fcal)${\rm:}
an 
interval $[b,d]\in\Pers_p(\Ud)$ 
with a creator indexed at $b-1$ and a destroyer indexed at $d$
is mapped to an interval $I\in \Pers_*(\Fcal)$ with
the same creator and destroyer indexed at 
$\phi(b-1)$ and $\phi(d)$ respectively.
Specifically, 
\begin{itemize}
    \item 
If $\phi(b-1)<\phi(d)$, then $I=[\phi(b-1)+1,\phi(d)]\in \Pers_p(\Fcal)$,
where $\phi(b-1)$ indexes the creator and $\phi(d)$ indexes the destroyer.
\item
Otherwise,
$I=[\phi(d)+1,\phi(b-1)]\in \Pers_{p-1}(\Fcal)$,
where $\phi(d)$ indexes the creator and $\phi(b-1)$ indexes the destroyer.
\end{itemize}
\end{theorem}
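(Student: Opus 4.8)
The plan is to prove Theorem~\ref{thm:fzz-intv-map} by invoking the Mayer--Vietoris diamond principle~\cite{carlsson2010zigzag,carlsson2009zigzag-realvalue} and tracking how the canonical interval bijection it provides behaves under iteration, so that the composite bijection from $\Pers_*(\Ucal)$ to $\Pers_*(\Fcal)$ respects the generating pairs of inclusions. First I would recall the construction from~\cite{dey2022fast}: the up-down filtration $\Ucal$ is obtained from $\Fcal$ by a sequence of \emph{outward switch}-type moves that push each deletion of $\Fcal$ past the subsequent additions, each such move being realized by a Mayer--Vietoris diamond. The core fact about a diamond is that it induces an explicit bijection between the barcodes of the two zigzag modules it connects, and that this bijection sends an interval generated by a pair of inclusions to an interval generated by the \emph{corresponding} pair under the natural relabeling of the four arrows around the diamond; only the two arrows incident to the diamond change roles, while all other arrows retain their generating status. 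Composing these diamond bijections over the whole straightening process yields a bijection $\Pers_*(\Ucal)\to\Pers_*(\Fcal)$, and the key invariant, preserved at every step, is exactly the statement ``corresponding intervals are generated by corresponding pairs of inclusions'' once we identify arrows via the bookkeeping bijection $\phi$.

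Next I would set up $\phi$ cleanly as in the excerpt: each cell-wise inclusion of $\Ucal$ carries an index $l\in\{0,\dots,\filtcnt-1\}$, each simplex-wise inclusion of $\Fcal$ carries an index, and $\phi$ is the map of index sets induced by the correspondence $\usimp_i\mapsto$ (its originating addition in $\Fcal$) for $i<\simpcnt$ and $\usimp_i\mapsto$ (its originating deletion in $\Fcal$) for $i\geq\simpcnt$. This is a bijection by construction since every added simplex of $\Fcal$ is deleted exactly once. The content of the theorem then splits into: (a) existence of a barcode bijection respecting generating inclusions, obtained from the composition of diamonds as above; and (b) the explicit translation: if an interval of $\Pers_*(\Ucal)$ has generating inclusions indexed $l<r$, its image in $\Pers_*(\Fcal)$ has generating inclusions indexed $\phi(l)$ and $\phi(r)$, but since $\phi$ need not be order-preserving, in $\Fcal$ the left endpoint of the image interval is $\min(\phi(l),\phi(r))$ (suitably $+1$-shifted) and the right endpoint is $\max(\phi(l),\phi(r))$; looking this up is $O(1)$ given $\phi$, and $\phi$ itself is precomputed in $O(\filtcnt)$ time during the conversion.

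The main obstacle I anticipate is verifying claim (a) rigorously: one must confirm that the Mayer--Vietoris diamond used in the straightening is of the ``outward-switch'' variety for which the barcode bijection is defined and behaves as claimed, and that iterating it does not create any ambiguity in which pair of inclusions generates a given interval (e.g. when an interval has an endpoint at an arrow that is itself being moved by a later diamond). The cleanest way around this is to note that in $\Ucal$ every interval is generated by one inclusion from the ascending (addition) part and one from the descending (deletion) part — this is forced because $\Ucal$ is up-down and any interval must be ``born'' among the additions and ``die'' among the deletions, or be a full-length interval whose both ends we treat by the empty-complex convention — so the left generator always has index $<\simpcnt$ and the right generator index $\geq\simpcnt$, and these two arrows are never simultaneously the pivot of a single diamond. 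Hence the generating pair of each $\Ucal$-interval is unambiguous, the diamond bijection acts arrow-by-arrow, and the composite tracks through cleanly. The remaining dimension bookkeeping (an outward switch can shift the homological dimension of an interval by one, which is why $\phi(r)<\phi(l)$ signals a dimension drop) is then a routine consequence of the diamond's known action on dimensions and needs only to be recorded, not re-derived.
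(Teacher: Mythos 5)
The paper does not actually prove this theorem; it imports it verbatim from the {\sc FastZigzag} paper~\cite{dey2022fast}, noting only that the construction there is carried out via the Mayer--Vietoris diamond~\cite{carlsson2010zigzag,carlsson2009zigzag-realvalue}. Your reconstruction follows that same route, so the overall strategy is the right one, but two of the specific claims you lean on are wrong and should not survive a careful write-up.

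First, you describe the conversion as ``pushing each deletion of $\Fcal$ past the subsequent additions'' via outward-switch diamonds, starting from $\Fcal$ itself. This skips the essential intermediate step: since a simplex of $\Fcal$ can be deleted and then re-added, pushing its deletion past a later re-addition of the very same simplex is not an outward switch at all (that operation requires $\sigma\neq\tau$), and the resulting complex would not even be simplicial. The conversion in~\cite{dey2022fast} first passes from $\Fcal$ to the $\Delta$-complex filtration $\hat\Fcal$ (each re-addition becomes a fresh $\DG$-cell), which is diagram-isomorphic to $\Fcal$ and hence has the same barcode; the diamond moves then apply cleanly on $\hat\Fcal$. Without that step the arrow-by-arrow straightening you invoke does not exist.

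Second, the claim that ``in $\Ucal$ every interval is generated by one inclusion from the ascending part and one from the descending part'' is false. The ascending half of $\Ucal$ is an ordinary non-zigzag filtration, so closed-open pairs (both generating inclusions among the additions) and open-closed pairs (both among the deletions) do occur---the paper's Definition~\ref{dfn:ud-rep} explicitly enumerates all three types. The conclusion you draw from this false premise (that the generating pair is unambiguous) is true for a different, trivial reason: an interval $[b,d]$ is by definition generated by the inclusions indexed $b-1$ and $d$. What actually needs care here, and what you wave away, is the precise form of the diamond's barcode bijection: at the pivot an endpoint of a bar can jump from one arm of the diamond to the other, and for the case behind the second bullet of the theorem the homological degree drops by one. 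Saying ``only the two arrows incident to the diamond change roles, while all other arrows retain their generating status'' understates this, and the dimension shift when $\phi(d)<\phi(b-1)$ is exactly the cumulative trace of that behaviour and needs to be derived from the diamond isomorphism, not merely ``recorded.''
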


Notice the decrease in the dimension of the mapped interval in $\Pers_{*}(\Fcal)$
when $\phi(d)<\phi(b-1)$
(indicating a swap on the roles of the creator and destroyer).}

\subsection{Representatives for up-down filtrations}
Consider
an up-down filtration $\Ucal$  as built by the conversion in Section~\ref{sec:fzz-up}.
Since each cell in $\Ucal$ is added exactly once,
we can uniquely denote each  addition (resp.\ deletion)
of a cell $\varsG$ in $\Ucal$
as $\add{\varsG}$ (resp.\ $\del{\varsG}$).
We also use $\addel{\varsG}$ to denote
either the addition or deletion of $\varsG$.
As mentioned, each interval $[b,d]\in\Pers_*(\Ucal)$ 
is generated 
by a pair $(\addel{\varsG}, \addel{\varsG'})$,
where the locations of $\addel{\varsG}$ and $\addel{\varsG'}$ are as follows:
\begin{equation*}
\Ucal:
\cdots
\leftrightarrow
L_{b-1}
\lrarrowsp{\varsG}
L_b
\leftrightarrow
\cdots
\leftrightarrow
L_{d}
\lrarrowsp{\varsG'}
L_{d+1}
\leftrightarrow
\cdots
\end{equation*}
In such a pair,
we  call $\addel{\varsG}$ 
\emph{positive}
and $\addel{\varsG'}$ 
\emph{negative}.
From now on, for an $\Ucal$, 
we  interchangeably consider $\Pers_*(\Ucal)$ as all
 pairs of additions and deletions generating the intervals.
We also call a pair from $\Pers_\Dim(\Ucal)$ 
a \emph{$\Dim$-pair}.
\begin{definition}[Creators of chains]
Let $\add{\varsG}$ be an addition  in $\Ucal$.
A chain $A$ is said to be \defemph{created by} 
$\add{\varsG}$ if $\varsG\in A$ and all other cells in $A$
are added before $\varsG$ in $\Ucal$.
Moreover, Let $\del{\varsG}$ be a deletion  in $\Ucal$.
A chain $A$ is said to be \defemph{created by} 
$\del{\varsG}$ if $\varsG\in A$ and all other cells in $A$
are deleted after $\varsG$ in $\Ucal$.
We also say that $\add{\varsG}$ or $\del{\varsG}$ is the \defemph{creator} of $A$.
When the direction of the arrow for $\varsG$ is clear in the context,
we sometimes drop the arrow and simply
say that $A$ is created by $\varsG$, or  $\varsG$ is the creator of $A$.
\end{definition}
\begin{definition}\label{dfn:ud-rep}
We classify
the pairs in $\Pers_*(\Ucal)$
and define their \defemph{representatives} respectively as follows:
\begin{itemize}
    \item A $\Dim$-pair of the form $(\add{\gamma},\add{\eta})$ is called \defemph{closed-open}.
    The representative for $(\add{\gamma},\add{\eta})$
    is a tuple of chains $(z,A)$ with $z=\partial(A)$,
    where $z$ is a $\Dim$-chain 
    created by $\add{\gamma}$
    and $A$ is a $(\Dim+1)$-chain
    created by $\add{\eta}$.

    \item A $\Dim$-pair of the form $(\del{\gamma},\del{\eta})$ is called \defemph{open-closed}.
    The representative for $(\del{\gamma},\del{\eta})$
    is a tuple of chains $(A,z)$  with $z=\partial(A)$,
    where $A$ is a $(\Dim+1)$-chain
    created by $\del{\gamma}$
    and $z$ is a $\Dim$-chain 
    created by $\del{\eta}$.

    \item A $\Dim$-pair of the form $(\add{\gamma},\del{\eta})$ is called \defemph{closed-closed}.
    The representative for $(\add{\gamma},\del{\eta})$
    is a tuple of chains $(z,A,z')$ with $z+z'=\partial(A)$,
    where $z$ is a $\Dim$-chain 
    created by $\add{\gamma}$
    and $z'$ is a $\Dim$-chain
    created by $\del{\eta}$.
    Notice that $A$ can contain any $(\Dim+1)$-cells in $\Ucal$.
\end{itemize}
\end{definition}
\begin{remark}
Representatives defined above
are a special case of general zigzag representatives
defined in~\cite{maria2014zigzag}.
For example, in 
a representative $(z,A,z')$ for a closed-closed pair,
having $z+z'=\partial(A)$  ensures 
that 
$[z]=[z']$
in $\Hm_*(\ucplx_{\simpcnt})$,
and the definition in~\cite{maria2014zigzag}
has similar requirements.
Notice that
we may have $z=z'$,
for which  $A$ is simply empty.
\end{remark}

\begin{remark}\label{rmk:constant-bars}
To perform the update, we shall maintain
a representative for each pair in the up-down filtration 
that we work on,
which consists of at most three chains.
This is only a slight difference from the two chains per bar
maintained in~\cite{cohen2006vines},
where one chain is from the `$R$-matrix' and
another is from the `$V$-matrix'.
\end{remark}

We present below a fact helpful for the proof of correctness of algorithms in this paper:
\begin{proposition}
\label{prop:pair-correct}
Let $\pi$ be a set of pairs
of
additions and deletions
in $\Ucal$ s.t.\ each addition and deletion in $\Ucal$
appears exactly once in $\pi$.
If 
each pair in $\pi$ admits a representative,
then the pairs in $\pi$ generate
the intervals in
$\Pers_*(\Ucal)$.

\end{proposition}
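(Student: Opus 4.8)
The plan is to establish a counting/uniqueness argument: the multiset $\Pers_*(\Ucal)$ has exactly $\filtcnt/2 = \simpcnt$ intervals (since each of the $\filtcnt$ cell-wise inclusions in $\Ucal$ is either the left or right generator of exactly one interval, as $\Ucal$ starts and ends with empty complexes), and each interval is generated by a unique pair $(\addel{\varsG},\addel{\varsG'})$ of inclusions. Given $\pi$ satisfying the hypothesis, $\pi$ also has exactly $\simpcnt$ pairs and assigns each inclusion to exactly one pair. So it suffices to show that whenever a pair $(l,r)$ of inclusions (with $l<r$) admits a representative in the sense of Definition~\ref{dfn:ud-rep}, the interval $[l+1, r]$ actually belongs to $\Pers_*(\Ucal)$ with the matching dimension; then $\pi$ and the true pairing of $\Pers_*(\Ucal)$ both are perfect matchings on the set of inclusions in which each matched pair is a genuine persistence pair, and a straightforward induction (or a direct injectivity-plus-cardinality argument) forces them to coincide as multisets.

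The key step is therefore the ``local'' claim: if $(l,r)$ admits a representative, then $[l+1,r]$ is an interval of $\Hm_*(\Ucal)$. For a closed-open pair $(\add{\gamma},\add{\eta})$ with representative $(z,A)$, $z=\partial(A)$: the cycle $z$ is born at index $l+1$ because $z$ is created by $\add\gamma$ (so $z$ is a cycle in $\ucplx_{l}$ only after adding $\gamma$, and is nonzero in homology there by minimality of its creator), and $[z]$ dies entering index $r+1$ because $A$ is created by $\add\eta$ so $z=\partial A$ becomes a boundary exactly when $\eta$ is added. One must also check $[z]\neq 0$ throughout $[l+1,r]$ and that it is carried forward by the module maps — this is exactly the content of the representative conditions and is the analogue of the standard pairing lemma for (non-zigzag) persistence. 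The open-closed case is symmetric under reversing the filtration (the deletion half of $\Ucal$ read backwards is an addition filtration), and the closed-closed case $(\add\gamma,\del\eta)$ with $(z,A,z')$, $z+z'=\partial A$: here $[z]=[z']$ in $\Hm_*(\ucplx_\simpcnt)$, the class is born at $l+1$ via $z$ in the up-part, survives across the peak complex because $z$ and $z'$ represent the same class (with $A$ providing the homology between them), and dies at $r$ via $z'$ in the down-part; again nonvanishing in between follows from the creator conditions.

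Assembling: once the local claim is in hand, note that the true persistence pairing $\pi_0$ of $\Ucal$ is itself a set of pairs covering every inclusion exactly once, each admitting a representative (this is essentially Theorem~\ref{thm:fzz-intv-map}'s setup together with Definition~\ref{dfn:ud-rep} being a specialization of the representatives of~\cite{maria2014zigzag}). The map sending a pair to its interval $[l+1,r]$ (with dimension determined by the pair's type) is well-defined on both $\pi$ and $\pi_0$ by the local claim, and is injective on each because the left endpoint $l+1$ together with the inclusion at index $l$ determines the pair. Since $|\pi|=|\pi_0|=\simpcnt=|\Pers_*(\Ucal)|$ and every pair in $\pi_0$ maps to a genuine interval of $\Pers_*(\Ucal)$, the image of $\pi_0$ is all of $\Pers_*(\Ucal)$; the image of $\pi$ is a sub-multiset of $\Pers_*(\Ucal)$ of the same cardinality, hence equals it. Therefore the pairs in $\pi$ generate $\Pers_*(\Ucal)$.

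I expect the main obstacle to be the nonvanishing-throughout-$[l+1,r]$ part of the local claim — i.e., verifying that the class carried by the representative is neither born earlier than $l+1$ nor killed before index $r$. This requires using the ``created by'' conditions (all other cells of $z$ added before $\gamma$, all other cells of $A$ added before $\eta$, etc.) to rule out a smaller generator, and is where the special structure of up-down filtrations — the up-part being an ordinary filtration and the down-part its reverse — does the real work; the bookkeeping across the peak complex $\ucplx_\simpcnt$ for the closed-closed case is the fiddliest piece. Everything else is the pigeonhole/cardinality packaging, which is routine.
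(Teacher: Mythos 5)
Your reduction hinges on a ``local claim'' --- that any single pair of inclusions admitting a representative in the sense of Definition~\ref{dfn:ud-rep} must already be a persistence pair --- and that claim is false, so the proof does not go through. Take the ascending half of $\Ucal$ to add, in order, vertices $v_1,v_2,v_3,v_4$ and then edges $e_{12},e_{34},e_{14},e_{23}$ (with $e_{ij}$ joining $v_i,v_j$). The pair $(\add{v_4},\add{e_{14}})$ admits the representative $(z,A)$ with $z=v_1+v_4$ and $A=e_{14}$: indeed $z=\partial A$, $z$ is created by $\add{v_4}$ (the only other cell $v_1$ is earlier), and $A$ is created by $\add{e_{14}}$. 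Yet the $0$-dimensional persistence pairing of this half is $(v_2,e_{12})$, $(v_4,e_{34})$, $(v_3,e_{14})$, so $(\add{v_4},\add{e_{14}})$ is not a persistence pair and the corresponding interval is not in the barcode. Note that $[z]$ does remain nonzero all the way up to the addition of $e_{14}$; the failure is not the one you flagged (non\-vanishing on the open interval) but a subtler one: the cycle $z=v_1+v_4$ decomposes across \emph{two} interval summands ($[v_1+v_3]$ and $[v_3+v_4]$ in a compatible basis), so its birth/death window is not itself a bar. The ``created by'' conditions constrain only the latest cell of each chain and cannot see this. In the boundary-matrix picture, $(z,A)$ is a legitimate column pair for $e_{14}$ with low $v_4$, but it cannot sit inside a reduced $R$-matrix because the column for $e_{34}$ would then also have low $v_4$ --- and that conflict involves a \emph{different} pair, so no purely local check on $(\add{v_4},\add{e_{14}})$ can detect it.

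This is precisely where the global hypothesis does the work, and it is what the paper's proof uses. There, the representatives of \emph{all} pairs in $\pi$ are read simultaneously: on each of the two non-zigzag halves $\Ucal_u$ and $\Ucal_d$ they supply the columns of an $RU$ decomposition of the persistence boundary matrix. The distinct-lows condition --- exactly what is missing in the local approach --- comes for free from the assumption that every addition and deletion appears exactly once in $\pi$, since the low of each $R$-column is the creator cell of the corresponding $z$. The Pairing Uniqueness Lemma of~\cite{cohen2006vines} then forces each half-pairing to be the true one, and Proposition~9 of~\cite{dey2021graph} glues the two halves into the pairing for the up-down filtration $\Ucal$. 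Your cardinality bookkeeping at the end is sound and would close the argument if the local claim held, but the difficulty is concentrated in that claim, and it cannot be resolved one pair at a time.
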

\begin{proof}
First consider the first half $\Ucal_u$ of $\Ucal$
containing only additions,
which is indeed a non-zigzag filtration.
We have that the representatives for the pairs in $\pi$
induce an `$RU$' decomposition for the persistence boundary matrix of $\Ucal_u$.
We also have similar facts for the second half $\Ucal_d$ of $\Ucal$
containing only deletions.
Then, for each pair $(\addel{\varsG}, \addel{\varsG'})$ in $\pi$,
we have that $\addel{\varsG}$ is positive and
$\addel{\varsG'}$ is negative
based on the Pairing Uniqueness Lemma in~\cite{cohen2006vines}.
The proposition then follows from 
Proposition~9 of~\cite{dey2021graph}.
\end{proof}

\subsection{Using conversion for easier updates}
\label{sec:ez-up}

Utilizing the conversion  in Section~\ref{sec:fzz-up},
our algorithms work on
the corresponding up-down filtrations of $\Fcal$ and $\Fcal'$
for  the operations  in  Section~\ref{sec:update-oper}.
Specifically, we maintain 
 the barcode
and  representatives
for the corresponding up-down filtration,
and 
the  barcode
for the original  zigzag filtration
can be derived from the bijection $\phi$ as in Theorem~\ref{thm:fzz-intv-map}.
Notice that representatives 
for the pairs in the up-down filtration
are essential for  performing the update.
Moreover, 
by Proposition~\ref{prop:pair-correct},
the correctness of our update algorithms follows from
the correctness of the representatives that we maintain
during the update operations.
We briefly describe the ideas for the update in this section
and provide full details in Appendix~\ref{sec:ez-up-details}.

\paragraph{Outward/inward switch.}
Since the switch in Equation~(\ref{eqn:out-in-sw})
swaps two inclusions of different directions,
the corresponding up-down filtrations before and after
the switch are the same (see Section~\ref{sec:fzz-up}).
So the time complexity for the update is $O(1)$.

\paragraph{Forward/backward switch.}
Corresponding to a forward (resp.\ backward) switch on the original zigzag filtration,
there is a forward (resp.\ backward) switch 
in the up-down filtration~\cite{dey2023revisiting}.
To perform the update for forward/backward switches
on up-down filtrations,
we utilize the algorithm for the {transposition diamond}
in~\cite{maria2014zigzag},
which takes $O(\filtcnt)$ time.
Observe that the algorithm 
in~\cite{cohen2006vines}
cannot be used on up-down filtrations.

\paragraph{Inward expansion.}
For the inward expansion from $\Fcal'$ to $\Fcal$ in Equation~(\ref{eqn:in-contrac-expan}),
assume that the up-down filtration $\Ucal'$ corresponding to $\Fcal'$
has the following 
$\filtcnt-2$ additions and deletions:
\begin{equation*}
\Ucal':
\bullet
\inctosp{\usimp_0}
\bullet
\inctosp{\usimp_{1}}
\bullet
\cdots
\bullet\inctosp{\usimp_{\simpcnt-2}} 
\bullet
\bakinctosp{\usimp_{\simpcnt-1}}
\bullet
\bakinctosp{\usimp_{\simpcnt}}
\bullet
\cdots
\bullet
\bakinctosp{\usimp_{\filtcnt-3}}
\bullet
\end{equation*}
Let $\hat{\sG}$ be the $\DG$-cell corresponding to 
the addition of the simplex $\sG$ to $K_{i-1}$ in $\Fcal$.
We then insert the addition and deletion of $\hat{\sG}$
in the middle of $\Ucal'$ to get
an up-down filtration
$\Ucal^+$:
\begin{equation}
\label{eqn:U-plus}
\Ucal^+:
\bullet
\inctosp{\usimp_0}
\bullet
\inctosp{\usimp_{1}}
\bullet
\cdots
\bullet\inctosp{\usimp_{\simpcnt-2}} 
\bullet
\inctosp{\hat{\sG}} 
\bullet
\bakinctosp{\hat{\sG}}
\bullet
\bakinctosp{\usimp_{\simpcnt-1}}
\bullet
\bakinctosp{\usimp_{\simpcnt}}
\bullet
\cdots
\bullet
\bakinctosp{\usimp_{\filtcnt-3}}
\bullet
\end{equation}
To get 
the barcode and representatives for $\Ucal^+$
from those for $\Ucal'$, 
we utilize the algorithm
for the {injective} and {surjective diamond}
in~\cite{maria2014zigzag},
which takes $O(\filtcnt^2)$ time.
After this, we perform forward and backward switches on $\Ucal^+$ to 
place the two arrows $\inctosp{\hat{\sG}}$ and $\bakinctosp{\hat{\sG}}$
in  appropriate positions to form $\Ucal$,
which is the up-down filtration corresponding to $\Fcal$.
Since no more than $O(\filtcnt)$ switches are performed,
the switches in total
takes $O(\filtcnt^2)$ time.
Therefore, the update for inward expansion takes $O(\filtcnt^2)$ time.
Notice that the above transition from $\Ucal'$ to $\Ucal^+$
and from $\Ucal^+$ to $\Ucal$ is valid because $\hat{\sG}$ has no cofaces
in $\Ucal$ based on the conversion in~\cite{dey2022fast}.

\paragraph{Computing representatives for an initial up-down filtration.}

In order to perform a sequence of updates starting from  an initial zigzag filtration,
we need to compute the barcode and representatives 
for the initial up-down filtration.
This can be done by adapting the algorithm in~\cite{maria2014zigzag}
which takes $O(\filtcnt^3)$ time
for an initial filtration of length $\filtcnt$.

\section{Outward contraction}
\label{sec:fzz-out-contra}

Recall that an outward contraction is the following operation:
\begin{equation}
\label{eqn:out-contrac-in-sec}
\begin{tikzpicture}[baseline=(current  bounding  box.center)]
\node (a) at (0,0) {$\Fcal: K_0 \leftrightarrow
\cdots\leftrightarrow K_{i-2}
\leftrightarrow 
K_{i-1}\bakinctosp{\sG} 
K_i 
\inctosp{\sG} K_{i+1}
\leftrightarrow K_{i+2}\leftrightarrow
\cdots \leftrightarrow K_\filtcnt$}; 
\node (b) at 
(0,-1.02){$\Fcal': K_0 \leftrightarrow
\cdots\leftrightarrow K_{i-2}
\leftrightarrow 
K'_{i}
\leftrightarrow K_{i+2}\leftrightarrow
\cdots \leftrightarrow K_\filtcnt$};
\path[->] ([xshift=7pt]a.south) edge[double distance=2pt,arrows=-{Classical TikZ Rightarrow[length=4pt]}]  ([xshift=7pt]b.north);
\end{tikzpicture}
\end{equation}
where $K'_{i}=K_{i-1}=K_{i+1}$.
We also assume that  $\sG$ is a $\Dim$-simplex.
Notice that the indices for $\Fcal'$ are not consecutive,
i.e., $i-1$ and $i+1$ are skipped.

\paragraph{Adjacency change.} 
We first describe the challenge in updating the barcodes.
Let $\Ucal$ and $\Ucal'$ be the up-down filtrations
 corresponding to $\Fcal$ and $\Fcal'$ respectively.
According to~\cite{dey2022fast},
the first step of the conversion from 
$\Fcal$
to 
$\Ucal$ is to treat
each repeatedly added simplex in $\Fcal$
as a new copy of a $\Delta$-cell.
This results in  a zigzag filtration of $\Delta$-complexes which we name as $\hat{\Fcal}$.
Notice that $\Fcal$ and $\hat{\Fcal}$ 
are still isomorphic
(see the example in Figure~\ref{fig:fzz-out-contrac}).
However, by treating simplices as distinct  $\Delta$-cells in $\hat{\Fcal}$,
one can achieve the conversion to the up-down filtration~\cite{dey2022fast}.
Since every simplex added in $\Fcal$ must be deleted later,
let $K_{j}\inctosp{\sG} K_{j+1}$
be the addition associated to the deletion $K_{i-1}\bakinctosp{\sG}K_{i}$
and let $K_k \bakinctosp{\sG}K_{k+1}$ 
be the deletion associated    to the  addition $K_i \inctosp{\sG} K_{i+1}$
in $\Fcal$:
\begin{equation*}%
\begin{tikzpicture}[baseline=(current  bounding  box.center)]
\node (a) at (0,0) {$\Fcal: 
\cdots\leftrightarrow
K_j \inctosp{\sG}K_{j+1} \leftrightarrow
\cdots
\leftrightarrow 
K_{i-1}\bakinctosp{\sG} 
K_i 
\inctosp{\sG} K_{i+1}
\leftrightarrow 
\cdots
\leftrightarrow
K_k \bakinctosp{\sG}K_{k+1}
\leftrightarrow
\cdots 
$}; 
\node (b) at (0,-1.1){$\hat{\Fcal}: 
\cdots\leftrightarrow
\hat{K}_j \inctosp{\sG_1}\hat{K}_{j+1} \leftrightarrow
\cdots
\leftrightarrow 
\hat{K}_{i-1}\bakinctosp{\sG_1} 
\hat{K}_i 
\inctosp{\sG_2} \hat{K}_{i+1}
\leftrightarrow 
\cdots
\leftrightarrow
\hat{K}_k \bakinctosp{\sG_2}\hat{K}_{k+1}
\leftrightarrow
\cdots 
$}; 
\path[->] ([xshift=7pt]a.south) edge[double distance=2pt,arrows=-{Classical TikZ Rightarrow[length=4pt]}]  ([xshift=7pt]b.north);
\end{tikzpicture}
\end{equation*}
In the above sequences,
when $\sG$ is added to $K_j$,
we denote the corresponding $\Delta$-cell in $\hat{\Fcal}$
as $\sG_1$, and 
when $\sG$ is added to $K_{i}$,
we denote the corresponding $\Delta$-cell in $\hat{\Fcal}$
as $\sG_2$.
Let $\SG_1$ be the set of $(\Dim+1)$-cofaces
of $\sG$ added between 
$K_{j}\inctosp{\sG} K_{j+1}$
and $K_{i-1}\bakinctosp{\sG}K_{i}$ in $\Fcal$,
and let $\SG_2$ be the set of $(\Dim+1)$-cofaces
of $\sG$ added between 
$K_{i}\inctosp{\sG} K_{i+1}$
and $K_{k}\bakinctosp{\sG}K_{k+1}$ in $\Fcal$.
Furthermore, 
let $\hat{\SG}_1,\hat{\SG}_2$ be the sets
of $\Delta$-cells in $\hat{\Fcal}$ corresponding to
${\SG}_1,{\SG}_2$ respectively.
Since adjacency of simplices/cells in $\Fcal$ and $\hat{\Fcal}$
are the same,
we have that 
cells in $\hat{\SG}_1$
have $\sG_1$ in their boundaries
and 
cells in $\hat{\SG}_2$
have $\sG_2$ in their boundaries.

Now consider $\Fcal'$ and the corresponding $\hat{\Fcal}'$
after the contraction:
\begin{equation*}
\begin{tikzpicture}[baseline=(current  bounding  box.center)]
\node (a) at (0,0) {$\Fcal': 
\cdots\leftrightarrow
K_j \inctosp{\sG}K_{j+1} \leftrightarrow
\cdots
\leftrightarrow 
K'_i 
\leftrightarrow 
\cdots
\leftrightarrow
K_k \bakinctosp{\sG}K_{k+1}
\leftrightarrow
\cdots 
$}; 
\node (b) at (0,-1.1){$\hat{\Fcal}': 
\cdots\leftrightarrow
\hat{K}_j \inctosp{\sG_0}\hat{K}_{j+1} \leftrightarrow
\cdots
\leftrightarrow 
\hat{K}'_i 
\leftrightarrow 
\cdots
\leftrightarrow
\hat{K}_k \bakinctosp{\sG_0}\hat{K}_{k+1}
\leftrightarrow
\cdots 
$}; 
\path[->] ([xshift=8pt]a.south) edge[double distance=2pt,arrows=-{Classical TikZ Rightarrow[length=4pt]}]  ([xshift=8pt]b.north);
\end{tikzpicture}
\end{equation*}
Because of the contraction,
$K_{j}\inctosp{\sG} K_{j+1}$
and $K_{k}\bakinctosp{\sG}K_{k+1}$ 
become  associated addition and deletion of $\sG$ in $\Fcal'$.
For clarity, we name the $\Delta$-cell
in $\hat{\Fcal}'$ corresponding to the $\sG$ added in $K_j$
as $\sG_0$.
Based on the conversion in~\cite{dey2022fast},
we have that after the contraction, cells in
$\hat{\SG}_1\union\hat{\SG}_2$ now both have $\sG_0$ in their boundaries
in $\hat{\Fcal}'$ (see $\tau_1,\tau_2$ in Figure~\ref{fig:fzz-out-contrac} for an example).
We also notice that $\sG_1$ and $\sG_2$ have the same boundary
because after  $\sG$ is added to $K_j$ in $\Fcal$,
the boundary simplices of $\sG$ are never deleted before
$\sG$ is deleted from $K_k$
(in general, different $\DG$-cells in the converted up-down filtration corresponding to the same simplex
could have different boundaries).
The change can then be formally stated as follows:
\begin{observation}
After the outward contraction,
the boundary $\Dim$-cell $\sG_1$ of $(\Dim+1)$-cells in $\hat{\SG}_1$ and
the boundary $\Dim$-cell $\sG_2$ of $(\Dim+1)$-cells in $\hat{\SG}_2$
are identified as the same $\Dim$-cell $\sG_0$ in $\hat{\Fcal}'$ (and $\Ucal'$).
\end{observation}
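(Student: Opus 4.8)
The plan is to unwind the two-step conversion from $\Fcal$ to $\hat{\Fcal}$ to $\Ucal$ (and the analogous conversion for $\Fcal'$) of~\cite{dey2022fast}, and to track precisely which $\DG$-cell copy of $\sG$ each $(\Dim+1)$-coface carries in its boundary, before and after the contraction.

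First I would recall how the nearest-neighbor matching of additions and deletions of $\sG$ determines the $\DG$-cell copies in $\hat{\Fcal}$: a copy of $\sG$ is born at an addition $K_l\inctosp{\sG}K_{l+1}$ and dies at the nearest subsequent deletion of $\sG$, and that addition and deletion involve one and the same $\DG$-cell. By the setup, in $\Fcal$ the addition at index $j$ is matched with the deletion at index $i-1$ — so there is no deletion of $\sG$ in between and $\sG\in K_l$ for $j<l\le i-1$ — giving the copy $\sG_1$; and the addition at index $i$ is matched with the deletion at index $k$, giving $\sG_2$. Deleting the two center arrows to form $\Fcal'$ leaves the addition at index $j$ matched with the deletion at index $k$, hence a single copy $\sG_0$ in $\hat{\Fcal}'$ whose lifespan is the entire index range $[j,k]$. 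Because adjacency is unchanged from $\Fcal$ to $\hat{\Fcal}$ and from $\Fcal'$ to $\hat{\Fcal}'$, a $(\Dim+1)$-coface of $\sG$ inserted while a given copy of $\sG$ is the one present has that copy in its boundary; so the cells of $\hat{\SG}_1$ have $\sG_1$ and the cells of $\hat{\SG}_2$ have $\sG_2$ in their boundaries in $\hat{\Fcal}$, whereas after the contraction every cell of $\hat{\SG}_1\cup\hat{\SG}_2$ is a coface inserted during the single lifespan of $\sG_0$ and hence has $\sG_0$ in its boundary in $\hat{\Fcal}'$.

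Next I would verify the compatibility that makes ``identifying $\sG_1$ and $\sG_2$'' meaningful, namely $\partial\sG_1=\partial\sG_2$. The key claim is that no proper face of $\sG$ is added or deleted in $\Fcal$ between indices $j+1$ and $k$: the relation $\sG\in K_l$ forces all faces of $\sG$ into $K_l$ for $l\in\{j+1,\dots,i-1\}\cup\{i+1,\dots,k\}$, the only complex in this range without $\sG$ is $K_i=K_{i+1}\setminus\{\sG\}$, which still contains every face of $\sG$ since $K_{i+1}$ is a simplicial complex, and the only arrows in this range touching $\sG$ are the deletion at $i-1$ and the addition at $i$, neither altering a face of $\sG$. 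Consequently no face copy of $\sG$ in $\hat{\Fcal}$ changes between the birth of $\sG_1$ and the birth of $\sG_2$, so the two copies share the same boundary set, and $\sG_0$ inherits this common boundary — which is exactly the Observation for $\hat{\Fcal}'$.

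Finally, passing from $\hat{\Fcal}'$ to $\Ucal'$ only reorders the cell-wise inclusions (all additions first, in their original relative order, then all deletions), leaving every $\DG$-cell and its boundary map untouched; since all the relevant cells survive to $\Ucal'$, the identification $\sG_1=\sG_2=\sG_0$ and the containments $\sG_0\in\partial\tG$ for $\tG\in\hat{\SG}_1\cup\hat{\SG}_2$ transfer verbatim. I expect the principal obstacle to be bookkeeping rather than a single hard step: making the nearest-neighbor matching argument airtight — in particular that the contraction removes \emph{only} the center matched pair and re-matches the $j$-addition with the $k$-deletion while leaving every other match intact — and being explicit that the simplicial-complex hypothesis is precisely what licenses the ``no face of $\sG$ changes during its lifespan'' step, hence $\partial\sG_1=\partial\sG_2$.
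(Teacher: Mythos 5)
Your argument follows the same line of reasoning the paper sketches: track the nearest-neighbor matching of additions and deletions of $\sG$ that defines the $\DG$-cell copies, note that a $(\Dim+1)$-coface picks up in its boundary whichever copy of $\sG$ is present at its insertion time, observe that deleting the center arrows re-matches the $j$-addition with the $k$-deletion into a single copy $\sG_0$, and check $\partial\sG_1=\partial\sG_2$ so the identification is well-defined. You supply somewhat more detail than the paper on why no face of $\sG$ changes over $[j+1,k]$ (invoking the simplicial-complex hypothesis at $K_i$) and on why the pass from $\hat{\Fcal}'$ to $\Ucal'$ preserves cells and boundary maps, but the approach and all the key steps coincide with the paper's.
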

While  the cell identification seems to have no effect in $\hat{\Fcal}$ and $\hat{\Fcal}'$ 
as in Figure~\ref{fig:fzz-out-contrac}
because $\sG_1$ and $\sG_2$ never appear in 
the same $\Delta$-complex,
the effect is evident  in the up-down
filtrations $\Ucal$ and $\Ucal'$
(which we work on)
where all cells eventually appear in the same $\DG$-complex.
In Figure~\ref{fig:fzz-out-contrac},
$\sG_1$ is a boundary edge of
$\tau_1$ and $\sG_2$ is a boundary edge of
$\tau_2$ in $\Ucal$.
After the contraction,
both $\tau_1$ and $\tau_2$ have $\sG_0$ in their boundaries in $\Ucal'$.
To better understand the conversion, 
we notice that  
to build $\Ucal$ 
from $\hat{\Fcal}$, 
one only needs to 
list all the additions first and then the deletions in $\hat{\Fcal}$,
following their orders in $\hat{\Fcal}$~\cite{dey2022fast}
(conversion from  $\hat{\Fcal}'$ to $\Ucal'$ is the same);
see also Section~\ref{sec:fzz-up}.

\begin{figure}[!tb]
  \centering
  \includegraphics[width=\linewidth]{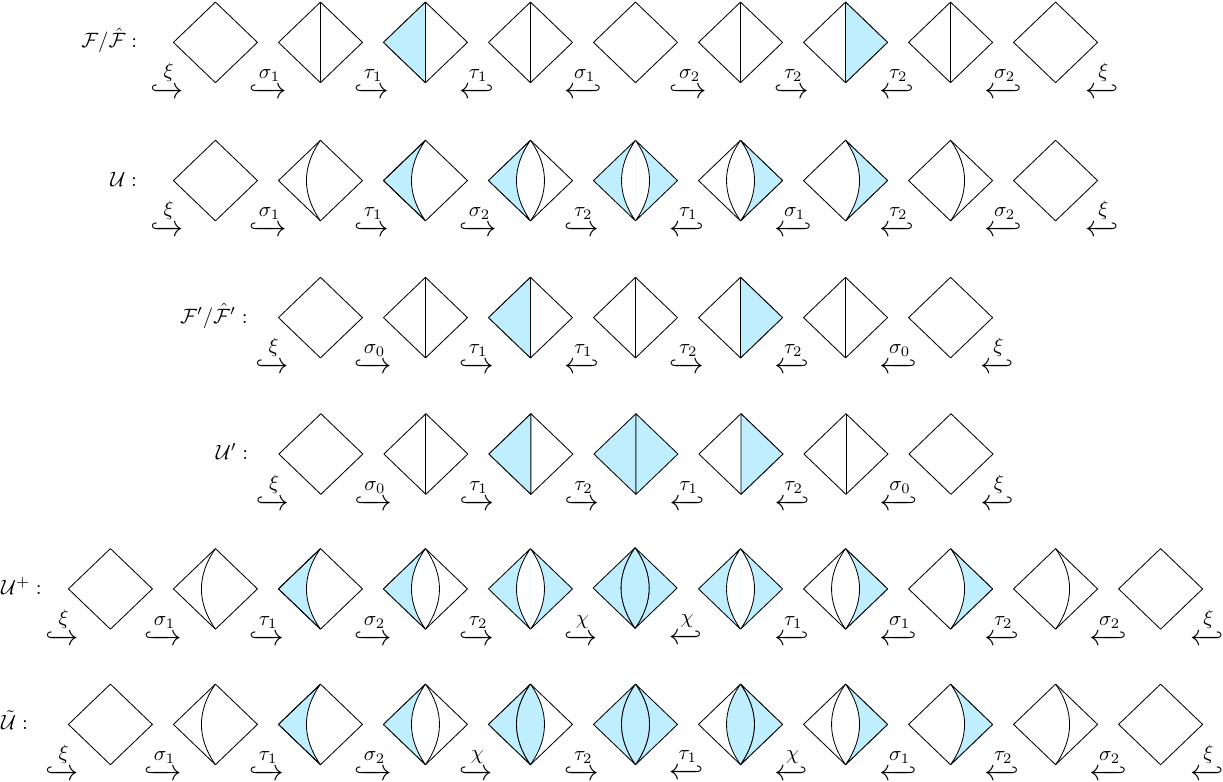}
  \caption{An example illustrating the adjacency change of $\Delta$-cells
  and the algorithm ideas
  for outward contraction.
  Complexes in
  $\Fcal$ and $\hat{\Fcal}$ are 
  isomorphic
  and so
  $\Fcal$ and $\hat{\Fcal}$
  ($\Fcal'$ and $\hat{\Fcal}'$ as well)
  are combined together where the arrows are labelled by the 
  $\Delta$-cells.
  For simplicity, 
  the filtrations
  do not start and end with empty complexes.}
  \label{fig:fzz-out-contrac}
\end{figure}

\begin{figure}[!tb]
  \includegraphics[width=0.3\linewidth]{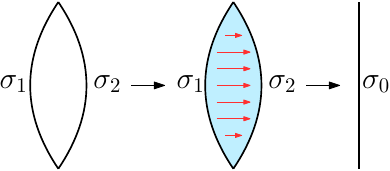}
  \caption{Attaching a $(\Dim+1)$-cell $\extsimp$
with boundary being $\sG_1+\sG_2$
and `collapsing' $\extsimp$ to identify $\sG_1$ and $\sG_2$.}
  \label{fig:collapse}
\end{figure}

\paragraph{Key ideas for the computation.}
We need to identify $\sG_1,\sG_2$ in $\Ucal$ to obtain $\Ucal'$. This
can be alternatively done by attaching an additional $(\Dim+1)$-cell $\extsimp$
whose boundary equals $\sG_1+\sG_2$
and then `collapsing' $\extsimp$ onto one of its boundary $\Dim$-cell
(see Figure~\ref{fig:collapse} for $\Dim=1$).
Therefore, our solution is that we first perform an inward expansion 
in the middle of $\Ucal$
with 
$\add{\extsimp}$ and $\del{\extsimp}$
added,
to get a new  up-down filtration  $\Ucal^+$ (as shown in Figure~\ref{fig:fzz-out-contrac}).
Then, we apply several forward and backward switches
to make 
$\add{\extsimp}$ 
immediately after 
$\add{\sG_2}$
and make 
$\del{\extsimp}$ 
immediately before 
$\del{\sG_1}$,
getting an up-down filtration  $\tilde{\Ucal}$ (see Figure~\ref{fig:fzz-out-contrac}).
The reason for constructing the additional filtration $\Ucal^+$ and then $\tilde{\Ucal}$ is 
that 
it is not obvious how to compute $\Pers_*({\Ucal}')$ directly in an efficient way
but $\Pers_*(\tilde{\Ucal})$ can be computed from $\Pers_*({\Ucal})$ in $O(\filtcnt^2)$ time
(see the algorithm for inward expansion).
We observe that $\Pers_*(\tilde{\Ucal})$ and $\Pers_*({\Ucal}')$ are `almost' the same 
(see Proposition~\ref{prop:pers-Utild-Uprime-almost-same} for a formal statement),
which leads to an $O(\filtcnt^2)$ algorithm for computing $\Pers_*({\Ucal}')$.
To see  the relevance of $\Pers_*(\tilde{\Ucal})$ and $\Pers_*({\Ucal}')$, 
first
consider the two \emph{non-zigzag} filtrations forming $\Ucal$:
\begin{align*}
&\Ucal_u:
L^u_0\incto
\cdots\incto
L^u_s\inctosp{\sG_1}
L^u_{s+1}
\incto
\cdots
\inctosp{\mu}
L^u_t\inctosp{\sG_2}
L^u_{t+1}
\inctosp{\mu'}
L^u_{t+2}\incto
\cdots
\incto
L^u_\simpcnt,\\
&\Ucal_d:
L^d_0\incto
\cdots\incto
L^d_x\inctosp{\sG_2}
L^d_{x+1}
\incto
\cdots
\inctosp{\rho}
L^d_y\inctosp{\sG_1}
L^d_{y+1}
\inctosp{\rho'}
L^d_{y+2}\incto
\cdots
\incto
L^d_\simpcnt.
\end{align*}
In the above sequences, 
$\Ucal_u$ is the \emph{ascending} part of $\Ucal$
consisting of all the additions
and $\Ucal_d$ is the \emph{descending} part of $\Ucal$
consisting of all the deletions
so that $\Ucal$ equals $\Ucal_u$ concatenated with the reversed $\Ucal_d$.
Similarly, $\Ucal'$ can be represented as follows:
\begin{align*}
&\Ucal'_u:
L^u_0\incto
\cdots\incto
L^u_s\inctosp{\sG_0}
L'^u_{s+1}
\incto
\cdots\inctosp{\mu}
L'^u_t
\inctosp{\mu'} 
L'^{u}_{t+2}
\incto\cdots
\incto
L'^{u}_\simpcnt,\\
&\Ucal'_d:
L^d_0\incto
\cdots\incto
L^d_x\inctosp{\sG_0}
L'^d_{x+1}
\incto
\cdots
\inctosp{\rho}
L'^{d}_y\inctosp{\rho'}
L'^{d}_{y+2}
\incto
\cdots
\incto
L'^{d}_\simpcnt.
\end{align*}
By the conversion from $\hat{\Fcal}$ to $\Ucal$ and the conversion from $\hat{\Fcal}'$ to $\Ucal'$~\cite{dey2022fast}
described in Section~\ref{sec:fzz-up},
the only difference of the sequences of additions in  $\Ucal_u$ and $\Ucal'_u$
is that 
the addition of ${\sG_2}$ 
disappears in  $\Ucal'_u$
and $\sG_1$ in $\Ucal_u$ is renamed as $\sG_0$ in $\Ucal'_u$.
Therefore, the only difference of $L'^u_a$ and $L^u_a$ for $s+1\leq a\leq t$
is the renaming of $\sG_1$ to $\sG_0$ in $L'^u_a$.
Furthermore,
because of the adjacency change,
$L'^u_a$ for $a\geq t+2$ can be considered as derived from
$L^u_a$ by identifying $\sG_2$ to $\sG_1$ (renamed as $\sG_0$).
Notice that we have symmetric changes from $\Ucal_d$ to $\Ucal'_d$,
i.e., the addition of
${\sG_1}$ 
disappears and
$\sG_1$ is identified to $\sG_2$ which is renamed as $\sG_0$.

Moreover, we write $\tilde{\Ucal}$  as follows:
\begin{align*}
&\tilde{\Ucal}_u:L^u_0\incto
\cdots
\incto
L^u_s\inctosp{\sG_1}
L^u_{s+1}
\cdots
\inctosp{\mu}
L^u_t\inctosp{\sG_2}
\tilde{L}^u_{t+1}\inctosp{\extsimp}
\tilde{L}^u_{t+2}
\inctosp{\mu'}
\tilde{L}^u_{t+3}
\incto\cdots
\incto
\tilde{L}^u_{\simpcnt+1},\\
&\tilde{\Ucal}_d:L^d_0\incto
\cdots\incto
L^d_x\inctosp{\sG_2}
L^d_{x+1}
\cdots
\inctosp{\rho}
L^d_y\inctosp{\sG_1}
\tilde{L}^d_{y+1}\inctosp{\extsimp}
\tilde{L}^d_{y+2}
\inctosp{\rho'}
\tilde{L}^d_{y+3}
\incto
\cdots\incto
\tilde{L}^d_{\simpcnt+1}.
\end{align*}

We then observe the following:
In $\tilde{\Ucal}_u$, the effect of 
the addition of $\sG_2$
is immediately annulled due to the subsequent addition of $\extsimp$
because $\sG_2$ can then be identified with $\sG_1$ by collapsing
$\extsimp$.
Specifically, we have that $\tilde{L}^u_a$
is homotopy equivalent to $L'^u_{a-1}$
for each   $a\geq t+3$ (see Figure~\ref{fig:fzz-out-contrac} for an example).
Similarly, in $\tilde{\Ucal}_d$, the effect of the addition of $\sG_1$
is  annulled due to the addition of $\extsimp$
and we have that $\tilde{L}^d_a$
is homotopy equivalent to $L'^d_{a-1}$
for each   $a\geq y+3$.
Furthermore, by the nature of persistence~\cite{edelsbrunner2000topological},
$\sG_2$ must be paired with $\extsimp$ in $\tilde{\Ucal}_u$
(because adding  $\sG_2$  creates a cycle $\sG_1+\sG_2$
which is immediately killed by adding $\extsimp$)
and $\sG_1$ must be paired with $\extsimp$ in $\tilde{\Ucal}_d$ (with a similar reason).
We now have:
\begin{definition}
For each  addition or deletion $\addel{\eta}$ in $\tilde{\Ucal}$
s.t.\ $\addel{\eta}
\not\in
\{
\add{\sG_2},\,\add{\extsimp},\,\del{\extsimp},
\del{\sG_1}
\}
$,
define its corresponding 
addition/deletion
in ${\Ucal}'$,
denoted $\theta(\addel{\eta})$,
as follows:
$\theta(\add{\sG_1})=\,\add{\sG_0}$;
$\theta(\del{\sG_2})=\,\del{\sG_0}$;
for any remaining $\addel{\eta}$,
the corresponding $\theta(\addel{\eta})$ is the same.
\end{definition}
\begin{proposition}
\label{prop:pers-Utild-Uprime-almost-same}
Given $\Pers_*(\tilde{\Ucal})$, 
one only needs to do the following to obtain  $\Pers_*({\Ucal}')$:
    Ignoring the pairs $(\add{\sG_2},\add{\extsimp})$
    and $(\del{\extsimp},\del{\sG_1})$ in 
    $\Pers_*(\tilde{\Ucal})$,
    for each remaining pair $(\addel{\eta},\addel{\gamma})\in\Pers_*(\tilde{\Ucal})$, 
    produce a corresponding pair
    $(\theta(\addel{\eta}),\thG(\addel{\gamma}))\in\Pers_*({\Ucal'})$.
\end{proposition}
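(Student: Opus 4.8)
The plan is to establish the proposition by exhibiting a valid representative for every pair in the claimed set $\theta(\Pers_*(\tilde{\Ucal}))$ with the two ``extra'' pairs removed, and then invoking Proposition~\ref{prop:pair-correct}. First I would verify the bookkeeping: after deleting the pair $(\add{\sG_2},\add{\extsimp})$ from $\tilde{\Ucal}_u$ and the pair $(\del{\extsimp},\del{\sG_1})$ from $\tilde{\Ucal}_d$, the remaining additions and deletions of $\tilde{\Ucal}$ are exactly $\add{\sG_1}$, $\del{\sG_2}$, and all the $\usimp_i$, $\mu$, $\mu'$, $\rho$, $\rho'$. Under $\theta$ these map bijectively onto the additions and deletions of $\Ucal'$ (with $\add{\sG_1}\mapsto\add{\sG_0}$, $\del{\sG_2}\mapsto\del{\sG_0}$, everything else fixed), so the proposed set of pairs $\pi:=\{(\theta(\addel{\eta}),\theta(\addel{\gamma}))\}$ satisfies the hypothesis of Proposition~\ref{prop:pair-correct} that each addition and deletion of $\Ucal'$ appears exactly once. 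It then suffices to produce a representative for each pair in $\pi$.

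Next I would transport representatives along the homotopy equivalences already identified in the text. Recall that $\tilde{L}^u_a$ is homotopy equivalent to $L'^u_{a-1}$ for $a\ge t+3$ by collapsing $\extsimp$ (identifying $\sG_2$ with $\sG_1$, renamed $\sG_0$), and symmetrically $\tilde{L}^d_a \simeq L'^d_{a-1}$ for $a\ge y+3$; for the early part, $L'^u_a$ and $L^u_a = \tilde{L}^u_a$ differ only by renaming $\sG_1$ to $\sG_0$ (and $L^u_a=\tilde{L}^u_a$ also holds verbatim for $a\le t+1$, i.e.\ before $\extsimp$ is inserted). Concretely, I would define a chain map $\psi$ sending each cell other than $\sG_1,\sG_2,\extsimp$ to itself, sending both $\sG_1$ and $\sG_2$ to $\sG_0$, and sending $\extsimp$ to $0$; one checks $\psi$ commutes with $\partial$ because $\partial\extsimp = \sG_1 + \sG_2 \mapsto \sG_0 + \sG_0 = 0$ and $\partial\sG_1 = \partial\sG_2 = \partial\sG_0$ (the boundary simplices of $\sG$ are not deleted between the addition in $K_j$ and the deletion in $K_k$, as observed in the text). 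Applying $\psi$ cell‑wise to the chains in each representative tuple for a pair of $\tilde{\Ucal}$ yields candidate chains for the $\theta$‑image pair in $\Ucal'$. The remaining work is to check three things for each type of pair (closed‑open, open‑closed, closed‑closed): (i) the boundary relations $z=\partial A$, $z+z'=\partial A$ are preserved by $\psi$ — immediate since $\psi$ is a chain map; (ii) the ``created by'' conditions survive — here one uses that $\psi$ is order‑preserving on the surviving cells and that in $\Ucal'$ the addition order is obtained from that of $\tilde{\Ucal}$ simply by deleting $\add{\sG_2},\add{\extsimp}$ (and symmetrically for deletions), so a chain created by $\add{\eta}$ in $\tilde\Ucal$ is still created by $\theta(\add{\eta})$ in $\Ucal'$; (iii) positivity/negativity of the two inclusions in each pair is consistent — but this is exactly what Proposition~\ref{prop:pair-correct} delivers once representatives are in place, so it need not be checked by hand.

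The one point requiring genuine care — and the step I expect to be the main obstacle — is that a surviving pair of $\tilde{\Ucal}$ might have a representative whose chains involve $\extsimp$, $\sG_2$ (in $\tilde\Ucal_u$), or $\sG_1$ (in $\tilde\Ucal_d$), so that $\psi$ could in principle degrade it (e.g.\ kill the creating cell, or turn a nontrivial cycle into $0$). I would rule this out using the pairing: since $\add{\sG_2}$ is paired with $\add{\extsimp}$ and $\del{\sG_1}$ with $\del{\extsimp}$ in $\tilde\Ucal$, the standard $RU$‑decomposition structure (as used in the proof of Proposition~\ref{prop:pair-correct}) lets us choose the maintained representatives so that, for every surviving pair, the $(\Dim+1)$‑chain does not contain $\extsimp$ and the $\Dim$‑chains do not contain $\sG_2$ in the ascending part nor $\sG_1$ in the descending part — informally, because $\extsimp$ is ``used up'' pairing $\sG_2$ and has no cofaces in $\tilde\Ucal$, one can always reduce $\extsimp$ out of any other column, and likewise reduce $\sG_2$ out of any $\Dim$‑cycle born before $\add{\sG_2}$ survives. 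Hence on the surviving representatives $\psi$ acts as a bijection on cells (renaming $\sG_1$ to $\sG_0$ in the ascending part, $\sG_2$ to $\sG_0$ in the descending part, identity otherwise), so it manifestly preserves the ``created by'' conditions and all boundary relations. With representatives in hand for every pair of $\pi$, Proposition~\ref{prop:pair-correct} asserts that $\pi$ generates $\Pers_*(\Ucal')$, which is exactly the claim.
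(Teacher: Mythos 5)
Your proof takes the chain-level route that the paper records only in the remark immediately after the proposition and then carries out case by case in Section~\ref{sec:out-contrac-rewrite-rep}: re-write representatives of $\Pers_*(\tilde{\Ucal})$ under the quotient chain map $\psi$ (with $\sG_1,\sG_2\mapsto\sG_0$ and $\extsimp\mapsto 0$) and appeal to Proposition~\ref{prop:pair-correct}. The paper's primary proof is instead a short homotopy-theoretic argument: the complexes of $\tilde{\Ucal}_u$ and $\Ucal'_u$ agree up to renaming before the inserted cells, $\tilde{L}^u_a\simeq L'^u_{a-1}$ for $a\geq t+3$ by collapsing $\extsimp$, and symmetrically for the descending part; coarsening $L^u_t\incto\tilde{L}^u_{t+3}$ and $L^d_y\incto\tilde{L}^d_{y+3}$ to single inclusions gives $\Hm_*(\tilde{\Ucal})\cong\Hm_*(\Ucal')$ directly. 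Both routes are legitimate and complementary: the homotopy argument proves the barcode statement cheaply, while your chain-level argument is what the algorithm needs anyway, and the map $\psi$ is a clean device for organizing the representative bookkeeping.

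The gap is precisely at the step you flag. You claim that, because $(\add{\sG_2},\add{\extsimp})$ and $(\del{\extsimp},\del{\sG_1})$ are pairs, one can choose representatives for every surviving pair whose $(\Dim+1)$-chain avoids $\extsimp$, whose ascending $\Dim$-cycle avoids $\sG_2$, and whose descending $\Dim$-cycle avoids $\sG_1$, so that $\psi$ becomes a pure renaming. These requirements are coupled: for a closed--open $\Dim$-pair $(z,A)$ with $z=\partial(A)$, whether $\sG_2\in z$ is determined by the parity of $\sG_2$-cofaces in $A$, and $\extsimp$ is one of them, so adding or removing $\extsimp$ from $A$ toggles $\sG_2\in z$; absorbing the change would need another coface in $\hat{\SG}_2$ added before the creator of $A$, which need not exist. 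So the reduction you invoke is not always achievable, and the appeal to $RU$-decomposition does not supply it ($\extsimp$ appears inside the chain $A$, which plays the role of a $V$-column, not a boundary column). Fortunately the reduction is not needed: for a surviving pair none of the constituent chains is created by $\add{\sG_2}$, $\del{\sG_1}$, $\add{\extsimp}$ or $\del{\extsimp}$ (those additions and deletions lie in the two excluded pairs); a chain created by $\add{\sG_1}$ cannot contain $\sG_2$ or $\extsimp$ since they are added later; a chain created by $\del{\sG_2}$ cannot contain $\sG_1$ or $\extsimp$ since they are deleted earlier; and $\psi$ sends both $\sG_1,\sG_2$ to $\sG_0$, whose addition (resp.\ deletion) position is no later (resp.\ no earlier) than either of them. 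Hence $\psi$ never annihilates or postpones the creating cell, and $\psi$ of each representative is already a valid representative for the $\theta$-image pair. Replacing the reduction claim by this direct verification --- which is exactly the content of the case analysis in Section~\ref{sec:out-contrac-rewrite-rep} --- closes the gap.
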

\begin{proof}
As mentioned,
we have the following equivalence of homotopy types 
between complexes in $\tilde{\Ucal}_u$  and $\Ucal_u'$:
    (i) complexes at indices from $0$ to $s$ in $\tilde{\Ucal}_u$  and $\Ucal_u'$
    are the same;
    (ii) $L^u_a$  and $L'^u_a$ are the same complex for $s+1\leq a\leq t$
with the only difference on the naming of $\sG_0$ and $\sG_1$;
    (iii) $\tilde{L}^u_a$
is homotopy equivalent to $L'^u_{a-1}$
for $a\geq t+3$.
Therefore, if we view  
$L^u_t\incto\tilde{L}^u_{t+3}$ as a single inclusion 
in $\tilde{\Ucal}_u$,
the induced zigzag modules $\Hm_*(\tilde{\Ucal}_u)$ 
and $\Hm_*({\Ucal}'_u)$ are isomorphic.
We also have similar results for 
$\Hm_*(\tilde{\Ucal}_d)$ 
and $\Hm_*({\Ucal}'_d)$.
Hence, 
$\Hm_*(\tilde{\Ucal})$ 
and $\Hm_*({\Ucal}')$ are isomorphic
(by treating $L^u_t\incto\tilde{L}^u_{t+3}$
and $L^d_y\incto\tilde{L}^d_{y+3}$ as single inclusions).
The proposition then follows.
\end{proof}
\begin{remark*} 
An alternative proof of Proposition~\ref{prop:pers-Utild-Uprime-almost-same}
follows from Proposition~\ref{prop:pair-correct} and the correctness of the representatives
that we obtain for $\Pers_*({\Ucal}')$ (see Section~\ref{sec:out-contrac-rewrite-rep}
for details).
\end{remark*}
\begin{example*}
Ignoring the two pairs without correspondence,
we have the following mapping of pairs from $\Pers(\tilde{\Ucal})$
to $\Pers({\Ucal}')$ for the example in Figure~\ref{fig:fzz-out-contrac} ($\varsG$ 
is a fixed edge in the boundary of the diamond):
$(\add{\varsG},\add{\tau_2})\mapsto(\add{\varsG},\add{\tau_2}),
(\add{\sG_1},\add{\tau_1})\mapsto(\add{\sG_0},\add{\tau_1}),
(\del{\tau_1},\del{\varsG})\mapsto(\del{\tau_1},\del{\varsG}),
(\del{\tau_2},\del{\sG_2})\mapsto(\del{\tau_2},\del{\sG_0})$.
\end{example*}

\paragraph{Algorithm.}
From  the maintained $\Pers_*(\Ucal)$
and its  representatives,
we first compute $\Pers_*(\tilde{\Ucal})$ and the representatives
in $O(\filtcnt^2)$ time.
Based on Proposition~\ref{prop:pers-Utild-Uprime-almost-same}, 
$\Pers_*(\Ucal')$ can then be easily derived.
To get the representatives for $\Pers_*(\Ucal')$, 
we notice that chains and cycles in $\tilde{\Ucal}$
can be naturally `re-written' as chains and cycles in $\Ucal'$,
e.g.,
the chain $\tau_1+\tau_2+\extsimp$ in $\tilde{\Ucal}$
can be re-written as $\tau_1+\tau_2$ in $\Ucal'$
where their boundary stays the same.
We  provide in Section~\ref{sec:out-contrac-rewrite-rep} the details of re-writing
the representatives for $\Pers_*(\tilde{\Ucal})$
to get the representatives for $\Pers_*(\Ucal')$.
Since the representative re-writing in Section~\ref{sec:out-contrac-rewrite-rep}
takes $O(\filtcnt^2)$ time
(there are no more than $O(\filtcnt)$
representatives needed to be re-written
and each  takes $O(\filtcnt)$ time), 
we conclude the following:
\begin{theorem}
Zigzag barcode for an outward contraction can
be updated in $O(\filtcnt^2)$ time.
\end{theorem}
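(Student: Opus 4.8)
The plan is to realize the outward contraction on the up-down side as a short composition of operations whose update subroutines are already in place, and then to absorb the adjacency change into a bounded amount of chain-level bookkeeping. Concretely, I start from the maintained pair-decomposition $\Pers_*(\Ucal)$ together with a representative (in the sense of Definition~\ref{dfn:ud-rep}) for every pair, and build the intermediate up-down filtration $\tilde{\Ucal}$ as in Figure~\ref{fig:fzz-out-contrac}: first insert the addition and deletion of the auxiliary $(\Dim+1)$-cell $\extsimp$ with $\partial(\extsimp)=\sG_1+\sG_2$ into the middle of $\Ucal$ (an inward expansion, handled in Section~\ref{sec:ez-up} in $O(\filtcnt^2)$ time and producing $\Pers_*$ and representatives), then drag $\add{\extsimp}$ to the slot immediately following $\add{\sG_2}$ and $\del{\extsimp}$ to the slot immediately preceding $\del{\sG_1}$ by forward and backward switches. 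The switches number $O(\filtcnt)$, each costing $O(\filtcnt)$, so this stage is $O(\filtcnt^2)$ as well. Here I would check that these moves are legitimate on up-down filtrations: $\extsimp$ never acquires a coface, and in every intermediate $\DG$-complex the cells $\sG_1$ and $\sG_2$ are present exactly when $\extsimp$ is, so the equation $\partial(\extsimp)=\sG_1+\sG_2$ is meaningful throughout — this is precisely where working on $\Ucal$ rather than on the coned non-zigzag filtration pays off.

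Next I would read off $\Pers_*(\Ucal')$ from $\Pers_*(\tilde{\Ucal})$ by invoking Proposition~\ref{prop:pers-Utild-Uprime-almost-same}: discard the two pairs $(\add{\sG_2},\add{\extsimp})$ and $(\del{\extsimp},\del{\sG_1})$, and replace each remaining pair $(\addel{\eta},\addel{\gamma})$ by $(\theta(\addel{\eta}),\theta(\addel{\gamma}))$, which is $O(\filtcnt)$ work. To obtain representatives for the pairs of $\Ucal'$, I would re-write each maintained chain of $\tilde{\Ucal}$ as a chain of $\Ucal'$ through the substitution $\sG_1\mapsto\sG_0$, $\sG_2\mapsto\sG_0$, $\extsimp\mapsto 0$, and every other cell to itself — for instance $\tau_1+\tau_2+\extsimp$ in $\tilde{\Ucal}$ becomes $\tau_1+\tau_2$ in $\Ucal'$ with the same boundary. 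The point I expect to be the main obstacle is the verification that this syntactic substitution carries a valid representative of a pair of $\tilde{\Ucal}$ — in each of the closed-open, open-closed, and closed-closed cases, with its boundary relation $z=\partial A$ or $z+z'=\partial A$ — to a valid representative of the corresponding pair of $\Ucal'$. This rests on the fact that the collapse of Figure~\ref{fig:collapse} induces a chain map: the substitution commutes with $\partial$ because $\partial\extsimp=\sG_1+\sG_2\mapsto\sG_0+\sG_0=0$ matches the adjacency change $\partial_{\Ucal'}=\bigl(\partial_{\tilde{\Ucal}}\text{ with }\sG_1,\sG_2\text{ both sent to }\sG_0\bigr)$; it sends cycles to cycles; and it preserves the ``created by'' condition, since the cell ordering of $\tilde{\Ucal}$ restricted away from $\sG_2$ and $\extsimp$ agrees with that of $\Ucal'$ up to the renaming of $\sG_1$ as $\sG_0$, and the ordering constraint in Definition~\ref{dfn:ud-rep} rules out the awkward configurations (e.g.\ a cycle created by $\add{\sG_1}$ that also contains $\sG_2$). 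I would carry out the full case analysis, including the pairs that involve $\sG_1$, $\sG_2$, or $\extsimp$ themselves, in Section~\ref{sec:out-contrac-rewrite-rep}; since there are $O(\filtcnt)$ pairs and each representative has size $O(\filtcnt)$, the re-writing is $O(\filtcnt^2)$.

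Finally, correctness of the output follows for free: the produced pairs of $\Ucal'$ use each addition and deletion of $\Ucal'$ exactly once and every one of them carries a representative, so Proposition~\ref{prop:pair-correct} certifies that they generate $\Pers_*(\Ucal')$. Summing the costs — $O(\filtcnt^2)$ for the inward expansion, $O(\filtcnt^2)$ for the switches, $O(\filtcnt)$ for the relabelling of pairs, and $O(\filtcnt^2)$ for the representative re-writing — gives an $O(\filtcnt^2)$ update of $\Pers_*(\Ucal')$ and its representatives; translating back to $\Pers_*(\Fcal')$ is $O(\filtcnt)$ via the bijection $\phi$ of Theorem~\ref{thm:fzz-intv-map}. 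Everything other than the chain-map verification is assembling already-available pieces and a counting argument, so the hard part really is showing that the collapse-induced substitution turns the maintained representatives of $\tilde{\Ucal}$ into valid representatives of $\Ucal'$.
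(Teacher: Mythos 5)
Your proposal matches the paper's proof essentially step for step: construct $\tilde{\Ucal}$ by an inward expansion in the middle of $\Ucal$ followed by $O(\filtcnt)$ forward/backward switches, invoke Proposition~\ref{prop:pers-Utild-Uprime-almost-same} to read off $\Pers_*(\Ucal')$, re-write the representatives (with the verification deferred to Section~\ref{sec:out-contrac-rewrite-rep}), and certify correctness via Proposition~\ref{prop:pair-correct}. The one stylistic difference is that you package the re-writing as a single collapse-induced chain map $\sG_1\mapsto\sG_0$, $\sG_2\mapsto\sG_0$, $\extsimp\mapsto 0$ and argue it commutes with $\partial$, whereas the paper verifies the boundary relation and creator conditions by a direct case analysis; these amount to the same computation, and your framing makes the ``boundary is preserved'' check automatic at the cost of still needing the case-by-case argument for the creator conditions.
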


\subsection{Details of representative re-writing}
\label{sec:out-contrac-rewrite-rep}
Notice that representatives for $q$-pairs with $q<\Dim-1$ or $q>\Dim+1$ do not change
from $\tilde{\Ucal}$ to $\Ucal'$
because both the cells  and their adjacency in the representatives stay 
the same.

\paragraph{$(\Dim-1)$-pairs.}
For a $(\Dim-1)$-pair  in $\Pers_*(\tilde{\Ucal})$,
we only need to re-write its representative 
when the $\Dim$-chain (denoted as $A$) in the representative
contains $\sG_1$ or $\sG_2$.
If $A$ contains both $\sG_1$ and $\sG_2$,
then we can delete $\sG_1$ and $\sG_2$ from $A$
and the boundary of $A$ does not change ($\partial(\sG_1+\sG_2)=0$).
Notice that 
if the $(\Dim-1)$-pair is closed-open,
the creator of $A$ does not change 
by deleting $\sG_1$ and $\sG_2$ 
because $A$ is not created by $\sG_1$ ($\sG_2$ is added later than $\sG_1$)
and is not created by $\sG_2$ ($\add{\sG_2}$ forms a $\Dim$-pair 
as in Proposition~\ref{prop:pers-Utild-Uprime-almost-same}).
Similarly,
if the $(\Dim-1)$-pair is open-closed,
the creator of $A$ also does not change.
If $A$ contains only $\sG_1$ or only $\sG_2$,
then we can replace  $\sG_1$ or $\sG_2$ in $A$ by $\sG_0$
without changing the boundary and the  creator of $A$.
In all cases,
the resulting representative is a valid one for the same pair in $\Pers_*(\Ucal')$.

\paragraph{$\Dim$-pairs.}
First consider a  closed-open $\Dim$-pair $(\add{\eta},\add{\gamma})\in\Pers_*(\tilde{\Ucal})$
with a representative  $(z,A)$,
where $\eta$ is a $\Dim$-cell and $\gamma$ is a $(\Dim+1)$-cell.
Since we only consider pairs in $\Pers_*(\tilde{\Ucal})$ having a correspondence in $\Pers_*(\Ucal')$,
we assume that $(\add{\eta},\add{\gamma})\neq(\add{\sG_2},\add{\extsimp})$.
Denote the corresponding pair in 
$\Pers_*(\Ucal')$ as $(\add{\eta'},\add{\gamma})$,
where $\eta'\neq\eta$ only when ${\eta}={\sG_1}$ (see Proposition~\ref{prop:pers-Utild-Uprime-almost-same}).
We have the following cases:
\begin{enumerate}
    \item\label{itm:sG1-nin-z-sG2-nin-z-oG-nin-A} 
    $\extsimp\not\in A$ and $\sG_1,\sG_2\not\in z$:
    Since $\sG_1\not\in z$, the creator of $z$ cannot be $\sG_1$
    and therefore $\eta\neq\sG_1$.
    So we have $\eta'=\eta$.
    Notice that $A$ may contain some cofaces of $\sG_1$ or $\sG_2$
    but the number of cofaces of $\sG_1$ (resp.\ $\sG_2$) in $A$ must be even.
    So if we consider $A$ as a  chain in $\Ucal'$,
    $A$ must contain an even number of cofaces of $\sG_0$.
    This means that the boundary of $A$  also equals $z$ in $\Ucal'$
    (i.e., we consider $A$ and $z$ as chains in $\Ucal'$).
    By Definition~\ref{dfn:ud-rep}, $(z,A)$ is also a  representative for 
    $(\add{\eta'},\add{\gamma})\in\Pers_*(\Ucal')$.
    
    \item\label{itm:sG1-in-z-sG2-nin-z-oG-nin-A} 
    $\extsimp\not\in A$, $\sG_1\in z$, and $\sG_2\not\in z$:
    $A$ contains an odd number of cofaces of $\sG_1$
    and an even number of cofaces of $\sG_2$.
    So the chain $A$ in $\Ucal'$ contains an odd number of cofaces of $\sG_0$.
    Let the boundary of $A$ in $\Ucal'$ be $z'$.
    Then, the only difference of $z$ and $z'$ is that $\sG_1$ in $z$
    is replaced by $\sG_0$ in $z'$.
    Since  the addition of $\sG_1$ in $\tilde{\Ucal}$ is at the same position as
     the addition of $\sG_0$ in $\Ucal'$,
    we have that $z'$ is also  created by
    $\eta'$
    in $\Ucal'$.
    So $(z',A)$ is a representative for 
    $(\add{\eta'},\add{\gamma})\in\Pers_*(\Ucal')$.
    
    \item\label{itm:sG1-nin-z-sG2-in-z-oG-nin-A} 
    $\extsimp\not\in A$, $\sG_1\not\in z$, and $\sG_2\in z$:
    Since $\sG_1\not\in z$, we have $\eta'=\eta$ (see Case~\ref{itm:sG1-nin-z-sG2-nin-z-oG-nin-A}).
    Let the boundary of $A$ in $\Ucal'$ be $z'$.
    Then, the only difference of $z$ and $z'$ is that $\sG_2$ in $z$
    is replaced by $\sG_0$ in $z'$ (see Case~\ref{itm:sG1-in-z-sG2-nin-z-oG-nin-A}).
    Since  $\sG_2\in z$ and the creator of $z$   is $\eta\neq\sG_2$,
    the addition of $\sG_2$
    must be before the addition of $\eta$
    in $\tilde{\Ucal}$.
    This implies that
    the addition of $\sG_0$ 
    is also before the addition of $\eta'=\eta$
    in $\Ucal'$,
    which indicates that $z'$ is also created by $\eta'$.
    Hence, $(z',A)$ is a representative for 
    $(\add{\eta'},\add{\gamma})\in\Pers_*(\Ucal')$.
    
    \item $\extsimp\not\in A$ and $\sG_1,\sG_2\in z$:
    We have 
    $\eta\neq\sG_1$
    because $\sG_2\in z$ is added after $\sG_1$.
    We hence  have $\eta'=\eta$.
    Let the boundary of $A$ in $\Ucal'$ be $z'$.
    Since $\sG_1$ and $\sG_2$
    are identified as the same cell in $\Ucal'$, 
    $\sG_1$ and $\sG_2$  are cancelled out in $z'$.
    Since $\eta\neq\sG_1$ and $\eta\neq\sG_2$,
    we have that $z'$ is also created by $\eta'=\eta$.
    So $(z',A)$ is a representative for 
    $(\add{\eta'},\add{\gamma})\in\Pers_*(\Ucal')$.
    
    \item $\extsimp\in A$:
    Let $A'=A\setminus\Set{\extsimp}$
    and $z'=z+\sG_1+\sG_2$.
    We have that $\partial(A')=\partial(A)+\partial(\extsimp)=z'$.
    Since $\sG_1$ and $\sG_2$ are identified as the same cell
     in $\Ucal'$,
     $\sG_1+\sG_2$  are cancelled out in $z'$ if we consider $z'$
     as a chain in $\Ucal'$.
    Notice that deleting $\extsimp$ from $A$ does not change the creator
    because $\gamma\neq\extsimp$.
    So  $A'$ is also created by $\gamma$.
     The remaining re-writing can then be seen in Case 1--4.

\end{enumerate}

Notice that re-writing representatives for open-closed $\Dim$-pairs in 
$\Pers_*(\tilde{\Ucal})$ is symmetric to the re-writing for closed-open $\Dim$-pairs
described above.
Now consider a  closed-closed $\Dim$-pair $(\add{\eta},\del{\gamma})\in\Pers_*(\tilde{\Ucal})$
with a representative tuple $(z_1,A,z_2)$.
If $\extsimp\not\in A$, re-writing $z_1$ and $z_2$ is similar as for closed-open $\Dim$-pairs
with a possible replacement of $\sG_1$ or $\sG_2$ with $\sG_0$
or a cancel out of $\sG_1+\sG_2$.
If $\extsimp\in A$, then 
let $A'=A\setminus\Set{\extsimp}$.
We have that $\partial(A')=\partial(A)+\partial(\extsimp)=z_1+z_2+\sG_1+\sG_2$.
Since $\sG_1+\sG_2$ are cancelled out  in $\Ucal'$,
the remaining process is the same as for the case of $\extsimp\not\in A$.

\paragraph{$(\Dim+1)$-pairs.}
Let $(z,A)$ be a representative for 
a closed-open 
$(\Dim+1)$-pair $(\add{\eta},\add{\gamma})\in\Pers_*(\tilde{\Ucal})$.
Notice  that the corresponding pair in $\Pers_*({\Ucal}')$
is also $(\add{\eta},\add{\gamma})$ because
$\eta$ is a $(\Dim+1)$-cell
and $\gamma$ is a $(\Dim+2)$-cell (see Proposition~\ref{prop:pers-Utild-Uprime-almost-same}).
We have  $\extsimp\not\in z=\partial(A)$ because $\extsimp$ has no cofaces in $\tilde{\Ucal}$.
So $(z,A)$ is automatically a representative for 
$(\add{\eta},\add{\gamma})\in\Pers_*(\Ucal')$.
Re-writing representatives for 
open-closed 
$(\Dim+1)$-pairs is the same.
Now consider a  closed-closed $(\Dim+1)$-pair $(\add{\eta},\del{\gamma})\in\Pers_*(\tilde{\Ucal})$
with a representative  $(z_1,A,z_2)$.
We have that $\eta\neq\extsimp$, $\gamma\neq\extsimp$,
and the corresponding pair in $\Pers_*({\Ucal}')$
is the same.
The fact that $z_1+z_2=\partial(A)$
implies that $\extsimp\not\in z_1+z_2$.
So $\extsimp$ is in neither of $z_1$, $z_2$
or is in both of them.
If $\extsimp$ is not in the two cycles,
then $(z_1,A,z_2)$ is
automatically a representative for 
the pair $(\add{\eta},\del{\gamma})\in\Pers_*(\Ucal')$.
If $\extsimp$ is in both $z_1$ and $z_2$,
let $z'_1=z_1\setminus\Set{\extsimp}$
and $z'_2=z_2\setminus\Set{\extsimp}$
be two chains in $\tilde{\Ucal}$.
Then, $\partial(z'_1)=\partial(z_1)+\partial(\extsimp)=\sG_1+\sG_2$.
Similarly, $\partial(z'_2)=\sG_1+\sG_2$.
Furthermore, we have $z'_1+z'_2=z_1+z_2=\partial(A)$.
Notice that if we consider $z'_1$ and $z'_2$ as chains in $\Ucal'$,
then $z'_1$ and $z'_2$ become cycles because $\sG_1$ and $\sG_2$
are identified as the same cell in $\Ucal'$.
Also, the creator of $z_1$ and $z'_1$ are the same
and the creator of $z_2$ and $z'_2$ are the same
because $z_1$ and $z_2$ are not created by $\extsimp$.
So $(z'_1,A,z'_2)$
is a representative for 
the pair $(\add{\eta},\del{\gamma})\in\Pers_*(\tilde{\Ucal})$.

\section{Outward expansion}\label{sec:fzz-out-expan} 
An outward expansion is the reverse operation of an outward contraction 
in Equation~(\ref{eqn:out-contrac-in-sec}),
and
 we let $\Fcal$, $\Fcal'$, and $\sG$
be as defined in Equation~(\ref{eqn:out-contrac-in-sec})
throughout the section,
where $\sG$
is also a $\Dim$-simplex.
Notice that 
for outward expansion,
we are given the barcode and representatives for 
$\Ucal'$ (as in Section~\ref{sec:fzz-out-contra}),
and the goal is to compute 
those for 
$\Ucal$.
We observe  the following adjacency change reverse to that for outward contraction:
\begin{observation}
\label{obsv:out-expan-adj-change}
After the outward expansion,
the boundary $\Dim$-cell for some $(\Dim+1)$-cofaces of $\sG_0$
in $\Ucal'$ becomes $\sG_1$ in $\Ucal$,
while the boundary $\Dim$-cell for the other $(\Dim+1)$-cofaces of $\sG_0$
in $\Ucal'$ becomes $\sG_2$ in $\Ucal$.
\end{observation}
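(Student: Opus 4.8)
The plan is to obtain this observation as the mirror image of the adjacency analysis carried out for outward contraction in Section~\ref{sec:fzz-out-contra}, now read in the direction $\Fcal'\to\Fcal$ instead of $\Fcal\to\Fcal'$. Since $\Fcal$, $\Fcal'$, and $\sG$ are exactly those of Equation~(\ref{eqn:out-contrac-in-sec}), I would reuse the $\DG$-complex filtrations $\hat{\Fcal}$ and $\hat{\Fcal}'$, the $\DG$-cells $\sG_0$ (in $\hat{\Fcal}'$) and $\sG_1,\sG_2$ (in $\hat{\Fcal}$), and the coface sets $\SG_1,\SG_2$ together with their $\DG$-cell counterparts $\hat{\SG}_1,\hat{\SG}_2$, all as defined there. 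Recall that $\SG_1$ collects the $(\Dim+1)$-cofaces of $\sG$ added while the first instance of $\sG$ is present in $\Fcal$ (between $K_j\inctosp{\sG}K_{j+1}$ and $K_{i-1}\bakinctosp{\sG}K_i$) and $\SG_2$ collects those added while the second instance is present (between $K_i\inctosp{\sG}K_{i+1}$ and $K_k\bakinctosp{\sG}K_{k+1}$).

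The first step is to identify the $(\Dim+1)$-cofaces of $\sG_0$ in $\hat{\Fcal}'$. Because an outward expansion is the reverse of an outward contraction, the deletion $K_{i-1}\bakinctosp{\sG}K_i$ and the addition $K_i\inctosp{\sG}K_{i+1}$ are absent from $\Fcal'$, so the instance $\sG_0$ of $\sG$ is present in $\Fcal'$ continuously from its addition at $K_j\inctosp{\sG}K_{j+1}$ to its deletion at $K_k\bakinctosp{\sG}K_{k+1}$. As $\Fcal'$ is simplex-wise, every step inserts or removes exactly one simplex, so no step of $\Fcal'$ can simultaneously delete or re-add $\sG$ and insert a coface of $\sG$; hence every $(\Dim+1)$-coface of $\sG_0$ is inserted strictly between the addition and the deletion of $\sG_0$. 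Comparing $\Fcal$ with $\Fcal'$ step by step, the cofaces inserted before the (now-removed) deletion $K_{i-1}\bakinctosp{\sG}K_i$ are precisely those of $\SG_1$ and those inserted after the (now-removed) addition $K_i\inctosp{\sG}K_{i+1}$ are precisely those of $\SG_2$; the two sets are disjoint since distinct insertions produce distinct $\DG$-cells, and together they exhaust the $(\Dim+1)$-cofaces of $\sG_0$. Passing to $\DG$-cells, the $(\Dim+1)$-cofaces of $\sG_0$ in $\hat{\Fcal}'$ --- and hence in $\Ucal'$, which is obtained from $\hat{\Fcal}'$ by merely reordering the cell-wise inclusions without changing boundaries --- are exactly $\hat{\SG}_1\sqcup\hat{\SG}_2$, all sharing $\sG_0$ as their common boundary $\Dim$-cell.

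The second step transfers boundaries back through the conversion. As noted in Section~\ref{sec:fzz-out-contra}, adjacency of cells in $\Fcal$ and $\hat{\Fcal}$ coincide, so in $\hat{\Fcal}$ every cell of $\hat{\SG}_1$ has $\sG_1$ in its boundary and every cell of $\hat{\SG}_2$ has $\sG_2$ in its boundary; since the conversion from $\hat{\Fcal}$ to $\Ucal$ preserves these boundaries, the same holds in $\Ucal$. Interpreting ``some $(\Dim+1)$-cofaces of $\sG_0$'' as $\hat{\SG}_1$ and ``the other $(\Dim+1)$-cofaces'' as $\hat{\SG}_2$ then gives the claim. The only genuinely delicate point is the bookkeeping in the first step: verifying that the temporal split at index $i$ partitions the $(\Dim+1)$-cofaces of $\sG_0$ exactly into $\hat{\SG}_1$ and $\hat{\SG}_2$, with nothing missed and nothing double-counted. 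Everything else is a direct quotation of the setup of Section~\ref{sec:fzz-out-contra} with the roles of $\Fcal$ and $\Fcal'$ interchanged.
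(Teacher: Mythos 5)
Your proof is correct and follows essentially the same approach the paper intends: the observation is introduced as ``the adjacency change reverse to that for outward contraction'' with no separate argument, so the justification is exactly to re-read the $\sG_0,\sG_1,\sG_2,\hat{\SG}_1,\hat{\SG}_2$ analysis of Section~\ref{sec:fzz-out-contra} in the opposite direction, which is what you do. Your bookkeeping that the $(\Dim+1)$-cofaces of $\sG_0$ in $\Ucal'$ are exactly $\hat{\SG}_1\sqcup\hat{\SG}_2$ just spells out a detail the paper leaves implicit.
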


We thereby do the following:
\begin{enumerate}
    \item 
    Construct $\tilde{\Ucal}$ (as in Figure~\ref{fig:fzz-out-contrac}) 
    by first
    inserting $\add{\sG_2}$ and $\add{\extsimp}$
    into the ascending part $\Ucal'_u$ of $\Ucal'$;
    the position of the insertion
    can be  derived from the position of the expansion in the original $\Fcal$ and $\Fcal'$.
    Identify $\sG_0$ as the right-most $\DG$-cell in $\Ucal'_u$ 
     corresponding to  $\sG$ 
    added before the newly inserted $\sG_2$.
    Rename $\sG_0$ as $\sG_1$.
    Change
the boundary cell $\sG_0$ (renamed as $\sG_1$)
to $\sG_2$ for those $(\Dim+1)$-cofaces of $\sG_0$
in  $\Ucal'_u$ 
added after the newly inserted  ${\sG_2}$.
This finishes the construction of the ascending part of $\tilde{\Ucal}$.
Constructing the descending part of $\tilde{\Ucal}$
is symmetric.

    \item\label{itm:out-expan-rewrite-step}
    Re-write the representatives for $\Pers_*(\Ucal')$
    to get the representatives for the
    the corresponding pairs in $\Pers_*(\Tilde{\Ucal})$
    (see Proposition~\ref{prop:pers-Utild-Uprime-almost-same}).
We also  let the two extra pairs 
$(\add{\sG_2},\add{\extsimp})$ 
and $(\del{\extsimp},\del{\sG_1})$ have the representative
$(\sG_1+\sG_2,\extsimp)$
and
$(\extsimp,\sG_1+\sG_2)$
respectively.
Details of the representative re-writing
are provided in Section~\ref{sec:out-expan-rep-rewri}.

    \item From $\Pers_*(\tilde{\Ucal})$ and the representatives,
    obtain $\Pers_*({\Ucal})$ and the representatives
    in $O(\filtcnt^2)$ time 
    (see the algorithm for inward contraction
    in Section~\ref{sec:fzz-in-contra}). 
\end{enumerate}

We conclude: 

\begin{theorem}
Zigzag barcode for an outward expansion can
be updated in $O(\filtcnt^2)$ time.
\end{theorem}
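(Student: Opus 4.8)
The plan is to establish the theorem by reducing the outward expansion to the already-understood machinery: the inward expansion/contraction algorithm (which operates on up-down filtrations in $O(\filtcnt^2)$ time) and a bounded-cost representative re-writing step. Since an outward expansion is exactly the reverse of an outward contraction from Equation~(\ref{eqn:out-contrac-in-sec}), the starting data is $\Pers_*(\Ucal')$ together with its representatives, and the target is $\Pers_*(\Ucal)$ with representatives. By Observation~\ref{obsv:out-expan-adj-change}, the adjacency change that must be effected is the splitting of $\sG_0$ into two $\Dim$-cells $\sG_1,\sG_2$, with each $(\Dim+1)$-coface of $\sG_0$ re-assigned to one of the two copies according to its position. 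As in Section~\ref{sec:fzz-out-contra}, this split can be realized by passing through the intermediate filtration $\tilde{\Ucal}$ obtained by inserting $\add{\sG_2}$, $\add{\extsimp}$ (and symmetrically $\del{\extsimp}$, $\del{\sG_1}$) and using the collapsing cell $\extsimp$ with $\partial(\extsimp)=\sG_1+\sG_2$.

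First I would carry out Step 1: build the ascending and descending parts of $\tilde{\Ucal}$ explicitly from $\Ucal'$. The insertion positions are determined purely combinatorially from the position of the expansion in $\Fcal$ and $\Fcal'$ via the bijection $\phi$ of Theorem~\ref{thm:fzz-intv-map}, so this is $O(\filtcnt)$ work once we locate the right-most copy of $\sG$ and re-label its cofaces; the re-labeling touches at most $O(\filtcnt)$ cells and hence is $O(\filtcnt)$ overall (or $O(\filtcnt^2)$ if one is pessimistic about coface enumeration, which is still within budget). Next, Step 2: re-write each representative for a pair in $\Pers_*(\Ucal')$ into a representative for the corresponding pair in $\Pers_*(\tilde{\Ucal})$, and supply the two extra pairs $(\add{\sG_2},\add{\extsimp})$ and $(\del{\extsimp},\del{\sG_1})$ with the explicit representatives $(\sG_1+\sG_2,\extsimp)$ and $(\extsimp,\sG_1+\sG_2)$. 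This is the reverse of the re-writing in Section~\ref{sec:out-contrac-rewrite-rep}: when a chain in a $\Ucal'$-representative contains an odd (resp.\ even) number of cofaces of $\sG_0$, its image in $\tilde{\Ucal}$ must contain the corresponding cofaces split between $\sG_1$ and $\sG_2$, and one adjusts the boundary cycle by adding $\sG_1$ or $\sG_2$ (or $\sG_1+\sG_2$, absorbed by prepending $\extsimp$ to the $(\Dim+1)$-chain) to keep $z=\partial(A)$, $z_1+z_2=\partial(A)$ intact. The correctness of each such local edit is checked exactly as in the five-case analysis of Section~\ref{sec:out-contrac-rewrite-rep}, and each edit is $O(\filtcnt)$, giving $O(\filtcnt^2)$ total. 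Finally, Step 3: apply the inward-contraction algorithm of Section~\ref{sec:fzz-in-contra} to $\tilde{\Ucal}$ to remove the pair structure associated with $\extsimp$ and obtain $\Pers_*(\Ucal)$ with representatives in $O(\filtcnt^2)$ time; the validity of this last passage is an instance of Proposition~\ref{prop:pers-Utild-Uprime-almost-same} read in the reverse direction, together with Proposition~\ref{prop:pair-correct} applied to the re-written representatives.

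I expect the main obstacle to be Step 2: verifying that the representative re-writing is both well-defined and correct, i.e.\ that after splitting $\sG_0$ the proposed tuples still satisfy the defining boundary relations of Definition~\ref{dfn:ud-rep} and have the correct creators. The subtlety is that the creator of a cycle must be preserved under the re-labeling, which relies on the positional alignment between $\add{\sG_0}$ in $\Ucal'$ and $\add{\sG_1}$ in $\tilde{\Ucal}$ (and symmetrically on the descending side), and on the fact that $\add{\sG_2}$ lies strictly later than $\add{\sG_1}$ so it cannot usurp creatorship — exactly the delicate points handled case-by-case in Section~\ref{sec:out-contrac-rewrite-rep}. Once these local checks go through, Proposition~\ref{prop:pers-Utild-Uprime-almost-same} and Proposition~\ref{prop:pair-correct} assemble the pieces, and the time bound is the sum of three $O(\filtcnt^2)$ contributions, yielding the claimed $O(\filtcnt^2)$ update.
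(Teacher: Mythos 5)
Your proposal follows the paper's proof precisely: the same three-step reduction through the intermediate filtration $\tilde{\Ucal}$ (inserting $\add{\sG_2},\add{\extsimp}$ and symmetrically $\del{\extsimp},\del{\sG_1}$ to realize the $\sG_0\mapsto\{\sG_1,\sG_2\}$ split), the same representative re-writing step with the two explicit extra representatives $(\sG_1+\sG_2,\extsimp)$ and $(\extsimp,\sG_1+\sG_2)$, and the same final invocation of the inward-contraction algorithm of Section~\ref{sec:fzz-in-contra} to pass from $\tilde{\Ucal}$ to $\Ucal$ in $O(\filtcnt^2)$ time. The only slight misattribution is citing Proposition~\ref{prop:pers-Utild-Uprime-almost-same} for Step~3's validity (the paper uses it for the Step~2 correspondence; Step~3's validity comes from the inward-contraction machinery itself), but this does not affect the substance of the argument.
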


\subsection{Details of representative re-writing}
\label{sec:out-expan-rep-rewri}

Notice that similarly as in Section~\ref{sec:out-contrac-rewrite-rep},
representatives for $q$-pairs with $q<\Dim-1$ or $q>\Dim+1$ do not change
from $\Ucal'$ to $\tilde{\Ucal}$.

\paragraph{$(\Dim-1)$-pairs.}
For a $(\Dim-1)$-pair  $(\addel{\eta},\addel{\gamma})\in\Pers_*({\Ucal'})$,
we only need to re-write its representative 
when  the $\Dim$-chain (denoted as $A$) in the representative
contains $\sG_0$.
If $(\addel{\eta},\addel{\gamma})$ is  closed-open or  closed-closed,
we replace $\sG_0$ by $\sG_1$ in $A$;
otherwise, we replace $\sG_0$ by $\sG_2$ in $A$.
It can be verified that the re-written representative is a valid one
for the corresponding pair in $\Pers_*(\tilde{\Ucal})$.

\paragraph{$\Dim$-pairs.}
First consider a  closed-open $\Dim$-pair $(\add{\eta},\add{\gamma})\in\Pers_*({\Ucal}')$
with a representative  $(z,A)$.
We have the following cases:
\begin{itemize}
    \item $\sG_0\not\in z$:
    The chain $A$ may contain some cofaces of $\sG_0$ 
    but the number of cofaces of $\sG_0$ in $A$ must be even.
    If we consider $A$ as a  chain in $\tilde{\Ucal}$,
    we have: (i) $A$ contains an even number of cofaces of $\sG_1$ 
    and an even number of cofaces of $\sG_2$;
    (ii) $A$ contains an odd number of cofaces of $\sG_1$ 
    and an odd number of cofaces of $\sG_2$.
    In the former case,  the boundary of $A$ still equals $z$ in $\tilde{\Ucal}$
    so that 
    $(z,A)$ is still a  representative for 
    the corresponding pair in
    $\Pers_*(\tilde{\Ucal})$.
    In the latter case, $\partial(A)=z+\sG_1+\sG_2$ in $\tilde{\Ucal}$.
    Letting $A'=A+\extsimp$, we have $\partial(A')=z$ in $\tilde{\Ucal}$.
    Moreover, $A'$ is still  created by $\gamma$.
    To see this,
    first notice that $A$ contains
    a coface of $\sG_2$ in $\tilde{\Ucal}$.
    Then,
    as the creator of $A$, the cell $\gamma$ must be added
    after a coface of $\sG_2$
    in $\tilde{\Ucal}$,
    which means that $\gamma$ is added after $\sG_2$
    in $\tilde{\Ucal}$.
    So we have $\gamma$ is added after 
    ${\extsimp}$ in $\tilde{\Ucal}$
    and so $A'$ is still  created by $\gamma$.
    Therefore, $(z,A')$ is a representative for 
    the corresponding pair in
    $\Pers_*(\tilde{\Ucal})$.
    
    \item $\sG_0\in z$:
    If we consider $A$ as a  chain in $\tilde{\Ucal}$,
    we have: (i) $A$ contains an odd number of cofaces of $\sG_1$ 
    and an even number of cofaces of $\sG_2$;
    (ii) $A$ contains an even number of cofaces of $\sG_1$ 
    and an odd number of cofaces of $\sG_2$.
    Let $z'$ be the boundary of $A$ where we consider both $z'$ and $A$ as
    chains in $\tilde{\Ucal}$.
    In case~(i), 
    the only difference of $z$ and $z'$ is that $\sG_0$ in $z$
    is replaced by $\sG_1$ in $z'$.
    Since the addition of $\sG_0$ in $\Ucal'$
    and the addition of $\sG_1$ in $\tilde{\Ucal}$
    are at the same position,
    the creator of $z$ and $z'$ are the same if 
    $\eta\neq\sG_0$
    and the creator of $z'$ becomes ${\sG_1}$
    if $\eta=\sG_0$.
    Either way,
    $(z',A)$ is a  representative for 
    the corresponding pair in
    $\Pers_*(\tilde{\Ucal})$.
    In case (ii), 
    we have that $\sG_0$ in $z$
    is replaced by $\sG_2$ in $z'$.
    Letting $A'=A+\extsimp$,
    we have $\partial(A')=\partial(A)+\partial(\extsimp)=z'+\sG_1+\sG_2=z'\setminus\Set{\sG_2}\union\Set{\sG_1}$.
    Letting $z'':=z'\setminus\Set{\sG_2}\union\Set{\sG_1}$,
    we have that 
    the only difference of $z$ and $z''$ is that $\sG_0$ in $z$
    is replaced by $\sG_1$ in $z''$.
    With arguments similar as for the  case of $\sG_0\not\in z$,
    we have that
    $(z'',A')$ is a representative for 
    the corresponding pair in
    $\Pers_*(\tilde{\Ucal})$.
\end{itemize}

Re-writing representatives for open-closed $\Dim$-pairs in 
$\Pers_*(\Ucal')$ is symmetric to the re-writing for closed-open $\Dim$-pairs
described above.
Now consider a  closed-closed $\Dim$-pair $(\add{\eta},\del{\gamma})\in\Pers_*(\Ucal')$
with a representative tuple $(z_1,A,z_2)$,
where $z_1+z_2=\partial(A)$.
We have the following cases:
\begin{itemize}
    \item $\sG_0\not\in z_1+z_2$:
    We have
    the following sub-cases:
    \begin{itemize}
        \item $\sG_0$ is in neither of $z_1$ and $z_2$:
        If we consider $A$ as a chain in $\tilde{\Ucal}$,
    then $\partial(A)$  equals $z_1+z_2$ 
    or $z_1+z_2+\sG_1+\sG_2$ (see the representative re-writing for 
    closed-open $\Dim$-pairs). 
    If $\partial(A)=z_1+z_2$ in $\tilde{\Ucal}$, let $A'=A$;
    otherwise, let $A'=A+\extsimp$.
    It can  be verified that
    $(z_1,A',z_2)$ is a valid
    representative for the corresponding pair in
    $\Pers_*(\tilde{\Ucal})$.

        \item
    $\sG_0$ is in both $z_1$ and $z_2$:
    Let $z'_1$ (resp.\ $z'_2$) be a chain derived from $z_1$ (resp.\ $z_2$)
    by replacing $\sG_0$ with $\sG_1$ (resp.\ $\sG_2$).
    Notice that $z'_1+z'_2=z_1+z_2+\sG_1+\sG_2$.
    Moreover,
    if we consider $A$ as a chain in $\tilde{\Ucal}$,
    then $\partial(A)$  equals $z_1+z_2$ 
    or $z_1+z_2+\sG_1+\sG_2$.
    If $\partial(A)=z_1+z_2$ in $\tilde{\Ucal}$, let $A'=A+\extsimp$;
    otherwise, let $A'=A$.
    It can  be verified that
    $(z'_1,A',z'_2)$ is a valid
    representative for the corresponding pair in
    $\Pers_*(\tilde{\Ucal})$.
        \end{itemize}
    
    \item $\sG_0\in z_1+z_2$:
    We have
    the following sub-cases:
    \begin{itemize}
    \item $\sG_0\in z_1$, $\sG_0\not\in z_2$: 
    Based on the representative re-writing for 
    closed-open $\Dim$-pairs,
    the boundary of $A$ in $\tilde{\Ucal}$
    is derived from
    its boundary in $\Ucal'$ with $\sG_0$  replaced by $\sG_1$
    or $\sG_2$.
    If $\sG_0$ is replaced by $\sG_1$ in the boundary of $A$
    from $\Ucal'$ to $\tilde{\Ucal}$, 
    let $A'=A$;
    otherwise, let $A'=A+\extsimp$.
    Notice that 
    the boundary of $A'$ in $\tilde{\Ucal}$ always equals
    the boundary of $A$ in ${\Ucal}'$
    with $\sG_0$  replaced by $\sG_1$.
    Moreover, let $z'_1$ be a chain derived from $z_1$ 
    by replacing $\sG_0$ with $\sG_1$.
    Then, $(z'_1,A',z_2)$ is a valid
    representative for the corresponding pair in
    $\Pers_*(\tilde{\Ucal})$.
    
    \item $\sG_0\not\in z_1$, $\sG_0\in z_2$: 
    The re-writing is symmetric.
    \end{itemize}
\end{itemize}

\paragraph{$(\Dim+1)$-pairs.}
First consider a closed-open $(\Dim+1)$-pair in $\Pers_*({\Ucal}')$
with a representative  $(z,A)$,
where $z$ is a $(\Dim+1)$-cycle and
$A$ is a $(\Dim+2)$-chain.
We have that the boundary of $A$ is still $z$ in $\tilde{\Ucal}$
and so $z$ is still a cycle in $\tilde{\Ucal}$.
Therefore, $(z,A)$ is also a representative for the
corresponding pair in $\Pers_*(\tilde{\Ucal})$.
For open-closed $(\Dim+1)$-pairs in $\Pers_*({\Ucal}')$,
we also do not need to change the representatives.
For a closed-closed $(\Dim+1)$-pair  $(\add{\eta},\del{\gamma})\in\Pers_*({\Ucal}')$
with a representative  $(z_1,A,z_2)$,
we have that $z_1+z_2$ is still a cycle in $\tilde{\Ucal}$.
However, since $z_1$ (resp. $z_2$) may contain an even number of cofaces of $\sG_1$
and an even number of cofaces of $\sG_2$ in $\tilde{\Ucal}$,
the boundary of $z_1$ (resp. $z_2$) may become $\sG_1+\sG_2$.
If $z_1$ and $z_2$ are still cycles in  $\tilde{\Ucal}$,
then $(z_1,A,z_2)$ is also a representative for the
corresponding pair in $\Pers_*(\tilde{\Ucal})$;
otherwise, it can be verified that $(z_1+\extsimp,A,z_2+\extsimp)$ is  a representative for the
corresponding pair in $\Pers_*(\tilde{\Ucal})$.

\section{Inward contraction}
\label{sec:fzz-in-contra}

Consider the following inward contraction from $\Fcal$ to $\Fcal'$:
\begin{equation*}
\begin{tikzpicture}[baseline=(current  bounding  box.center)]
\node (a) at (0,0) {$\Fcal: K_0 \leftrightarrow
\cdots\leftrightarrow K_{i-2}
\leftrightarrow 
K_{i-1}\inctosp{\sG} 
K_i 
\bakinctosp{\sG} K_{i+1}
\leftrightarrow K_{i+2}\leftrightarrow
\cdots \leftrightarrow K_\filtcnt$}; 
\node (b) at 
(0,-1.02){$\Fcal': K_0 \leftrightarrow
\cdots\leftrightarrow K_{i-2}
\leftrightarrow 
K'_{i}
\leftrightarrow K_{i+2}\leftrightarrow
\cdots \leftrightarrow K_\filtcnt$};
\path[->] ([xshift=7pt]a.south) edge[double distance=2pt,arrows=-{Classical TikZ Rightarrow[length=4pt]}]  ([xshift=7pt]b.north);
\end{tikzpicture}
\end{equation*}
Let $\Ucal$ and $\Ucal'$ be the up-down filtrations corresponding to
$\Fcal$ and $\Fcal'$ respectively.
Moreover, let $\hat{\sG}$ be the $\DG$-cell corresponding to 
the addition of the simplex $\sG$ to $K_{i-1}$ in $\Fcal$.
We first perform forward and backward switches on $\Ucal$ to 
place the additions and deletions of $\hat{\sG}$ in the center,
obtaining the filtration 
\begin{equation*}
\Ucal^+:
\bullet
\inctosp{\usimp_0}
\bullet
\inctosp{\usimp_{1}}
\bullet
\cdots
\bullet\inctosp{\usimp_{\simpcnt-2}} 
\bullet
\inctosp{\hat{\sG}} 
\bullet
\bakinctosp{\hat{\sG}}
\bullet
\bakinctosp{\usimp_{\simpcnt-1}}
\bullet
\bakinctosp{\usimp_{\simpcnt}}
\bullet
\cdots
\bullet
\bakinctosp{\usimp_{\filtcnt-3}}
\bullet
\end{equation*}
as in Equation~(\ref{eqn:U-plus})
which we also call $\Ucal^+$.
Hence, we can obtain $\Pers_*(\Ucal^+)$ and its representatives
in $O(\filtcnt^2)$ time.
To 
perform the update 
from $\Ucal^+$ to $\Ucal'$
(an inward contraction on up-down filtrations),
we observe an
interval mapping behavior
(i.e., the alternative re-linking of `unsettled' pairs)
different from the one proposed in~\cite{maria2014zigzag,maria2016computing}
for inward expansion (see, e.g., 
Theorem 2.3
in~\cite{maria2014zigzag}).

The rest of the update has different processes based on whether 
$\add{\hat{\sG}}$ is positive or negative in $\Ucal^+$.

\subsection{${\scriptstyle \searrow}\hat{\sG}$ is positive in $\Ucal^+$}

If $\add{\hat{\sG}}$ is positive in $\Ucal^+$,
then $\add{\hat{\sG}}$ must be paired with $\del{\hat{\sG}}$
in $\Pers_*(\Ucal^+)$
and to obtain $\Pers_*(\Ucal')$
one only needs to delete the pair $(\add{\hat{\sG}},\del{\hat{\sG}})$
from $\Pers_*(\Ucal^+)$; see~\cite[Theorem 2.2]{maria2014zigzag}.
We then show how to obtain  representatives for pairs in $\Pers_*(\Ucal')$.
Ignoring the pair $(\add{\hat{\sG}},\del{\hat{\sG}})$,
call those pairs in $\Pers_*(\Ucal^+)$ 
whose chains in the representatives contain $\hat{\sG}$
as \emph{unsettled}.
 Accordingly,
call those 
pairs in $\Pers_*(\Ucal^+)$ 
whose chains in the representatives do not contain $\hat{\sG}$
as \emph{settled}.
Notice that
representatives for settled pairs stay
the same 
when we consider the pairs to be from $\Pers_*(\Ucal')$.
We also have that only closed-closed pairs in $\Pers_*(\Ucal^+)$  
can be unsettled
because the addition and deletion of $\hat{\sG}$ are at the center of $\Ucal^+$
(see Definition~\ref{dfn:ud-rep}).
Let $\bar{z}$ be a cycle containing $\hat{\sG}$ which is
taken from the representative for $(\add{\hat{\sG}},\del{\hat{\sG}})\in\Pers_*(\Ucal^+)$.
For an unsettled pair $(\add{\eta},\del{\gamma})\in\Pers_*(\Ucal^+)$
with a representative $(z,A,z')$,
we have  $\hat{\sG}\in A$.
Then, $(z,A+\bar{z},z')$
is a representative for 
$(\add{\eta},\del{\gamma})\in\Pers_*(\Ucal')$,
where  $z+z'=\partial(A)=\partial(A+\bar{z})$
and $\hat{\sG}\not\in A+\bar{z}$.

\subsection{${\scriptstyle \searrow}\hat{\sG}$ is negative in $\Ucal^+$}

The update for this case has two steps:
We first preprocess those unsettled
pairs (definition similar as above) where
$\hat{\sG}$ in the representatives
can be easily eliminated
so that the pairs become settled.
After this, the remaining unsettled
pairs satisfy a certain criteria and we use
the alternative re-linking to produce new pairs
for $\Ucal'$.

\paragraph{Step I: Preprocessing.}
Since $\add{\hat{\sG}}$ is negative in $\Ucal^+$,
 $\del{\hat{\sG}}$ must be positive in $\Ucal^+$
(if adding $\hat{\sG}$ decreases the Betti number,
then deleting $\hat{\sG}$ must increase the Betti number).
Let $(\add{\tau},\add{\hat{\sG}})$ and $(\del{\hat{\sG}},\del{\tau'})$
be pairs in $\Pers_*(\Ucal^+)$ containing 
$\add{\hat{\sG}}$ and $\del{\hat{\sG}}$ respectively.
Also suppose that $(\add{\tau},\add{\hat{\sG}})$ 
has the representative $(z_*,A_*)$ and
$(\del{\hat{\sG}},\del{\tau'})$ has the representative $(A_\circ,z_\circ)$.
Moreover, let $\LG$ be the set of unsettled pairs 
in $\Pers_*(\Ucal^+)$.
Notice that
we do not include 
$(\add{\tau},\add{\hat{\sG}})$ and $(\del{\hat{\sG}},\del{\tau'})$
into $\LG$
and so pairs in $\LG$ are all closed-closed.

\begin{definition}
Define a partial order `$\ccleq$' on closed-closed pairs
in $\Pers_*(\Ucal^+)$ s.t.\ 
$(\add{\eta},\del{\gamma})\ccleq(\add{\eta'},\del{\gamma'})$ 
if and only if  $\eta$ is added after $\eta'$
and $\gamma$ is deleted before $\gamma'$ in $\Ucal^+$
(i.e., the persistence interval generated by $(\add{\eta},\del{\gamma})$
is contained in the persistence interval 
generated by $(\add{\eta'},\del{\gamma'})$).
We also say that two pairs
$(\add{\eta},\del{\gamma})$ and $(\add{\eta'},\del{\gamma'})$ 
are \defemph{comparable} if $(\add{\eta},\del{\gamma})\ccleq(\add{\eta'},\del{\gamma'})$ 
or $(\add{\eta'},\del{\gamma'})\ccleq(\add{\eta},\del{\gamma})$.
\end{definition}

We then make certain pairs in $\LG$ settled and delete these pairs from $\LG$
so that $\LG$ satisfies the following:
(i) no two pairs in $\LG$ are comparable;
(ii) for each pair $(\add{\eta},\del{\gamma})$ in $\LG$,
the cell $\eta$ is added before $\tau$
and the cell $\gamma$ is deleted after $\tau'$ in $\Ucal^+$.

First do the following:
\begin{enumerate}
    \item While there is a pair $(\add{\eta},\del{\gamma})$ in $\LG$
    s.t.\ $\eta$ is added after $\tau$ in $\Ucal^+$:
    Let $(z,A,z')$ be the representative maintained for $(\add{\eta},\del{\gamma})$.
    Assign a new representative $(z+z_*,A+A_*,z')$
    to $(\add{\eta},\del{\gamma})$. The new representative  
    is valid because:
    (i) $z+z_*+z'=\partial(A)+\partial(A_*)=\partial(A+A_*)$;
    (ii) $z+z_*$ is also created by $\eta$ because 
    $\eta$ is added after the creator $\tau$ of $z_*$.
    Since $\hat{\sG}\not\in A+A_*$,
    the pair $(\add{\eta},\del{\gamma})$ becomes settled,
    and we  delete $(\add{\eta},\del{\gamma})$ from $\LG$.
    Figure~\ref{fig:in-contr-preproc1} illustrates the change on the representative
    for $(\add{\eta},\del{\gamma})$.

  \begin{figure}[!tb]
  \centering
  \includegraphics[width=0.69\linewidth]{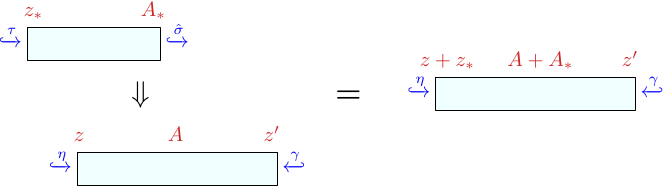}
  \caption{Representative for 
  $({\scriptstyle \searrow}\tau,{\scriptstyle \searrow}\hat{\sG})$ 
  can be considered
    as being `summed to' the representative for
    $({\scriptstyle \searrow}{\eta},{\scriptstyle \nwarrow}{\gamma})$ 
    to obtain a new representative
    for $({\scriptstyle \searrow}{\eta},{\scriptstyle \nwarrow}{\gamma})$.
    $({\scriptstyle \searrow}{\eta},{\scriptstyle \nwarrow}{\gamma})$ then becomes settled.}
  \label{fig:in-contr-preproc1}
  \end{figure}

  \item (This step is symmetric to the previous step.)
  While there is a pair $(\add{\eta},\del{\gamma})$ in $\LG$
    s.t.\ $\gamma$ is deleted before $\tau'$ in $\Ucal^+$:
    Let $(z,A,z')$ be the representative maintained for $(\add{\eta},\del{\gamma})$.
    Assign a new representative $(z,A+A_\circ,z'+z_\circ)$
    to $(\add{\eta},\del{\gamma})$.
    Delete $(\add{\eta},\del{\gamma})$ from $\LG$,
    which becomes settled. 
    Figure~\ref{fig:in-contr-preproc2} illustrates the change on the representative.
    \begin{figure}[!tb]
  \centering
  \includegraphics[width=0.69\linewidth]{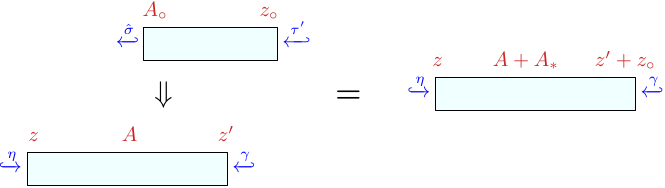}
  \caption{The representative for 
  $({\scriptstyle \nwarrow}\hat{\sG},{\scriptstyle \nwarrow}{\tau'})$ 
  is `summed to' the representative for
    $({\scriptstyle \searrow}{\eta},{\scriptstyle \nwarrow}{\gamma})$ 
    to obtain a new representative
    for $({\scriptstyle \searrow}{\eta},{\scriptstyle \nwarrow}{\gamma})$.
    $({\scriptstyle \searrow}{\eta},{\scriptstyle \nwarrow}{\gamma})$ then becomes settled.}
  \label{fig:in-contr-preproc2}
  \end{figure}
\end{enumerate}

Then do the following:

\begin{itemize}
    \item While there are two pairs in $\LG$ s.t.\ 
    $(\add{\eta},\del{\gamma})\ccleq(\add{\eta'},\del{\gamma'})$:
    Let $(z_l,A,z_r)$ and $(z'_l,A',z'_r)$ be the representatives
    for $(\add{\eta},\del{\gamma})$ and $(\add{\eta'},\del{\gamma'})$
    respectively. Sum $(z'_l,A',z'_r)$ to $(z_l,A,z_r)$ to
    form a new representative $(z_l+z'_l,A+A',z_r+z'_r)$
    for $(\add{\eta},\del{\gamma})$.
    Delete $(\add{\eta},\del{\gamma})$ from $\LG$,
    which becomes settled.
    See Figure~\ref{fig:in-contr-comparable} for an illustration.
\end{itemize}
\begin{figure}[!tb]
  \centering
  \includegraphics[width=0.8\linewidth]{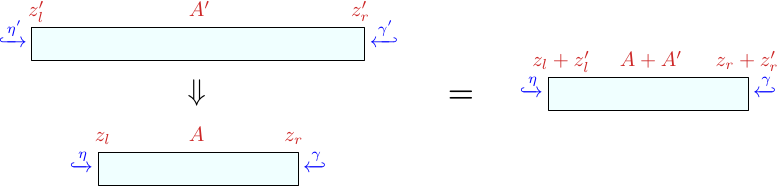}
  \caption{The representative for 
  $({\scriptstyle \searrow}\eta',{\scriptstyle \nwarrow}\gamma')$ 
  is `summed to' the representative for
    $({\scriptstyle \searrow}{\eta},{\scriptstyle \nwarrow}{\gamma})$ 
    to obtain a new representative
    for $({\scriptstyle \searrow}{\eta},{\scriptstyle \nwarrow}{\gamma})$.
    $({\scriptstyle \searrow}{\eta},{\scriptstyle \nwarrow}{\gamma})$ then becomes settled.}
  \label{fig:in-contr-comparable}
\end{figure}

Finally, we let each settled pair in $\Pers_*(\Ucal^+)$
automatically become a pair in $\Pers_*(\Ucal')$
with the same representative.

\paragraph{Step II: Alternatively re-linking unsettled pairs.}
After the preprocessing step, 
if $\LG=\emptyset$,
then $(\add{\tau},\del{\tau'})$ forms a pair in $\Pers_*(\Ucal')$
with a representative $(z_*,A_*+A_\circ,z_\circ)$
and we have finished the update.
If $\LG\neq\emptyset$,
we order the pairs in 
$\LG$ as 
$\Set{(\add{\eta_j},\del{\gamma_j})\mid j=1,2,\ldots,\ell}$
s.t.\ $\eta_j$ is added before $\eta_{j+1}$ for each $j$.
Since pairs in $\LG$ are not comparable,
we have that $\gamma_j$ is deleted before $\gamma_{j+1}$
for each $j$ because it  would then be true that 
$(\add{\eta_{j+1}},\del{\gamma_{j+1}})\ccleq(\add{\eta_j},\del{\gamma_j})$
if $\gamma_j$ is deleted after $\gamma_{j+1}$
for a $j$.
See Figure~\ref{fig:in-contr-relink}.
Additionally, let each $(\add{\eta_j},\del{\gamma_j})$ have a representative
$(z_j,A_j,z'_j)$.

Now, alternatively re-link the additions and deletions in the unsettled pairs
as follows (see Figure~\ref{fig:in-contr-relink}):

\begin{enumerate}
\item Form a pair $(\add{\eta_1},\del{\tau'})\in\Pers_*(\Ucal')$ with a representative
$(z_1,A_1+A_\circ,z'_1+z_\circ)$,
where $\hat{\sG}\not\in A_1+A_\circ$.

\item Form a pair $(\add{\tau},\del{\gamma_\ell})\in\Pers_*(\Ucal')$ with a representative
$(z_\ell+z_*,A_\ell+A_*,z'_\ell)$,
where $\hat{\sG}\not\in A_\ell+A_*$.

\item For each $j=1,\ldots,\ell-1$:
Form a pair $(\add{\eta_{j+1}},\del{\gamma_{j}})\in\Pers_*(\Ucal')$ with a representative
$(z_j+z_{j+1},A_j+A_{j+1},z'_j+z'_{j+1})$,
where $\hat{\sG}\not\in A_j+A_{j+1}$.
\end{enumerate}
\begin{figure}[!tb]
  \centering
  \includegraphics[width=0.55\linewidth]{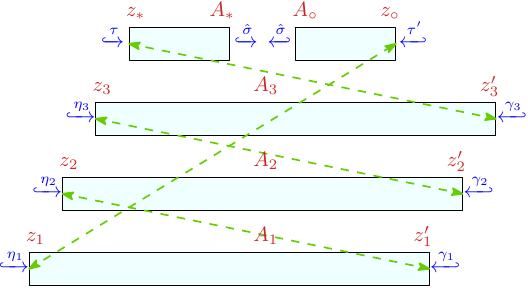}
  \caption{An example of $\ell=3$ where the dashed lines
  alternatively re-link the additions and deletions to form new 
  pairs in $\Pers_*(\Ucal')$.}
  \label{fig:in-contr-relink}
\end{figure}

We now conclude:
\begin{theorem}
Zigzag barcode for an inward contraction can
be updated in $O(\filtcnt^2)$ time.
\end{theorem}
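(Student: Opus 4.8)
The plan is to prove correctness and the $O(\filtcnt^2)$ bound together, organized around the reduction to up-down filtrations from Section~\ref{sec:fzz-up} and the two cases on the sign of $\add{\hat\sG}$ in $\Ucal^+$. First I would fix the up-down filtrations $\Ucal,\Ucal'$ of $\Fcal,\Fcal'$, and note that the preliminary straightening --- performing forward and backward switches on $\Ucal$ to bring $\add{\hat\sG},\del{\hat\sG}$ to the center --- yields $\Ucal^+$ together with its barcode and representatives: each switch is a transposition/injective/surjective diamond handled in $O(\filtcnt)$ time by the algorithm of~\cite{maria2014zigzag}, only $O(\filtcnt)$ of them are needed (since $\hat\sG$ has no coface in $\Ucal$ by the conversion of~\cite{dey2022fast}), so this stage costs $O(\filtcnt^2)$. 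It then remains to perform the inward contraction $\Ucal^+\to\Ucal'$ on up-down filtrations and to read off $\Pers_*(\Fcal')$ from $\Pers_*(\Ucal')$, the latter in $O(1)$ per interval via Theorem~\ref{thm:fzz-intv-map}. For correctness of the contraction step I would rely entirely on Proposition~\ref{prop:pair-correct}: it suffices to output a set of pairs of the additions/deletions of $\Ucal'$ in which each addition and each deletion occurs exactly once and each pair carries a representative in the sense of Definition~\ref{dfn:ud-rep}.

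For the case where $\add{\hat\sG}$ is positive in $\Ucal^+$, $\add{\hat\sG}$ is necessarily matched with $\del{\hat\sG}$ there, so by~\cite[Theorem 2.2]{maria2014zigzag} the barcode of $\Ucal'$ is obtained by discarding that pair; I would keep every other pair, leave each settled pair's representative unchanged, and for each unsettled pair --- which, since $\hat\sG$ sits at the center of $\Ucal^+$, must be closed-closed with $\hat\sG$ in its middle chain $A$ --- replace $(z,A,z')$ by $(z,A+\bar z,z')$, where $\bar z$ is the cycle through $\hat\sG$ in the representative of $(\add{\hat\sG},\del{\hat\sG})$. I would then check the three points of Definition~\ref{dfn:ud-rep}: $\partial(A+\bar z)=\partial A=z+z'$ since $\bar z$ is a cycle; $\hat\sG\notin A+\bar z$ since both $A$ and $\bar z$ contain it; and no creator changes, the middle chain of a closed-closed representative being unconstrained. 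Since every addition/deletion of $\Ucal'$ is covered exactly once, Proposition~\ref{prop:pair-correct} closes this case.

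For the case where $\add{\hat\sG}$ is negative (so $\del{\hat\sG}$ is positive, matched with some $\del{\tau'}$, while $\add{\hat\sG}$ is matched with some $\add\tau$), I would verify the two steps. In Step~I I would check that the three loops terminate (each iteration removes a pair from $\LG$) and leave $\LG$ with the stated invariants --- no two pairs comparable, and for every $(\add\eta,\del\gamma)\in\LG$, $\eta$ added before $\tau$ and $\gamma$ deleted after $\tau'$ --- and that each representative sum performed there again satisfies Definition~\ref{dfn:ud-rep}: in every such sum $\hat\sG$ cancels because both chains being summed contain it (so the pair becomes settled), and the boundary and creator conditions follow from the position hypothesis of the loop, as spelled out inline. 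In Step~II, ordering $\LG$ as $(\add{\eta_j},\del{\gamma_j})_{j=1}^{\ell}$ with $\eta_j$ before $\eta_{j+1}$ --- whence noncomparability forces $\gamma_j$ before $\gamma_{j+1}$ --- I would check that the three families of re-linked pairs, namely $(\add{\eta_1},\del{\tau'})$ with $(z_1,A_1+A_\circ,z_1'+z_\circ)$, $(\add\tau,\del{\gamma_\ell})$ with $(z_\ell+z_*,A_\ell+A_*,z_\ell')$, and $(\add{\eta_{j+1}},\del{\gamma_j})$ with $(z_j+z_{j+1},A_j+A_{j+1},z_j'+z_{j+1}')$ for $1\le j<\ell$, satisfy Definition~\ref{dfn:ud-rep}. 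The boundary identities are immediate sums; the decisive check is that $z_j+z_{j+1}$ is created by $\add{\eta_{j+1}}$, which holds because every cell of $z_j$ is added no later than $\eta_j<\eta_{j+1}$, so $\eta_{j+1}\notin z_j$ while $\eta_{j+1}\in z_{j+1}$ and all other cells precede it; symmetrically $z_j'+z_{j+1}'$ is created by $\del{\gamma_j}$, and for the two boundary pairs the Step~I invariants $\eta_\ell$-before-$\tau$ and $\gamma_1$-after-$\tau'$ are exactly what is needed; $\hat\sG$ cancels everywhere since every $A_j$, $A_*$, $A_\circ$ contains it. Since $\{\tau,\tau',\eta_1,\dots,\eta_\ell,\gamma_1,\dots,\gamma_\ell\}$ is re-paired bijectively and settled pairs are untouched, every addition/deletion of $\Ucal'$ appears exactly once and Proposition~\ref{prop:pair-correct} yields $\Pers_*(\Ucal')$.

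For the running time, at most $O(\filtcnt)$ representatives are maintained, each a tuple of chains of size $O(\filtcnt)$, so each representative sum costs $O(\filtcnt)$; Step~I performs $O(\filtcnt)$ such sums (one per pair deleted from $\LG$, with comparable pairs located after a single sort of $\LG$ by birth index), Step~II performs $O(\ell)=O(\filtcnt)$ of them, the preliminary straightening is $O(\filtcnt^2)$, and the translation to $\Pers_*(\Fcal')$ is $O(\filtcnt)$ in total, for $O(\filtcnt^2)$ overall. The main obstacle I anticipate is precisely the negative case: unlike the forward/backward or outward operations, inward contraction is not handled by ``reversing'' Maria--Oudot's inward-expansion matching, and the correct behavior is the alternative re-linking. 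The decisive structural input is noncomparability of the pairs remaining in $\LG$ after preprocessing --- it is exactly what lets consecutive cycles $z_j,z_{j+1}$ (resp.\ $z_j',z_{j+1}'$) be summed without disturbing which addition (resp.\ deletion) creates the result, so that Proposition~\ref{prop:pair-correct} can finish without any direct argument about how the interval decomposition mutates. Proving that the three preprocessing loops really drive $\LG$ to this normal form, and that every intermediate and re-linked representative stays valid, is the technical heart of the argument.
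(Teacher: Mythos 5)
Your proposal is correct and follows essentially the same route as the paper: the paper's own proof is the terse statement that correctness follows from the representatives constructed in Section~6 and that the $O(\filtcnt)$ chain summations give the quadratic bound, and what you have written is exactly a careful verification of those representatives against Definition~\ref{dfn:ud-rep} together with an appeal to Proposition~\ref{prop:pair-correct}. One small inaccuracy that does not affect the argument: the preliminary straightening to form $\Ucal^+$ uses only transposition diamonds (forward/backward switches), not injective/surjective diamonds, which enter only for the inward-expansion step.
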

\begin{proof}
The correctness of the updating follows from the correctness
of the representatives we set for $\Pers_*(\Ucal')$,
which can be verified from the operations presented above.
The costliest steps in the update are the chain summations, 
and there are $O(\filtcnt)$  of them each taking $O(m)$ time.
So the time complexity is $O(\filtcnt^2)$.
\end{proof}

\section{Conclusion} 
We have presented update algorithms 
for maintaining barcodes
over a changing zigzag filtration. Two
main questions come out of this research: 
(i) The representatives for the barcodes
are maintained only in the up-down filtrations.  
An open question that remains is whether this can help maintain the representatives for the original zigzag filtration in quadratic time over all operations. 
(ii) Are there other interesting applications of the update algorithms presented in this paper?  Their applications to computing vineyards for dynamic point clouds, 
multiparameter persistence, and dynamic network are mentioned in~\cite{dey2021updating,tymochko2020using}. 
We hope that there are other dynamic settings which can benefit from efficient zigzag updates.

\bibliographystyle{plainurl}
\bibliography{refs}

\begin{thebibliography}{10}

\bibitem{carlsson2010zigzag}
Gunnar Carlsson and Vin de~Silva.
\newblock Zigzag persistence.
\newblock {\em Foundations of Computational Mathematics}, 10(4):367--405, 2010.

\bibitem{carlsson2009zigzag-realvalue}
Gunnar Carlsson, Vin de~Silva, and Dmitriy Morozov.
\newblock Zigzag persistent homology and real-valued functions.
\newblock In {\em Proceedings of the Twenty-Fifth Annual Symposium on
  Computational Geometry}, pages 247--256, 2009.

\bibitem{cohen2009extending}
David Cohen-Steiner, Herbert Edelsbrunner, and John Harer.
\newblock Extending persistence using {P}oincar{\'e} and {L}efschetz duality.
\newblock {\em Foundations of Computational Mathematics}, 9(1):79--103, 2009.

\bibitem{cohen2006vines}
David Cohen-Steiner, Herbert Edelsbrunner, and Dmitriy Morozov.
\newblock Vines and vineyards by updating persistence in linear time.
\newblock In {\em Proceedings of the Twenty-Second Annual Symposium on
  Computational Geometry}, pages 119--126, 2006.

\bibitem{dey2021graph}
Tamal~K. Dey and Tao Hou.
\newblock Computing zigzag persistence on graphs in near-linear time.
\newblock In {\em 37th International Symposium on Computational Geometry, SoCG
  2021}, volume 189 of {\em LIPIcs}, pages 30:1--30:15. Schloss Dagstuhl -
  Leibniz-Zentrum f{\"{u}}r Informatik, 2021.

\bibitem{dey2022fast}
Tamal~K. Dey and Tao Hou.
\newblock Fast computation of zigzag persistence.
\newblock In {\em 30th Annual European Symposium on Algorithms, {ESA} 2022,
  September 5-9, 2022, Berlin/Potsdam, Germany}, volume 244 of {\em LIPIcs},
  pages 43:1--43:15. Schloss Dagstuhl - Leibniz-Zentrum f{\"{u}}r Informatik,
  2022.

\bibitem{dey2021updating}
Tamal~K. Dey and Tao Hou.
\newblock Updating barcodes and representatives for zigzag persistence.
\newblock {\em arXiv preprint arXiv:2112.02352}, 2022.

\bibitem{dey2023revisiting}
Tamal~K. Dey, Tao Hou, and Salman Parsa.
\newblock Revisiting graph persistence for updates and efficiency.
\newblock In {\em 18th Algorithms and Data Structures Symposium (WADS 2023),
  {\rm to appear}}, 2023.

\bibitem{edelsbrunner2000topological}
Herbert Edelsbrunner, David Letscher, and Afra Zomorodian.
\newblock Topological persistence and simplification.
\newblock In {\em Proceedings 41st Annual Symposium on Foundations of Computer
  Science}, pages 454--463. IEEE, 2000.

\bibitem{79945ceb-9a4c-3c90-a4ac-148c36996187}
Samuel Eilenberg and J.~A. Zilber.
\newblock Semi-simplicial complexes and singular homology.
\newblock {\em Annals of Mathematics}, 51(3):499--513, 1950.
\newblock URL: \url{http://www.jstor.org/stable/1969364}.

\bibitem{Gabriel72}
Peter Gabriel.
\newblock {Unzerlegbare Darstellungen I}.
\newblock {\em Manuscripta Mathematica}, 6(1):71--103, 1972.

\bibitem{hatcher2002algebraic}
Allen Hatcher.
\newblock {\em Algebraic Topology}.
\newblock Cambridge University Press, 2002.

\bibitem{kerber2019barcodes}
Michael Kerber and Hannah Schreiber.
\newblock Barcodes of towers and a streaming algorithm for persistent homology.
\newblock {\em Discrete \& computational geometry}, 61:852--879, 2019.

\bibitem{maria2014zigzag}
Cl{\'e}ment Maria and Steve~Y. Oudot.
\newblock Zigzag persistence via reflections and transpositions.
\newblock In {\em Proceedings of the Twenty-Sixth Annual ACM-SIAM Symposium on
  Discrete Algorithms}, pages 181--199. SIAM, 2014.

\bibitem{maria2016computing}
Cl{\'e}ment Maria and Steve~Y. Oudot.
\newblock Computing zigzag persistent cohomology.
\newblock {\em arXiv preprint arXiv:1608.06039}, 2016.

\bibitem{oudot2015zigzag}
Steve~Y. Oudot and Donald~R. Sheehy.
\newblock Zigzag zoology: Rips zigzags for homology inference.
\newblock {\em Foundations of Computational Mathematics}, 15(5):1151--1186,
  2015.

\bibitem{tymochko2020using}
Sarah Tymochko, Elizabeth Munch, and Firas~A. Khasawneh.
\newblock Using zigzag persistent homology to detect {H}opf bifurcations in
  dynamical systems.
\newblock {\em Algorithms}, 13(11):278, 2020.

\end{thebibliography}

\appendix

\section{Details of the update in Section~\ref{sec:ez-up}}
\label{sec:ez-up-details}

\subsection{Forward/backward switch}
We only present the details for the update for forward switch
as  everything is symmetric for backward switch.
Consider the forward switch in Equation~(\ref{eqn:fwd-sw}).
Let $\Ud$ and $\Ud'$ be the corresponding up-down filtrations for $\Fcal$ and $\Fcal'$ respectively.
By the conversion in Section~\ref{sec:fzz-up},
there is also a forward switch 
(on the first half containing only additions)
from $\Ud$ to $\Ud'$
shown below,
where $\hat{\sG},\hat{\tG}$ are $\DG$-cells corresponding to $\sG,\tG$ respectively:
\begin{equation}\label{eqn:fwd-switch-ud}
\begin{tikzpicture}[baseline=(current  bounding  box.center)]
\tikzstyle{every node}=[minimum width=24em]
\node (a) at (0,0) {$\Ud:L_0\incto\cdots\incto
L_{j-1}\inctosp{\hat{\sG}}
L_{j}\inctosp{\hat{\tG}}
L_{j+1}
\incto
\cdots
\incto
L_{\simpcnt}
\bakincto
\cdots
\bakincto L_{\filtcnt}$}; 
\node (b) at (0,-0.6){$\Ud':L_0\incto\cdots\incto
L_{j-1}\inctosp{\hat{\tG}}
L'_{j}\inctosp{\hat{\sG}}
L_{j+1}
\incto
\cdots
\incto
L_{\simpcnt}
\bakincto
\cdots
\bakincto L_{\filtcnt}$};
\path[->] (a.0) edge [bend left=90,looseness=1.5,arrows={-latex},dashed] (b.0);
\end{tikzpicture}
\end{equation}

We can utilize the algorithm for the transposition diamond
in~\cite{maria2014zigzag} (Algorithm 6) for computing the 
update in Equation~(\ref{eqn:fwd-switch-ud}),
where a transposition diamond 
is indeed a forward switch.
Notice that Algorithm 6 in~\cite{maria2014zigzag} works on
a special type of zigzag filtration where the first part contains
additions or deletions
but the second part contains only deletions.
Since an up-down filtration satisfies the condition for
the special zigzag filtration in~\cite{maria2014zigzag},
Algorithm 6 in~\cite{maria2014zigzag} can be utilized in our setting.
The time complexity for performing the update is $O(\filtcnt)$~\cite{maria2014zigzag}.

\subsection{Inward expansion}
We only provide details for performing the update from
$\Ucal'$ to $\Ucal^+$.
Since an inward expansion takes place from $\Ucal'$ to $\Ucal^+$,
we notice that the interval mapping for an inward expansion is described
by theorems for the {injective} and {surjective diamond}
in~\cite{maria2014zigzag} (Theorem 2.2, 2.3).
An injective (resp.\ surjective) diamond is indeed an inward expansion
where the induced map $\Hm_*(K_{i-1})\to\Hm_*(K_i)$ in Equation~(\ref{eqn:in-contrac-expan})
is injective (resp.\ surjective).
We can then utilize the algorithm
for the {injective} and {surjective diamond}
in~\cite{maria2014zigzag}
(Algorithm 4 and 5) for computing the 
update from
$\Ucal'$ to $\Ucal^+$.
Notice that 
Algorithm 4 and 5 in~\cite{maria2014zigzag} have the same assumption
on an input zigzag filtration 
as the algorithm for the transposition diamond~\cite{maria2014zigzag}
mentioned above,
i.e.,
the first part of the zigzag filtration contains
additions or deletions
but the second part contains only deletions.
The time complexity for performing the update is $O(\filtcnt^2)$~\cite{maria2014zigzag}.

\subsection{Computing representatives for an initial up-down filtration}
The algorithm presented in~\cite{maria2014zigzag} 
can be considered as a process which iteratively 
maintains the barcode and representatives for a partial zigzag filtration 
derived from the input zigzag filtration.
In each iteration of the algorithm~\cite{maria2014zigzag},
the partial zigzag filtration grows and
eventually becomes the complete 
input filtration.
The algorithm maintains the barcode and representatives
by taking \emph{summations} of the intervals and their
representatives
in each iteration.
Notice that the algorithm achieves  
the claimed complexity (which is $O(\filtcnt^3)$
for an up-down filtration of length $\filtcnt$)
because it does not maintain the full representative
for an interval 
(which could contain $O(m')$ cycles for a general zigzag filtration
of length $m'$).
Instead, the algorithm maintains a representative cycle
\emph{at a single index} so that summing the representative cycles
of two intervals takes $O(m)$ time (because each cycle
contains $O(m)$ cells).
Since we are working on up-down filtrations in this paper,
we can modify the algorithm in~\cite{maria2014zigzag}
so  that whenever we sum two intervals in the algorithm,
we also sum the full representatives
(as in Definition~\ref{dfn:ud-rep}) for the two intervals.
Since a full representative in an up-down filtration
contains at most three cycles/chains
(see Definition~\ref{dfn:ud-rep}), 
the time complexity for computing 
representatives for an initial up-down filtration
is still $O(m^3)$.

\cancel{
\section{Potential applications of zigzag update algorithms}
\label{sec:zzup-app}

{\red delet this now?}

\paragraph{Dynamic point cloud.}
Consider a set of points $P$ moving with respect to time~\cite{edelsbrunner2012medusa,kim2020spatiotemporal}.
For each point pair in $P$, we can draw its \emph{distance-time curve}
revealing the variation of distance between the points w.r.t.\ time.
For example, Figure~\ref{fig:grid_curve} draws the curves for a simple
$P$ with three points,
where $e_1,e_2$ and $e_3$ denote edges formed by the three point pairs.
Consider the Vietoris-Rips complex of $P$ with $\dG$ as the distance threshold.
Since distances of the point pairs may become greater or less than
$\dG$ at different time,
edges formed by these pairs
are added to or deleted from the Rips complex accordingly.
This forms a zigzag filtration of Rips complexes, which we denote as $\Rcal^\dG$.
Letting $\dG$ vary from $0$ to $\infty$, and taking the persistence diagram (PD)
of $\Rcal^\dG$, we obtain a vineyard~\cite{cohen2006vines} as a descriptor
for the dynamic point cloud.
We note that $\Rcal^\dG$ changes only
at the \emph{critical} points of the distance-time curves,
which are local minima/maxima and intersections
(as illustrated by the dots in Figure~\ref{fig:grid_curve}).
To compute the vineyard, one only needs to compute the PD
of each $\Rcal^\dG$ where $\dG$ is in between
distance values of two critical points.
For example, $\Set{\dG_i}_i$ are the distance values for the critical points
in Figure~\ref{fig:grid_curve},
and $\Set{d_i}_i$ are the values in between.
Figure~\ref{fig:grid_module} lists the zigzag filtration $\Rcal^{d_i}$
for each $d_i$, where each horizontal arrow is either an equality, addition of an edge,
or deletion of an edge.
Each transition from $\Rcal^{d_i}$ to $\Rcal^{d_{i+1}}$ can be realized by
a sequence of atomic operations described in this paper,
which provides natural associations for the PDs~\cite{cohen2006vines}.
For example, starting from the top and going down, 
one needs to perform forward/backward/outward switches,
inward contractions, and outward expansions
(defined in Section~\ref{sec:update-oper}).
One could also start from the bottom and go up,
which requires the reverse operations.
In Section~\ref{sec:dpc},
we provide details on how the zigzag filtrations are built
for a dynamic point cloud and how the atomic operations can 
be used to realize the transitions.

\begin{figure}[p]
  \centering
  \captionsetup{justification=centering}
  \captionsetup[subfigure]{justification=centering}

  \begin{subfigure}[t]{\textwidth}
  \centering
  \includegraphics[width=0.65\linewidth]{fig/grid_curve}
  \caption{Distance-time curves of the three point pairs.}
   \label{fig:grid_curve}
  \end{subfigure}
  
  \vspace{2em}
   
  \captionsetup[subfigure]{justification=justified}

  \begin{subfigure}[t]{\textwidth}
  \centering
  \includegraphics[width=0.65\linewidth]{fig/grid_module}
  \caption{Zigzag filtration $\Rcal^{d_i}$ for each $d_i$ is listed horizontally, while vertically 
  each Rips complex is included into the one on the above.}
   \label{fig:grid_module}
  \end{subfigure}

  \caption{An example of a dynamic point cloud with three points.}
  \label{fig:dpc-grid}
\end{figure}

\paragraph{Levelset zigzag for time-varying function.}
It is known that the level sets
of a function give rise to a special type of zigzag filtrations
called levelset zigzag filtrations~\cite{carlsson2009zigzag-realvalue}, 
which are known to capture more information than the non-zigzag sublevel-set filtrations. Thus, even for a time-varying function, computing
the vineyard for a levelset zigzag filtration may capture more information than
the one by non-zigzag filtrations. 

\paragraph{Other potential applications.}
We also hope that our algorithms
for maintaining the representatives 
may be of independent interest.
For example, an efficient maintenance of these representatives provided
an efficient algorithm for computing zigzag persistence on graphs~\cite{dey2021graph} 
and also explained why a persistence algorithm proposed by Agarwal et al.~\cite{agarwal2006extreme} 
for elevation
functions works. Hilbert (dimension) function or rank 
function are among
some of the basic features for a
multiparameter persistence module.
One may use zigzag updates to compute these functions more efficiently
as Figure~\ref{fig:use2d}a suggests.
Thinking forward, we see a potential use of our algorithms
for maintaining representatives to compute generalized rank invariants~\cite{KM20,Patel}
for 2-parameter persistence modules. This may help
compute different homological structures as 
advocated recently~\cite{DKM21}; see Figure~\ref{fig:use2d}b.

\begin{figure}[htbp]
    \centering
    \begin{tabular}{cc}
    \includegraphics[height=4cm]{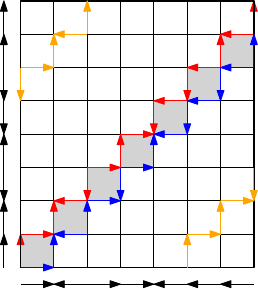}&
    \includegraphics[height=4cm]{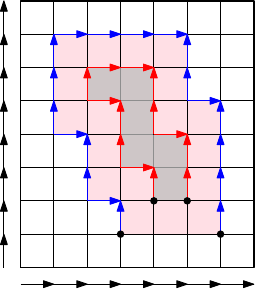}\\
    (a) & (b)
    \end{tabular}
        \caption{(a) Computing dimensions or rank function on a persistence module 
        with support over a 2D zigzag grid (poset) can be more efficiently computed by considering zigzag persistence on an initial zigzag filtration (indicated by red path) and then updating it with switches, which gives
        other zigzag paths (indicated by blue and golden paths). Assuming $t$ points
        in the grid, this will take $O(t^3)$ time with the updates instead of $O(t^{\omega+2})$
        with bruteforce zigzag persistence computation on every path. (b) Recently, it is
        shown that the generalized rank of an interval in a 2-parameter module can be derived
        from the zigzag persistence on the boundary as shown with red and blue paths for the
        grey and pink intervals respectively~\cite{DKM21}. We can leverage our update algorithms to compute
        the zigzag persistence over these two paths and multiple boundaries in general.}
    \label{fig:use2d}
\end{figure}

\subsection{Details on dynamic point clouds}
\label{sec:dpc}

We first define the following:

\begin{definition}
Throughout the section,
let $\dpc=(\pset,\posf_0,\posf_1,\ldots,\posf_\dpccnt)$
denote a dynamic point cloud in which: (i) $\pset$ is a set of points;
(ii) each map $\posf_i:\pset\to\Real^\Dim$ specifies the positions
of points in $\pset$ at \emph{time} $i$.
\end{definition}

Note that while 1-dimensional persistence~\cite{edelsbrunner2000topological}
with Rips filtration
serves as an effective descriptor for a fixed point cloud, 
it cannot naturally characterize a dynamic point cloud as defined above~\cite{kim2020spatiotemporal}.
In view of this,
we build vines and vineyards~\cite{cohen2006vines}
as descriptors for $\dpc$ using zigzag persistence.
We first let the time in $\dpc$ range {continuously}
in $[0,\dpccnt]$, i.e., the position of each point in $\pset$
during time $[0,\dpccnt]$ is linearly interpolated
based on 
the discrete samples given in $\dpc$.
For each $t\in[0,\dpccnt]$, let $\pset_t$ denote the point cloud
which is the point set $\pset$ with positions at time $t$.
Also, for a $\dG\geq 0$, let $R^\dG_t$ denote the Rips complex of $\pset_t$
with distance $\dG$.

Now fix a $\dG\geq 0$, and consider the continuous sequence 
$\Rcal^\dG:=\Set{R^\dG_t}_{t\in[0,\dpccnt]}$.
We claim that $\Rcal^\dG$ is encoded by a zigzag filtration,
and hence admits a barcode (persistence diagram)
as descriptor.
To see this, we note that each $R^\dG_t$ in $\Rcal^\dG$ is completely determined
by the vertex pairs in $\pset$ with distances no greater than $\dG$ at time $t$. 
Let $\pi$ be a vertex pair whose distance varies with time as illustrated
by the red curve in Figure~\ref{fig:dis_plot},
where the horizontal axis denotes time and the vertical axis denotes distance.
For the $\dG$ in Figure~\ref{fig:dis_plot},
the edge formed by $\pi$ is in $R^\dG_t$ when $t$ falls in the intervals
$[0,t_1]$, $[t_2,t_3]$, and $[t_4,t_5]$.
Also, in Figure~\ref{fig:pair_intervals}, for three vertex pairs $\pi_1,\pi_2,\pi_3$,
we illustrate respectively the time intervals in which 
their distances are no greater than $\dG$.
With the time varying, 
the edges formed by the vertex pairs are added to
or deleted from the Rips complex. 
As illustrated in Figure~\ref{fig:pair_intervals},
this naturally defines a zigzag filtration which we denote as $\Fcal^\dG$.
For example, $R_{t_2}^\dG$ in Figure~\ref{fig:pair_intervals}
is defined by edges formed by $\pi_2$ and $\pi_3$, and
$R_{t_5}^\dG$ 
is defined by edges formed by $\pi_1$ and $\pi_3$.

\begin{figure}[t]
  \centering
  \captionsetup[subfigure]{justification=centering}

  \begin{subfigure}[t]{0.4\textwidth}
  \centering
  \includegraphics[width=\linewidth]{fig/dis_plot}
  \caption{}
  \label{fig:dis_plot}
  \end{subfigure}
  \hspace{1.5em}
  \begin{subfigure}[t]{0.5\textwidth}
  \centering
  \includegraphics[width=\linewidth]{fig/pair_intervals}
  \caption{}
  \label{fig:pair_intervals}
  \end{subfigure}

  \caption{(a) Distance-time curves for two vertex pairs.
  (b) Time intervals for three vertex pairs $\pi_1,\pi_2,\pi_3$ in which distance is $\leq\dG$
  and the corresponding zigzag filtration $\Fcal^\dG$.}
\end{figure}

We then consider the one-parameter family of 
persistence diagrams
$\Set{\Bcal^\dG}_{\dG\in[0,\infty]}$,
with $\Bcal^\dG$ 
being the persistence diagram of 
$\Rcal^\dG$,
which forms a vineyard~\cite{cohen2006vines}. 
Treating each $\Bcal^\dG$ as a multi-set
of points in $\Real^2$,
the vineyard 
$\Set{\Bcal^\dG}_{\dG\in[0,\infty]}$
contains vines tracking the movement of points in persistence diagrams 
w.r.t.\ $\dG$.
For computing the vineyard $\Set{\Bcal^\dG}_{\dG\in[0,\infty]}$, 
we utilize
the update operations 
and algorithms presented in this paper.
As in~\cite{cohen2006vines},
our atomic update operations help associate points 
for persistence diagrams in $\Set{\Bcal^\dG}_{\dG\in[0,\infty]}$
without ambiguity, 
which is otherwise unavoidable if attempting to associate directly.
Let $\bar{\dG}$ be the maximum distance of vertex pairs at all time
in $\dpc$. We start with $\Rcal^{\bar{\dG}}$.
Since $R^{\bar{\dG}}_t$ equals a contractible (high-dimensional) simplex
at any $t$,
$\Bcal^{\bar{\dG}}$ contains only a 0-th interval $[0,\dpccnt]$
whose representative sequence
is straightforward\footnote{In practice,
  one may only consider simplices up to a dimension to save time;
  $\Bcal^{\bar{\dG}}$ and the representatives 
  in this case
  can then be computed from a homology basis for the complex at a time $t$.}.
Now consider the distance-time curves of \emph{all} vertex pairs of $\pset$
(e.g., Figure~\ref{fig:dis_plot} illustrates curves of two pairs),
which indeed defines a \emph{dynamic metric space}~\cite{kim2020spatiotemporal}.
When decreasing the distance $\dG$, 
$\Fcal^\dG$ changes only
at the following types of points
in the plot of all distance-time curves (see Figure~\ref{fig:dpc_events}):
\begin{figure}[t]
  \centering
  \captionsetup[subfigure]{justification=centering}
  \includegraphics[width=\linewidth]{fig/dpc_events}
  \caption{The events that change the zigzag filtration of $\Rcal^\dG$ as $\dG$ varies. 
  Each (partial) distance-time curve corresponds to a vertex pair, and for some events,
  edges formed by the vertex pairs are also denoted.}
  \label{fig:dpc_events}
\end{figure}
\begin{description}
  \item[I.\ Increasing crossing]:
  In Figure~\ref{fig:dpc_events},
  $e_1$ is deleted first at $t_3$ and then $e_2$ is deleted at $t_4$ in $\Rcal^{\dG_1}$.
  In $\Rcal^{\dG_2}$, the deletions of $e_1,e_2$ are switched.
  The switch of edge deletions 
  in the zigzag filtrations
  is realized by a sequence
  of simplex-wise \emph{backward switches}.

  \item[II.\ Decreasing crossing]:
  This is symmetric to the increasing crossing
  where additions of two edges are switched. It is realized by a sequence
  of simplex-wise \emph{forward switches}.

  \item[III.\ Opposite crossing]:
  In Figure~\ref{fig:dpc_events},
  $e_1$ is added first at $t_2$ and then $e_2$ is deleted at $t_3$
  in $\Rcal^{\dG_1}$.
  In $\Rcal^{\dG_2}$, the addition of $e_1$ and the deletion of $e_2$ are switched.
  The simplex-wise version of $\Fcal^{\dG_1}$ contains the following part
  \[R^{\dG_1}_{t_1}\inctosp{\sG_1}\cdots\inctosp{\sG_q} R^{\dG_1}_{t_3}
  \bakinctosp{\tG_1}\cdots\bakinctosp{\tG_r} R^{\dG_1}_{t_5},\]
  where $t_1,t_3,t_5$ are as defined in Figure~\ref{fig:dpc_events}.
  To obtain $\Fcal^{\dG_2}$, we do the following for each $i=1,\ldots,r$:
  \begin{itemize}
      \item 
  If $\tG_i$ is not equal to any of $\sG_1,\ldots,\sG_q$, 
  then use \emph{outward switches} to make $\bakinctosp{\tG_i}$ appear
  immediately before the additions of $\sG_1,\ldots,\sG_q$.
  If $\tG_i$ is equal to a $\sG_j$, first use outward switches 
  to make $\bakinctosp{\tG_i}$ appear
  immediately after $\inctosp{\sG_j}$. 
  Then, apply the \emph{inward contraction} on $\inctosp{\sG_j}\bakinctosp{\tG_i}$.
  Note that $\sG_j$ ($=\tG_i$) which contains both $e_1,e_2$
  does not exist in any complex $R^{\dG_2}_t$
  for $t$ a time shown in Figure~\ref{fig:dpc_events}
  because $e_1,e_2$ do not both exist in these complexes.
  \end{itemize}

  \item[IV.\ Local minimum]:
  In Figure~\ref{fig:dpc_events},
  an edge $e$ corresponding to the black curve 
  is added at $t_1$ and then deleted at $t_2$ in $\Rcal^{\dG_1}$.
  In $\Rcal^{\dG_2}$, the addition and deletion of $e$ disappear.
  Correspondingly,
  simplices containing $e$ are added and then deleted in $\Fcal^{\dG_1}$, 
  but in $\Fcal^{\dG_2}$, the addition and deletion of the above mentioned simplices do not exist.
  Hence, we need to perform \emph{inward contractions}.
  Note that before this, we may need to perform forward or backward switches
  to properly order the additions and deletions.
  (For example, suppose that $\sG$ is the last simplex added 
  due to the addition of $e$.
  However, if $\sG$ is not the first simplex deleted 
  due to the deletion of $e$,
  we need to perform backward switches to make 
  this true
  so that we can perform an inward contraction on $\sG$.)

  \item[V.\ Local maximum]:
  In Figure~\ref{fig:dpc_events},
  an edge $e$ corresponding to the black curve 
  exists in any complex $R^{\dG_1}_t$ for $t$ a time shown in the figure.
  However, in $\Rcal^{\dG_2}$,
  $e$ is deleted at $t_1$ and then added at $t_2$.
  Accordingly, we need to perform \emph{outward expansions}
  on simplices which are deleted and then added.
\end{description}

All the above five types of points appear in Figure~\ref{fig:dis_plot} 
with the numbering of types labelled.}

\end{document}